\def\doubleunderline#1{\underline{\underline{#1}}}
\renewcommand{\ALG@beginalgorithmic}{\footnotesize}
\newcommand{\vast}{\bBigg@{4}}
\newcommand{\Vast}{\bBigg@{7}}
\newcommand\reallywidehat[1]{%
\savestack{\tmpbox}{\stretchto{%
  \scaleto{%
    \scalerel*[\widthof{\ensuremath{#1}}]{\kern-.6pt\bigwedge\kern-.6pt}%
    {\rule[-\textheight/2]{1ex}{\textheight}}%WIDTH-LIMITED BIG WEDGE
  }{\textheight}% 
}{0.5ex}}%
\stackon[1pt]{#1}{\tmpbox}%
}
\newcommand*\xbar[1]{%
   \hbox{%
     \vbox{%
       \hrule height 0.5pt % The actual bar
       \kern0.5ex%         % Distance between bar and symbol
       \hbox{%
         \kern-0.1em%      % Shortening on the left side
         \ensuremath{#1}%
         \kern-0.1em%      % Shortening on the right side
       }%
     }%
   }%
} 
\date{}
\journal{Journal of Computational Physics}
\begin{document}

\begin{frontmatter}

%% Title, authors and addresses
%% Should be lower-case except for first word first letter and proper nouns and abbreviations
%\title{A conservative diffuse-interface method for the simulation of compressible two-phase flows with turbulence and acoustics}
%\title{A conservative diffuse-interface method for the simulation of compressible two-phase flows}
\title{A kinetic energy--and entropy-preserving scheme for\\ compressible two-phase flows}
%% SHORT TITLE: 
%% Give a short (running) title that will appear in the heading of the odd pages
%%\shorttitle{Conservative diffuse-interface method for compressible flows}

%% use the tnoteref command within \title for footnotes;
%% use the tnotetext command for the associated footnote;
%% use the fnref command within \author or \address for footnotes;
%% use the fntext command for the associated footnote;
%% use the corref command within \author for corresponding author footnotes;
%% use the cortext command for the associated footnote;
%% use the ead command for the email address,
%% and the form \ead[url] for the home page:
%%
%% \title{Title\tnoteref{label1}}
%% \tnotetext[label1]{}
%% \author{Name\corref{cor1}\fnref{label2}}
%% \ead{email address}
%% \ead[url]{home page}
%% \fntext[label2]{}
%% \cortext[cor1]{}
%% \address{Address\fnref{label3}}
%% \fntext[label3]{}

%% use optional labels to link authors explicitly to addresses:
%% \author[label1,label2]{<author name>}
%% \address[label1]{<address>}
%% \address[label2]{<address>}
%% AUTHORS: Should be separated by comma, the last author should be separated by \and
%% Also, no full names, just the initials

\author{Suhas S. Jain\corref{cor1}}
\ead{sjsuresh@stanford.edu}
\cortext[cor1]{Corresponding author}
%\ead[http://web.stanford.edu/~sjsuresh]{home page}
\author{Parviz Moin}
\ead{moin@stanford.edu}
\address{Center for Turbulence Research, Stanford University, California, USA 94305}

\begin{abstract}
%% Text of abstract
%In this article, we propose kinetic energy and entropy-preserving (KEEP) schemes for the simulation of compressible two-phase flows. We use the conservative diffuse-interface method recently proposed in \textbf{XXX}\citep{} since it conserves the mass, momentum, and total energy of the system.   

Accurate numerical modeling of compressible flows, particularly in the turbulent regime, requires a method that is non-dissipative and stable at high Reynolds ($Re$) numbers. For a compressible flow, it is known that discrete conservation of kinetic energy is not a sufficient condition for numerical stability, unlike in incompressible flows. 

In this study, we adopt the recently developed conservative diffuse-interface method (Jain, Mani \& Moin, \textit{J. Comput. Phys.}, 2020) along with the five-equation model for the simulation of compressible two-phase flows. This method discretely conserves the mass of each phase, momentum, and total energy of the system and consistently reduces to the single-phase Navier-Stokes system when the properties of the two phases are identical. We here propose discrete consistency conditions between the numerical fluxes, such that any set of numerical fluxes that satisfy these conditions would not spuriously contribute to the kinetic energy and entropy of the system. We also present a set of numerical fluxes\textemdash which satisfies these consistency conditions\textemdash that results in an exact conservation of kinetic energy and approximate conservation of entropy in the absence of pressure work, viscosity, thermal diffusion effects, and time-discretization errors. Since the model consistently reduces to the single-phase Navier-Stokes system when the properties of the two phases are identical, the proposed consistency conditions and numerical fluxes are also applicable for a broader class of single-phase flows.

To this end, we present coarse-grid numerical simulations of compressible single-phase and two-phase turbulent flows at infinite $Re$, to illustrate the stability of the proposed method in canonical test cases, such as an isotropic turbulence and Taylor-Green vortex flows. A higher-resolution simulation of a droplet-laden compressible decaying isotropic turbulence is also presented, and the effect of the presence of droplets on the flow is analyzed.

\end{abstract}

\begin{keyword}
compressible flows \sep turbulent flows \sep two-phase flows \sep phase-field method \sep split-flux forms \sep non-dissipative schemes
%% keywords here, in the form: keyword \sep keyword

%% MSC codes here, in the form: \MSC code \sep code
%% or \MSC[2008] code \sep code (2000 is the default)

\end{keyword}

\end{frontmatter}

%%
%% Start line numbering here if you want
%%
%\linenumbers

%% main text

\section{Introduction}
%\todo{rewrite abstract at the end}

%%Motivation for KEEP schemes
%\todo{include applications for single-phase flows}

%\todo{change to index notation everywhere}

Compressible turbulent flows are encountered in a wide range of aerodynamic flows and other high-speed flow applications.
%that include, but are not limited to, atomization, combustion, and Rayleigh-Taylor instability and mixing. 
Particularly for two-phase flows, the applications include atomization \citep{sallam2006primary}, droplet combustion \citep{law1982recent}, bubble cavitation \citep{plesset1977bubble}, Rayleigh-Taylor instability \citep{youngs1984numerical}, and Richtmyer-Meshkov instability \citep{brouillette2002richtmyer} flows. 
Accurate numerical modeling of such flows requires a method that is non-dissipative and is stable at high Reynolds numbers ($Re$). For simulations of turbulent flows, it is known that the correct representation of evolution of kinetic energy is important for achieving accurate numerical simulations. For incompressible simulations, \citet{morinishi1998fully} showed that the numerical instabilities can be suppressed in the limit of zero viscosity without adding numerical dissipation if kinetic energy is discretely conserved. For a compressible flow, however, it is known that discrete conservation of kinetic energy is not a sufficient condition for numerical stability, unlike in incompressible flows, and additional constraints on the entropy of the system are needed to suppress the non-linear instabilities \citep{honein2005}.

%minimization of aliasing errors
The first class of methods that improve numerical stability consists of those that try to minimize aliasing errors. \citet{feiereisen1983numerical} proposed using skew-symmetric splitting for convective terms to ensure that the method does not spuriously contribute to the kinetic energy of the system. \citet{blaisdell1996effect}, \citet{kravchenko1997effect}, and \citet{ducros2000high} showed that the use of skew-symmetric splitting for the convective terms reduces the aliasing errors compared to the convective form of the discretization. \citet{lee1993large} showed that the use of the internal energy equation instead of the total energy equation results in better numerical stability, and \citet{nagarajan2003robust} used a staggered scheme to show better numerical stability compared to a collocated scheme. The skew-symmetric formulations have since been used widely to achieve improved numerical stability for the simulations of compressible turbulent flows \citep[see,][]{subbareddy2009fully,kok2009high,morinishi2010skew, pirozzoli2010generalized,pirozzoli2011stabilized}.
However, these methods are known to become unstable at {coarse resolutions for} high $Re$, even at low Mach numbers and in the absence of shocks, without a subgrid-scale model. {For high enough resolutions, in the limit of direct numerical simulation, all these methods can yield stable numerical simulations.}

Another class of methods consists of those that try to suppress numerical instabilities by satisfying an entropy condition. Schemes that satisfy entropy condition are known to exhibit stable density fluctuations. \citet{honein2004higher} proposed a modification to the total energy equation that mimics solving an entropy transport equation with skew-symmetric splitting, and showed that the modified total energy equation results in higher conservation of entropy and excellent numerical stability at high $Re$. \citet{tadmor1987numerical,tadmor2003entropy} proposed entropy-conservative numerical fluxes that can be combined with additional dissipation to obtain a method that satisfies a mathematical entropy condition for a general system of conservation laws. Following this idea, and choosing the physical thermodynamic entropy as the mathematical entropy function, \citet{chandrashekar2013kinetic} derived numerical fluxes that conserve kinetic energy and entropy for the Euler and Navier-Stokes equations. More recently, \citet{coppola2019numerically, kuya2018kinetic} proposed schemes that have good kinetic energy and entropy conservation properties. Hereafter, we refer to these class of schemes as kinetic energy--and entropy-preserving (KEEP) schemes. 

%Add additional references on KEP and KEEP schemes here for the journal.

It is worth mentioning that in the context of incompressible two-phase flows, various methods have been proposed that conserve kinetic energy of the system \citep{fuster2013energy,valle2020energy,mirjalili2020consistent,jain2022accurate} and therefore are stable for high-$Re$ flows. For compressible two-phase flows, on the other hand, it was found that the use of existing KEEP schemes in the literature, that were developed for single-phase flows, does not result in stable numerical simulations \citep{jain2020keep}. In addition to the conservation of entropy and kinetic energy, the non-linear stability properties, such as boundedness and total-variation diminishing (TVD) properties of the volume-fraction field, are important to maintain realizable values of the other field variables, such as density of the two-phase mixture. We also need thermodynamic consistency at the interface to avoid spurious numerical solutions. It is even more crucial to have all these non-linear stability properties, if one is using a non-dissipative scheme, to obtain stable numerical simulations. Therefore, unlike single-phase compressible flows, entropy stability alone does not imply numerical stability for two-phase compressible flows. 

The focus of the present work is, therefore, on the development of a robust, numerically stable scheme\textemdash a KEEP scheme\textemdash that works for both single-phase and two-phase compressible flows. We adopt the recently developed conservative diffuse-interface method \citep{jain2020conservative} along with the five-equation model \citep{allaire2002five,kapila2001two} for modeling the system of compressible two-phase flows. Some favorable properties of this method are: (a) it discretely conserves the mass of each phase, momentum, and total energy of the system, (b) it satisfies the boundedness and TVD properties for the volume-fraction field, (c) it has consistency corrections (consistent regularization) in the momentum and energy equations that maintain mechanical equilibrium at the interface, and (d) it consistently reduces to the single-phase Navier-Stokes system when the properties of the two phases are identical. The proposed framework of KEEP scheme presented in this work, however, can be adopted for any of the other interface-capturing methods, such as a volume-of-fluid or a level-set method. %(see \citet{Mirjalili2017} for a recent review and comparison of various interface-capturing methods for two-phase flows). 
%\todo{remove this paragraph}
%In the context of incompressible two-phase flows, various methods have been proposed that conserve kinetic energy of the system \citep{jamet2002theory,fuster2013energy,valle2020energy,mirjalili2020consistent} and therefore are stable for high-$Re$ flows. However, for compressible two-phase flows, state-of-the-art methods use hybrid essentially-non-oscillatory-type scheme and central scheme to resolve the discontinuities and to model the smooth regions of the flow, respectively.   
%\todo{remove this following thing}
%\todo{this statement is too strong!}
We, here, propose a general framework for the development of KEEP schemes for single-phase and two-phase flows. We propose discrete consistency conditions for the numerical fluxes of volume fraction, mass, momentum, kinetic energy, and internal energy, such that an exact conservation of kinetic energy and an approximate conservation of entropy are achieved in the absence of pressure work, viscosity, thermal diffusion effects, and time-differencing errors. To this end, we present numerical simulations of single-phase and two-phase turbulent flows at finite and infinite $Re$ to illustrate the stability and accuracy of the method in isotropic turbulence and Taylor-Green vortex flows. 

%\todo{update this at the end}
The rest of this paper is organized as follows. The conservative diffuse-interface method used in this work with a five-equation model is described in Section \ref{sec:model} along with the derivation of analytical transport equations for kinetic energy and entropy. The discrete consistency conditions between the numerical fluxes are presented in Section \ref{sec:keep_scheme} along with the proposed KEEP scheme. %The artificial-bulk-viscosity based shock-capturing method is described in Section \ref{sec:shock_capturing}. 
The under-resolved simulations of single-phase and two-phase Taylor-Green vortex and isotropic turbulence at infinite $Re$ are presented in Section \ref{sec:under_resolved};
and the resolved simulations of droplet-laden isotropic turbulence at finite $Re$ are presented in Section \ref{sec:resolved}, 
followed by a summary and concluding remarks in Section \ref{sec:conclusion}.

\section{Conservative diffuse-interface method \label{sec:model}}

The conservative diffuse-interface method for compressible two-phase flows \citep{jain2020conservative} with the underlying five-equation model is given in Eqs. \eqref{eq:volumef}-\eqref{eq:energyf} along with the mixture equation of state (EOS) in Eq. \eqref{eq:pressuref}. 
\begin{equation}
%\frac{\partial \phi_1}{\partial t} + \vec{\nabla}\cdot(\vec{u}\phi_1) = \phi_1(\vec{\nabla}\cdot\vec{u})+\zeta_1(\vec{\nabla}\cdot\vec{u})+\vec{\nabla}\cdot\vec{a}_1,
\frac{\partial \phi_1}{\partial t} + \frac{\partial u_j \phi_1}{\partial x_j} = (\phi_1 + \zeta_1)\frac{\partial u_j}{\partial x_j} + \frac{\partial a_{1j}}{\partial x_j}, 
\label{eq:volumef}    
\end{equation}
\begin{equation}
%\frac{\partial m_l}{\partial t} + \vec{\nabla}\cdot(\vec{u} m_l) = \vec{\nabla}\cdot \vec{R}_l, \hspace{0.5cm} l=1,2,
\frac{\partial m_l}{\partial t} + \frac{\partial u_j m_l}{\partial x_j} = \frac{\partial R_{lj}}{\partial x_j},  \hspace{0.5cm} l=1,2,
\label{eq:massf}
\end{equation}
\begin{equation}
%\frac{\partial \rho\vec{u}}{\partial t} + \vec{\nabla}\cdot(\rho \vec{u} \otimes \vec{u} + p \mathds{1}) = \vec{\nabla}\cdot\doubleunderline\tau + \vec{\nabla}\cdot(\vec{f}\otimes\vec{u}) + \sigma \kappa \vec{\nabla} \phi_1 + \rho \vec{g},
\frac{\partial \rho u_i}{\partial t} + \frac{\partial \rho u_i u_j}{\partial x_j} + \frac{\partial p}{\partial x_i} = \frac{\partial u_i f_j}{\partial x_j} + \frac{\partial \tau_{ij}}{\partial x_j} + \sigma \kappa \frac{\partial \phi_1}{\partial x_i} + \rho g_i,
\label{eq:momf}
\end{equation}
\begin{equation}
%\frac{\partial E}{\partial t} + \vec{\nabla}\cdot\{(E + p)\vec{u}\} = \vec{\nabla}\cdot(\doubleunderline\tau\cdot\vec{u}) + \vec{\nabla}\cdot(\vec{f}k) + \sum_{l=1}^2 \vec{\nabla}\cdot{(\rho_l h_l \vec{a}_l )} + \sigma \kappa \vec{u}\cdot\vec{\nabla} \phi_1 + \rho \vec{g}\cdot \vec{u},
\frac{\partial E}{\partial t} + \frac{\partial \left(E + p \right) u_j}{\partial x_j} = \frac{\partial \tau_{ij} u_i}{\partial x_j} + \frac{\partial k f_j}{\partial x_j} + \sum_{l=1}^2 \frac{\partial \rho_l h_l a_{lj}}{\partial x_j} + \sigma \kappa u_i\frac{\partial \phi_1}{\partial x_i} + \rho u_i g_i,
\label{eq:energyf}
\end{equation}
\begin{equation}
p = \frac{\rho e + \left(\sum_{l=1}^2\frac{\phi_l \beta_l}{\alpha_l} \right)} {\left(\sum_{l=1}^2\frac{\phi_l}{\alpha_l} \right)}.
\label{eq:pressuref}
\end{equation}
This form of the model has a volume-fraction advection equation [Eq. \eqref{eq:volumef}], a mass balance equation for each of the phases $l$ [Eq. \eqref{eq:massf}], a momentum equation [Eq. \eqref{eq:momf}], and a total energy equation [Eq. \eqref{eq:energyf}]. If a general EOS for phase $l$ is written as $p_l=\alpha_l\rho_le_l + \beta_l$, where $\alpha_l$ and $\beta_l$ are parameters in the EOS that are generally determined using experimental measurements, then by invoking the isobaric closure law for pressure in the mixture region 
\begin{equation}
  p = p_1 = p_2,
  \label{eq:isobar}
\end{equation}
the generalized mixture EOS can be written as in Eq. (\ref{eq:pressuref}). {Here, a linear stiffened-gas EOS has been assumed for simplicity. However, following the same procedure, a generalized mixture EOS can be written for any complex cubic EOS.}

We specifically use the five-equation model by \citet{kapila2001two} over the \citet{allaire2002five} model, which includes an additional dilatational term in the volume-fraction equation, due to its improved accuracy for flows that involve high compression and expansion, as illustrated by \citet{tiwari2013diffuse} and \citet{schmidmayer2020assessment}. \citet{murrone2005five} also showed that the five-equation model by \citet{kapila2001two} admits conservative transport equations for entropy of each phase. For weakly compressible flows, we found that both five-equation models yield identical results.

%In the current work, we propose a new set of interface-regularization (diffusion\textendash sharpening) terms that are in conservative form and show that the numerical solution is stable for long-time integrations. We propose a model of the form given in Eqs. (\ref{eq:mod_volumefraction})\textendash(\ref{eq:mod_energy}) along with the viscous terms, where the highlighted terms are the newly introduced interface-regularization terms. Equation (\ref{eq:mod_volumefraction}) represents the modified volume fraction advection equation, Eq. (\ref{eq:mod_mass}) represents the modified mass balance equation for phase $l$, Eq. (\ref{eq:mod_mom}) represents the modified momentum equation, and Eq. (\ref{eq:mod_energy}) represents the modified total energy equation. 
%If a general equation of state (EOS) for phase $l$ is written as $p_l=\alpha_l\rho_le_l + \beta_l$, where $\alpha_l$ and $\beta_l$ are constants, then by invoking the isobaric closure law for pressure in the mixture region ($p = p_1 = p_2$), the generalized mixture EOS can be written as in Eq. (\ref{eq:pressure}).  

In Eqs. \eqref{eq:volumef}-\eqref{eq:pressuref}, $\phi_l$ is the volume fraction of phase $l$ that satisfies the condition $\sum_{l=1}^2 \phi_l=1$; $\rho_l$ is the density of phase $l$; $\rho$ is the total density, defined as $\rho=\sum_{l=1}^2\rho_l\phi_l$; $u_i$ is the velocity; $p$ is the pressure; $e$ is the specific mixture internal energy, which can be related to the specific internal energy of phase $l$, $e_l$, as $e=\sum_{l=1}^2 \rho_le_l$; $k=u_iu_i/2$ is the specific kinetic energy; $E=\rho(e+k)$ is the total energy of the mixture per unit volume; and the function $\zeta_1$ is given by
\begin{equation}
    \zeta_1 = \frac{\rho_2 c_2^2 - \rho_1 c_1^2}{\frac{\rho_1 c_1^2}{\phi_1} + \frac{\rho_2 c_2^2}{\phi_2}},
\end{equation}
where $c_l$ is the speed of sound for phase $l$. If each of the phases is assumed to follow a stiffened-gas EOS, then the parameters in the EOS can be written as $\alpha=\gamma - 1$ and $\beta = -\gamma \pi$, where $\gamma$ is the polytropic coefficient and $\pi$ is the reference pressure. Then, $c_l$ can be defined as
\begin{equation}
    c_l=\sqrt{\gamma_l\Big(\frac{p + \pi_l}{\rho_l}\Big)}.
\end{equation}
In Eq. (\ref{eq:energyf}), $h_l=e_l+p/\rho_l$ represents the specific enthalpy of the phase $l$ and can be expressed in terms of $\rho_l$ and $p$ using the stiffened-gas EOS as
\begin{equation}
    h_l=\frac{(p + \pi_l)\gamma_l}{\rho_l(\gamma_l - 1)}.
    \label{eq:enthalpy}
\end{equation}

In Eqs. (\ref{eq:volumef})-(\ref{eq:energyf}), $\sigma$ is the surface-tension coefficient, $\kappa=-\partial n_{1j}/\partial x_j$ is the curvature of the interface, $g_i$ is the gravitational acceleration, and $a_{1i}=a_i(\phi_1)=\Gamma\{\epsilon\partial\phi_1/\partial x_i - \phi_1(1 - \phi_1)n_{1i}\}$ is the volumetric interface-regularization flux for phase $1$, which satisfies the condition $a_i(\phi_1)=-a_i(\phi_2)$. $n_{1i}=(\partial\phi_1/\partial x_i)/|(\partial \phi_1/\partial x_i)|$ is the normal of the interface for phase $1$; and $\Gamma$ and $\epsilon$ are the interface parameters, where $\Gamma$ represents a regularization velocity scale and $\epsilon$ represents an interface thickness scale. $R_{li}=\rho_l a_{li}$ is the interface-regularization flux in the mass balance equation for phase $l$, and $m_l=\rho_l\phi_l$ is the mass per unit total volume for phase $l$. Summing up the mass balance equations for phases $1$ and $2$ in Eq. \eqref{eq:massf}, the total mixture mass balance equation (a modified continuity equation) can be written as 
\begin{equation}
%\frac{\partial \rho}{\partial t} + \vec{\nabla}\cdot(\rho\vec{u}) = \vec{\nabla}\cdot\vec{f},
\frac{\partial \rho}{\partial t} + \frac{\partial \rho u_j}{\partial x_j} = \frac{\partial f_j}{\partial x_j},
\label{eq:mod_continuity}
\end{equation}
where $f_i=\sum_{l=1}^2 R_{li}=\sum_{l=1}^2 \rho_l a_{li}$ is the net interface-regularization flux for the mixture mass equation. 

Invoking Stokes's hypothesis, we write the Cauchy stress tensor as $\tau_{ij} = 2\mu D_{ij} + (\beta - 2\mu/3)(\partial u_k/\partial x_k) \delta_{ij}$, where $\mu$ is the dynamic viscosity of the mixture evaluated using the one-fluid mixture rule as $\mu=\sum_{l=1}^2 \phi_l \mu_l$; $D_{ij} =\{\partial u_i/\partial x_j + \partial u_j/\partial x_i\}/2$ is the strain-rate tensor; and $\beta$ is the bulk viscosity.

%\subsection{Interface-equilibrium condition \label{sec:iec}}

\subsection{Transport equations for kinetic energy and internal energy \label{sec:ke_analytical}}

The transport equation for kinetic energy can be obtained by multiplying $u_i$ to the momentum equation. In the absence of viscosity, surface tension, and gravity effects, this can be written as%, using product rule and algebraically manipulating the equation, and by making use of the continuity equation in Eq. \eqref{eq:mod_continuity} as

%ADD these intermediate steps for the journal article

\begin{equation}
%u_i\left(\frac{\partial \rho u_i}{\partial t} + \frac{\partial \rho u_i u_j}{\partial x_j} + \frac{\partial p}{\partial x_i} = \frac{\partial u_i f_j}{\partial x_j} + \frac{\partial \tau_{ij}}{\partial x_j} + \sigma \kappa \frac{\partial \phi_1}{\partial x_i} + \rho g_i\right).
u_i\left(\frac{\partial \rho u_i}{\partial t} + \frac{\partial \rho u_i u_j}{\partial x_j} + \frac{\partial p}{\partial x_i} = \frac{\partial u_i f_j}{\partial x_j}\right).
\label{eq:u_dot_mom}
\end{equation}
Invoking the product rule and algebraically manipulating the equation, we arrive at the form
\begin{equation}
\begin{aligned}
%\frac{\partial }{\partial t}\left(\rho \frac{u_iu_i}{2}\right) + \left(\frac{u_iu_i}{2}\right)\left\{\frac{\partial \rho}{\partial t} + \frac{\partial \rho u_j}{\partial x_j} - \frac{\partial f_j}{\partial x_j}\right\} +
%\frac{\partial}{\partial x_j} \left(\rho u_j \frac{u_i u_i}{2} \right)
%- \frac{\partial}{\partial x_j} \left(f_j \frac{u_i u_i}{2} \right)\\
%+ \frac{\partial p u_j}{\partial x_j} - p \frac{\partial u_j}{\partial x_j} = \frac{\partial \tau_{ij} u_i}{\partial x_j} - \tau_{ij}\frac{\partial u_i}{\partial x_j} + \sigma \kappa u_i\frac{\partial \phi_1}{\partial x_i} + \rho u_ig_i.
\frac{\partial }{\partial t}\left(\rho \frac{u_iu_i}{2}\right) + \left(\frac{u_iu_i}{2}\right)\left\{\frac{\partial \rho}{\partial t} + \frac{\partial \rho u_j}{\partial x_j} - \frac{\partial f_j}{\partial x_j}\right\}\\ +
\frac{\partial}{\partial x_j} \left(\rho u_j \frac{u_i u_i}{2} \right)
- \frac{\partial}{\partial x_j} \left(f_j \frac{u_i u_i}{2} \right)
+ \frac{\partial p u_j}{\partial x_j} - p \frac{\partial u_j}{\partial x_j} = 0.
\end{aligned}
\label{eq:intermediate_kinetic}
\end{equation}
Now, subtracting out the continuity equation [Eq. \eqref{eq:mod_continuity}] multiplied by the term $u_iu_i/2$, we obtain the compressible two-phase kinetic energy transport equation:
\begin{equation}
\begin{aligned}
%\frac{\partial }{\partial t}\left(\rho \frac{u_iu_i}{2}\right) + 
%\frac{\partial}{\partial x_j} \left(\rho u_j \frac{u_i u_i}{2} \right)
%- \frac{\partial}{\partial x_j} \left(f_j \frac{u_i u_i}{2} \right)
%+ \frac{\partial p u_j}{\partial x_j} - p \frac{\partial u_j}{\partial x_j}\\
%= \frac{\partial \tau_{ij} u_i}{\partial x_j} - \tau_{ij}\frac{\partial u_i}{\partial x_j} + \sigma \kappa u_i\frac{\partial \phi_1}{\partial x_i} + \rho u_ig_i.
\frac{\partial }{\partial t}\left(\rho \frac{u_iu_i}{2}\right) + 
\frac{\partial}{\partial x_j} \left(\rho u_j \frac{u_i u_i}{2} \right)
- \frac{\partial}{\partial x_j} \left(f_j \frac{u_i u_i}{2} \right)
+ \frac{\partial p u_j}{\partial x_j} - p \frac{\partial u_j}{\partial x_j}
= 0.
\end{aligned}
\label{eq:kineticf}
\end{equation}
The kinetic energy is not a conserved quantity in the absence of viscosity, surface tension, and gravity force, as can be seen from the governing equation due to the pressure term, which is not in a divergence form. The pressure term represents the reversible exchange of energy between the kinetic and internal energy. However, the convective and interface-regularization terms are in a divergence form and their contribution to kinetic energy should be either only through the boundary terms or zero, in a periodic domain. Maintaining this property discretely is important for these terms to not contribute spuriously to the kinetic energy. 

The internal energy transport equation was derived by \citet{jain2020conservative}, and can be written as
%The internal energy transport equation can be derived by seeking an approximate conservation of entropy (see \citet{jain2020conservative} for the derivation), and can be written as
\begin{equation}
    %\frac{\partial \rho e}{\partial t} + \frac{\partial \rho e u_j}{\partial x_j} + 
    %\frac{\partial p u_j}{\partial x_j} - u_j \frac{\partial p}{\partial x_j} = \tau_{ij}\frac{\partial u_i}{\partial x_j} + \sum_{l=1}^2 \frac{\partial \rho_l h_l a_{lj}}{\partial x_j}.
    \frac{\partial \rho e}{\partial t} + \frac{\partial \rho e u_j}{\partial x_j} + 
    \frac{\partial p u_j}{\partial x_j} - u_j \frac{\partial p}{\partial x_j} = \sum_{l=1}^2 \frac{\partial \rho_l h_l a_{lj}}{\partial x_j},
    \label{eq:ief}
\end{equation}
in the inviscid limit. Summing up the internal energy and the kinetic energy transport equations, we can arrive at the total energy equation in Eq. \eqref{eq:energyf}.

\subsection{Transport equations for individual phase and mixture entropy \label{sec:entropy_analytical}}
%\todo{rewrite in index notation to be consistent with other places}

%The proof of kinetic energy conservation was presented in Section \textbf{XXX}, along with the consistency conditions that needs to be satisfied and the numerical flux forms adopted in this work. 

%Derive entropy equations for each of the phases.

%Additional terms are non-zero only when the interface is out-of-equilibrium and go to zero when the interface is in equilibrium. Although they are zero when in equilibrium, they need to be discretized consistently for them to not spuriously contribute to the entropy of each of the phases.

%Starting from the internal energy equation in Eq. \eqref{eq:ief} \textemdash rewritten in terms of material derivative\textemdash
Starting from the internal energy equation in Eq. \eqref{eq:ief} and rewriting this in terms of material derivative, we obtain
\begin{equation*}
    \frac{D \rho e}{D t} + \rho h\frac{\partial u_j}{\partial x_j} = \sum_{l=1}^2 \frac{\partial \rho_l h_l a_{lj}}{\partial x_j},
    \label{eq:conservative_ie}
\end{equation*}
where $D/Dt=\partial/\partial t + u_j\partial/\partial x_j$ represents a material derivative. Now, expressing the mixture internal energy in terms of phase quantities as
\begin{equation*}
    \frac{D \rho e}{D t} = \sum_{l=1}^2\frac{D (\phi_l\rho_l e_l)}{D t} = \sum_{l=1}^2 \left\{\phi_l\frac{D (\rho_l e_l)}{D t} + \rho_le_l\frac{D \phi_l}{D t}\right\}, 
\end{equation*}
and then utilizing Gibbs's relation to express internal energy in terms of specific entropy of each phase, $s_l$, as
\begin{equation*}
    \mathrm{d}(\rho_le_l) = \rho_l\mathrm{d}e_l + e_l\mathrm{d}\rho_l = \rho_lT_l\mathrm{d}s_l + h_l\mathrm{d}\rho_l,
\end{equation*}
we arrive at the relation
\begin{equation}
    %\sum_{l=1}^2 \left\{\rho_l\phi_lT_l \frac{Ds_l}{Dt} + \phi_lh_l\frac{D\rho_l}{Dt} + \phi_lh_l\rho_l(\vec{\nabla}\cdot\vec{u}) + \rho_le_l\frac{D\phi_l}{Dt}\right\} = \sum_{l=1}^2 \vec{\nabla}\cdot(\rho_l h_l \vec{a}_l).
    \sum_{l=1}^2 \left\{\rho_l\phi_lT_l \frac{Ds_l}{Dt} + \phi_lh_l\frac{D\rho_l}{Dt} + \phi_lh_l\rho_l\frac{\partial u_j}{\partial x_j}  + \rho_le_l\frac{D\phi_l}{Dt}\right\} = \sum_{l=1}^2 \frac{\partial \rho_l h_l a_{lj}}{\partial x_j}.
    \label{eq:entropy_int1}
\end{equation}
%Using the mass balance equation for phase $l$, a transport equation for $D \rho_l/D t$ can be obtained as
%\begin{equation}
%    \frac{D \rho_l}{D t} = \frac{1}{\phi_l}\left[ \vec{\nabla}\cdot (\rho_l \vec{a}_l) - \rho_l \left\{ \frac{\partial \phi_l}{\partial t} + \vec{\nabla}\cdot (\vec{u}\phi_l) \right\} \right].
%    \label{eq:entropy2}
%\end{equation}
Substituting for $D\rho_l/Dt$ and $D\phi_l/Dt$ from Eqs. \eqref{eq:volumef}-\eqref{eq:massf} in Eq. \eqref{eq:entropy_int1}, we obtain an equation that relates $Ds_1/Dt$ and $Ds_2/Dt$ as
\begin{equation}
    %\rho_1\phi_1T_1 \frac{Ds_1}{Dt} + \rho_2\phi_2T_2 \frac{Ds_2}{Dt} = \sum_{l=1}^2 \left\{\vec{\nabla}\cdot(\rho_l h_l \vec{a}_l) - h_l\vec{\nabla}\cdot(\rho_l \vec{a}_l)\right\}.
    \rho_1\phi_1T_1 \frac{Ds_1}{Dt} + \rho_2\phi_2T_2 \frac{Ds_2}{Dt} = \sum_{l=1}^2 \left\{\frac{\partial \rho_l h_l a_{lj}}{\partial x_j} - h_l\frac{\partial \rho_l a_{lj}}{\partial x_j} \right\}.
    \label{eq:entropy_rel1}
\end{equation}
However, this provides one relation for two unknowns, $Ds_1/Dt$ and $Ds_2/Dt$. Another relation can be obtained starting from the isobaric law and by taking the material derivative as
\begin{equation*}
    p_1(\rho_1, s_1) = p_2(\rho_2, s_2),
    \label{eq:entropy0}
\end{equation*}
\begin{equation*}
    \left.\frac{\partial p_1}{\partial \rho_1}\right\rvert_{s_1} \frac{D \rho_1}{D t} + \left.\frac{\partial p_1}{\partial s_1}\right\rvert_{\rho_1} \frac{D s_1}{D t} = \left.\frac{\partial p_2}{\partial \rho_2}\right\rvert_{s_2} \frac{D \rho_2}{D t} + \left.\frac{\partial p_2}{\partial s_2}\right\rvert_{\rho_2} \frac{D s_2}{D t}.
    \label{eq:entropy1}
\end{equation*}
Now, substituting for $D \rho_l/D t$, using the relations $\partial p_l/\partial \rho_l \rvert_{s_l}=c_l$ and $\left.\partial p_l/ \partial s_l \right\rvert_{\rho_l} = \rho_l \Gamma_l T_l$\textemdash
%\begin{equation}
%    \left.\frac{\partial p_l}{\partial s_l}\right\rvert_{\rho_l} = \rho_l \Gamma_l T_l
%\end{equation}
where $\Gamma_l$ is the Gruneisen coefficient for phase $l$ defined as $\Gamma_l = \left. (1/\rho_l) \partial p_l/\partial e_l\right\rvert_{\rho_l}$ which is equal to $\alpha_l$ for the EOS considered in this work\textemdash we arrive at the equation
\begin{equation}
\begin{aligned}
    %\rho_1 \Gamma_1 T_1 \frac{D s_1}{D t} - \rho_2 \Gamma_2 T_2 \frac{D s_2}{D t} = \left(\frac{\partial \phi_1}{\partial t} + \vec{u}\cdot \vec{\nabla}\phi_1 - \vec{\nabla}\cdot\vec{a}_1 \right) \left(\frac{\rho_1 c_1^2}{\phi_1} + \frac{\rho_2 c_2^2}{\phi_2} \right)\\
    %+(\vec{\nabla}\cdot\vec{u}) \left(\rho_1 c_1^2 - \rho_2 c_2^2 \right) + \vec{a}_2\cdot\vec{\nabla}\rho_2 \left(\frac{c_2^2}{\phi_2} \right) - \vec{a}_1\cdot\vec{\nabla}\rho_1 \left(\frac{c_1^2}{\phi_1} \right).
    \rho_1 \Gamma_1 T_1 \frac{D s_1}{D t} - \rho_2 \Gamma_2 T_2 \frac{D s_2}{D t} = \left(\frac{\partial \phi_1}{\partial t} + u_j\frac{\partial \phi_1}{\partial x_j} - \frac{\partial a_{1j}}{\partial x_j} \right) \left(\frac{\rho_1 c_1^2}{\phi_1} + \frac{\rho_2 c_2^2}{\phi_2} \right)\\
    + \frac{\partial u_j}{\partial x_j} \left(\rho_1 c_1^2 - \rho_2 c_2^2 \right) + a_{2j}\frac{\partial \rho_2}{\partial x_j} \left(\frac{c_2^2}{\phi_2} \right) - a_{1j}\frac{\partial \rho_1}{\partial x_j} \left(\frac{c_1^2}{\phi_1} \right).
\end{aligned}
\label{eq:entropy_int2}
\end{equation}
Now, subtracting the volume-fraction equation in Eq. \eqref{eq:volumef} from this equation, we arrive at the second relation for the two unknowns, $Ds_1/Dt$ and $Ds_2/Dt$, as
\begin{equation}
    %\rho_1 \Gamma_1 T_1 \frac{D s_1}{D t} - \rho_2 \Gamma_2 T_2 \frac{D s_2}{D t} = \vec{a}_2\cdot\vec{\nabla}\rho_2 \left(\frac{c_2^2}{\phi_2} \right) - \vec{a}_1\cdot\vec{\nabla}\rho_1 \left(\frac{c_1^2}{\phi_1} \right).
    \rho_1 \Gamma_1 T_1 \frac{D s_1}{D t} - \rho_2 \Gamma_2 T_2 \frac{D s_2}{D t} = a_{2j}\frac{\partial \rho_2}{\partial x_j} \left(\frac{c_2^2}{\phi_2} \right) - a_{1j}\frac{\partial \rho_1}{\partial x_j} \left(\frac{c_1^2}{\phi_1} \right).
    \label{eq:entropy_rel2}
\end{equation}
Now, solving for $Ds_1/Dt$ and $Ds_2/Dt$ using Eqs. \eqref{eq:entropy_rel1} and \eqref{eq:entropy_rel2}, we arrive at the transport equation for the individual phase entropy as
%\todo[inline]{continue from here onwards}
%\begin{equation}
%    \rho_1\phi_1T_1 \frac{Ds_1}{Dt} + \rho_2\phi_2T_2 \frac{Ds_2}{Dt} = \sum_{l=1}^2 \left\{\vec{\nabla}\cdot(\rho_l h_l \vec{a}_l) - h_l\vec{\nabla}\cdot(\rho_l \vec{a}_l)\right\}. 
%\end{equation}
%\begin{equation}
%    \rho_1 \Gamma_1 T_1 \frac{D s_1}{D t} - \rho_2 \Gamma_2 T_2 \frac{D s_2}{D t} = \vec{a}_2\cdot\vec{\nabla}\rho_2 \left(\frac{c_2^2}{\phi_2} \right) - \vec{a}_1\cdot\vec{\nabla}\rho_1 \left(\frac{c_1^2}{\phi_1} \right).
%    \label{eq:entropy_rel}
%\end{equation}
    %Two equations for two unknowns. Solving, we get
    \begin{equation}
    \begin{aligned}
    %\rho_l\phi_l \frac{Ds_l}{Dt} = \frac{1}{T_l\left(\sum_{i=1}^2 \frac{\Gamma_i}{\phi_i} \right)}\left[\vec{a}_{l'}\cdot\vec{\nabla}\rho_{l'} \left(\frac{c_{l'}^2}{\phi_{l'}} \right) - \vec{a}_l\cdot\vec{\nabla}\rho_l \left(\frac{c_l^2}{\phi_l} \right)\right.\\
    %+ \left.\frac{\Gamma_{l'}}{\phi_{l'}}\sum_{i=1}^2 \left\{\vec{\nabla}\cdot(\rho_i h_i \vec{a}_i) - h_i\vec{\nabla}\cdot(\rho_i \vec{a}_i)\right\}\right]. 
    \rho_l\phi_l \frac{Ds_l}{Dt} = \frac{1}{T_l\left(\sum_{k=1}^2 \frac{\Gamma_k}{\phi_k} \right)}\left[
    a_{l'j}\frac{\partial \rho_{l'}}{\partial x_j} \left(\frac{c_{l'}^2}{\phi_{l'}} \right) 
    - a_{lj}\frac{\partial \rho_l}{\partial x_j} \left(\frac{c_l^2}{\phi_l} \right)\right.\\
    + \left.\frac{\Gamma_{l'}}{\phi_{l'}}\sum_{k=1}^2 \left\{\frac{\partial \rho_k h_k a_{kj}}{\partial x_j} - h_k\frac{\partial \rho_k a_{kj}}{\partial x_j} \right\}\right], 
    \end{aligned}
    \label{eq:entropy_rel}
\end{equation}
where the subscript $l'=1-l$ denotes another phase, different from phase $l$.
This relation in Eq. \eqref{eq:entropy_rel} shows that the material derivative of $s_l$ is not zero. The non-zero terms on the right-hand side is a function of interface-regularization flux $a_{li}$, which would go to zero if the internal structure of the interface is in a perfect equilibrium state\textemdash a hyperbolic-tangent function. Therefore, $Ds_l/Dt=0$ would be satisfied when the interface is in an equilibrium state. 
%Note that this equilibrium state is different from the IEC presented in Section \ref{sec:iec}; here an equilibrium state refers to the hyperbolic-tangent shape for the internal structure of the interface. 
Now using the mass balance equation in Eq. \eqref{eq:massf} and rewriting Eq. \eqref{eq:entropy_rel} in a conservative form, a transport equation for the evolution of total entropy of phase $l$,  $\rho_l \phi_l s_l$, can be written as 
%Notes: 
%\begin{itemize}
%\item The material derivative of $s_l$ is not zero.
%\item The non-zero terms on the right-hand side would go to zero if the the interface is in a perfectly equilibrium state ($\vec{a}\rightarrow0$).
%\end{itemize}
\begin{equation}
    \begin{aligned}
    %\frac{\partial \rho_l \phi_l s_l}{\partial t} + \vec{\nabla}\cdot(\vec{u} \rho_l \phi_l s_l) = s_l \vec{\nabla}\cdot (\rho_l \vec{a}_l)\\ 
    %+ \frac{1}{T_l\left(\sum_{i=1}^2 \frac{\Gamma_i}{\phi_i} \right)}\left[\vec{a}_{l'}\cdot\vec{\nabla}\rho_{l'} \left(\frac{c_{l'}^2}{\phi_{l'}} \right) - \vec{a}_l\cdot\vec{\nabla}\rho_l \left(\frac{c_l^2}{\phi_l} \right)\right.\\
    %+ \left.\frac{\Gamma_{l'}}{\phi_{l'}}\sum_{i=1}^2 \left\{\vec{\nabla}\cdot(\rho_i h_i \vec{a}_i) - h_i\vec{\nabla}\cdot(\rho_i \vec{a}_i)\right\}\right].\\
    \frac{\partial \rho_l \phi_l s_l}{\partial t} + \frac{\partial \rho_l u_j \phi_l s_l}{\partial x_j} = s_l \frac{\partial \rho_l a_{lj}}{\partial x_j}\\
    + \frac{1}{T_l\left(\sum_{k=1}^2 \frac{\Gamma_k}{\phi_k} \right)}\left[
    a_{l'j}\frac{\partial \rho_{l'}}{\partial x_j} \left(\frac{c_{l'}^2}{\phi_{l'}} \right) 
    - a_{lj}\frac{\partial \rho_l}{\partial x_j} \left(\frac{c_l^2}{\phi_l} \right)\right.\\
    + \left.\frac{\Gamma_{l'}}{\phi_{l'}}\sum_{k=1}^2 \left\{\frac{\partial \rho_k h_k a_{kj}}{\partial x_j} - h_k\frac{\partial \rho_k a_{kj}}{\partial x_j} \right\}\right].
    \end{aligned}
    \label{eq:phase_entropy}
\end{equation}
Here again, the right-hand side of Eq. \eqref{eq:phase_entropy} is a function of interface-regularization flux $a_{li}$, and hence, goes to zero when the interface is in an equilibrium state. Therefore, the entropy of each phase is not exactly conserved in a diffuse-interface method\textemdash even in the inviscid limit\textemdash due to the interface-regularization process which is an irreversible process (see Appendix A for the numerical quantification of entropy change associated with the interface-regularization process). %This is in very good agreement with the analysis presented in \citet{jain2020conservative}.

However, an approximate conservation of entropy can be sought in the limit of equilibrium interface state ($a_{li}\rightarrow 0$). In this limit, we recover
\begin{equation*}
    %\frac{\partial \rho_l \phi_l s_l}{\partial t} + \vec{\nabla}\cdot(\vec{u} \rho_l \phi_l s_l) = 0,  \hspace{0.5cm} l=1,2,
    \frac{\partial \rho_l \phi_l s_l}{\partial t} + \frac{\partial \rho_l u_j \phi_l s_l}{\partial x_j} = 0, \hspace{0.5cm} l=1,2,
\end{equation*}
and
\begin{equation}
    %\frac{\partial \rho s}{\partial t} + \vec{\nabla}\cdot(\vec{u} \rho s) = 0,
    \frac{\partial \rho s}{\partial t} + \frac{\partial \rho u_j s}{\partial x_j} = 0,
    \label{eq:entropyf}
\end{equation}
where $\rho s = \sum_{l=1}^2 \rho_l \phi_l s_l$ is the total mixture entropy. With this, the total integrated entropy of each of the phases, $S_l=\int_{\Omega} \left(\rho_l \phi_l s_l\right) dV$, and the mixture entropy, $S=\int_{\Omega} \left(\rho s\right) dV$, are conserved, where $\Omega$ is the computational domain. 
%Not possible to analytically determine if RHS$\ge0$ (second law)
%\todo{look at the entropy analysis by Tiwari et al. (2013)}
%\todo{change y-axis of all plots to $S$}

\subsection{Reduced system for single-phase flows\label{sec:reduced}}
If the material properties such as density ($\rho_1=\rho_2=\rho$), viscosity ($\mu_1 = \mu_2 = \mu$), and parameters in the EOS ($\alpha_1=\alpha_2=\alpha; \beta_1=\beta_2=\beta$) are identical for the two phases, the two-phase system in Eqs. \eqref{eq:volumef}-\eqref{eq:energyf} will consistently reduce to the single-phase Navier-Stokes system, in the absence of surface tension. This can be written as
\begin{equation}
%\frac{\partial \rho}{\partial t} + \vec{\nabla}\cdot(\vec{u} \rho) = 0,
\frac{\partial \rho}{\partial t} + \frac{\partial \rho u_j}{\partial x_j} = 0,
\label{eq:masss}
\end{equation}
\begin{equation}
%\frac{\partial \rho\vec{u}}{\partial t} + \vec{\nabla}\cdot(\rho \vec{u} \otimes \vec{u} + p \mathds{1}) = \vec{\nabla}\cdot\doubleunderline\tau + \rho \vec{g},
%\frac{\partial \rho u_i}{\partial t} + \frac{\partial \rho u_i u_j}{\partial x_j} + \frac{\partial p}{\partial x_i} = \frac{\partial \tau_{ij}}{\partial x_j} + \rho g_i,
\frac{\partial \rho u_i}{\partial t} + \frac{\partial \rho u_i u_j}{\partial x_j} + \frac{\partial p}{\partial x_i} = 0,
\label{eq:moms}
\end{equation}
\begin{equation}
%\frac{\partial E}{\partial t} + \vec{\nabla}\cdot\{(E + p)\vec{u}\} = \vec{\nabla}\cdot(\doubleunderline\tau\cdot\vec{u}) + \rho \vec{g}\cdot \vec{u},
%\frac{\partial E}{\partial t} + \frac{\partial \left(E + p \right) u_j}{\partial x_j} = \frac{\partial \tau_{ij} u_i}{\partial x_j} + \rho u_i g_i,
\frac{\partial E}{\partial t} + \frac{\partial \left(E + p \right) u_j}{\partial x_j} = 0,
\label{eq:energys}
\end{equation}
\begin{equation}
p = \alpha \rho e + \beta,
\label{eq:pressures}
\end{equation}
in the absence of viscous and gravity effects. Now, following the steps taken in Sections \ref{sec:ke_analytical}-\ref{sec:entropy_analytical}, the kinetic energy, internal energy, and entropy transport equations for the single-phase Navier-Stokes system can be written as

\begin{equation}
%\frac{\partial }{\partial t}\left(\rho \frac{u_iu_i}{2}\right) + 
%\frac{\partial}{\partial x_j} \left(\rho u_j \frac{u_i u_i}{2} \right)
%+ \frac{\partial p u_j}{\partial x_j} - p \frac{\partial u_j}{\partial x_j}
%= \frac{\partial \tau_{ij} u_i}{\partial x_j} - \tau_{ij}\frac{\partial u_i}{\partial x_j} + \rho u_ig_i,
\frac{\partial }{\partial t}\left(\rho \frac{u_iu_i}{2}\right) + 
\frac{\partial}{\partial x_j} \left(\rho u_j \frac{u_i u_i}{2} \right)
+ \frac{\partial p u_j}{\partial x_j} - p \frac{\partial u_j}{\partial x_j}
= 0,
\label{eq:kinetics}
\end{equation}
\begin{equation}
    %\frac{\partial \rho e}{\partial t} + \frac{\partial \rho e u_j}{\partial x_j} + 
    %\frac{\partial p u_j}{\partial x_j} - u_j \frac{\partial p}{\partial x_j} = \tau_{ij}\frac{\partial u_i}{\partial x_j},
    \frac{\partial \rho e}{\partial t} + \frac{\partial \rho e u_j}{\partial x_j} + 
    \frac{\partial p u_j}{\partial x_j} - u_j \frac{\partial p}{\partial x_j} = 0,
    \label{eq:ies}
\end{equation}
and
\begin{equation}
    %\frac{\partial \rho s}{\partial t} + \vec{\nabla}\cdot(\vec{u} \rho s) = 0,
    \frac{\partial \rho s}{\partial t} + \frac{\partial \rho u_j s}{\partial x_j} = 0,
    \label{eq:entropys}
\end{equation}
respectively, in the absence of viscous and gravity effects.

\section{Numerical flux forms \label{sec:keep_scheme}}

In this section, we first derive the consistency conditions between the numerical fluxes that needs to be satisfied to achieve conservation of kinetic energy and entropy in Sections \ref{sec:ke_consistency}-\ref{sec:ie_consistency}. The consistency conditions are then summarized in Section \ref{sec:consistency_sum}. Finally, the proposed KEEP scheme, which consists of a set of numerical fluxes that satisfy the proposed consistency conditions, for single-phase and two-phase systems are presented in Section \ref{sec:keep}. 

Starting from the mass, momentum, and energy equations in Eqs. \eqref{eq:massf}-\eqref{eq:energyf}, 
%in this section, we use the system of equations in a reduced form instead of the full governing system of equations in Eqs. \eqref{eq:volumef}-\eqref{eq:energyf}. 
and neglecting the viscous, surface tension, gravity, and thermal conduction terms; rewriting the total energy as $E=\rho(e+k)$; and utilizing the mixture mass equation in Eq. \eqref{eq:mod_continuity} instead of the individual mass equations in Eq. \eqref{eq:massf}, we can write the conservative system of equations in a general form as
\begin{equation}
\frac{\partial U_i}{\partial t} + \frac{\partial F_{ij}}{\partial x_j} + \frac{\partial P_{ij}}{\partial x_j} = \frac{\partial R_{ij}}{\partial x_j},
\label{eq:reduced_system}
\end{equation}
where $U_i$ is the state vector for the conserved variables and $F_{ij}$, $P_{ij}$, and $R_{ij}$ are the flux vectors for the convective, pressure, and interface-regularization terms, respectively, and are given by
\begin{equation}
\begin{gathered}
U_i=
\begin{bmatrix}
\rho\\
\rho u_i\\
\rho e + \rho k
\end{bmatrix},
F_{ij}=
\begin{bmatrix}
\rho u_j\\
\rho u_i u_j\\
\rho e u_j + \rho k u_j
\end{bmatrix},\\
P_{ij}=
\begin{bmatrix}
0\\
p \delta_{ij}\\
p u_j
\end{bmatrix},
\mathrm{and}\ 
R_{ij}=
\begin{bmatrix}
\sum_{l=1}^2 \rho_l a_{lj} = f_j\\
u_i f_j\\
k f_j + \sum_{l=1}^2 \rho_l h_l a_{lj}
\end{bmatrix}.
\end{gathered}
\label{eq:flux_vec}
\end{equation}

Adopting the finite-volume approach, the system of equations can be discretized in space using the conservative numerical fluxes of the form
\begin{equation}
    \left.\frac{\partial q_j}{\partial x_j}\right\vert_m \approx \frac{\hat{q}_j\rvert_{(m+\frac{1}{2})} - \hat{q}_j\rvert_{(m-\frac{1}{2})}}{\Delta x_j},
    \label{eq:gen_disc}
\end{equation}
where $\hat{q}_j\rvert_{(m\pm 1/2)}$ represents the numerical flux for a generic flux function $q_j$ evaluated on the cell face, subscript $m$ denotes the grid-cell index, and $\Delta x_j$ represents the grid-cell size. Now, the semi-discrete representation for the system of equations can be written as
\begin{equation}
\left.\frac{\partial U_i}{\partial t}\right\vert_m + \frac{ \hat{F}_{ij}\rvert_{(m+\frac{1}{2})} - \hat{F}_{ij}\rvert_{(m-\frac{1}{2})}}{\Delta x_j} + \frac{ \hat{P}_{ij}\rvert_{(m+\frac{1}{2})} - \hat{P}_{ij}\rvert_{(m-\frac{1}{2})}}{\Delta x_j} = \frac{ \hat{R}_{ij}\rvert_{(m+\frac{1}{2})} - \hat{R}_{ij}\rvert_{(m-\frac{1}{2})}}{\Delta x_j},
\label{eq:semi_discrete_system}
\end{equation}
where $\hat{F}_{ij}$, $\hat{P}_{ij}$, and $\hat{R}_{ij}$ are the numerical flux vectors for the convective, pressure, and interface-regularization terms, respectively. The components of the numerical flux vectors can be written as
\begin{equation}
\begin{gathered}
\hat{F}_{ij}\rvert_{(m\pm \frac{1}{2})}=
\begin{bmatrix}
\reallywidehat{\rho u_j}\rvert_{(m\pm \frac{1}{2})} = \hat{C}_j\rvert_{(m\pm \frac{1}{2})}\\
\reallywidehat{\rho u_i u_j}\rvert_{(m\pm \frac{1}{2})} = \hat{M}_{ij}\rvert_{(m\pm \frac{1}{2})}\\
\reallywidehat{\rho e u_j}\rvert_{(m\pm \frac{1}{2})} + \reallywidehat{\rho k u_j}\rvert_{(m\pm \frac{1}{2})} = \hat{I}_j\rvert_{(m\pm \frac{1}{2})} + \hat{K}_j\rvert_{(m\pm \frac{1}{2})}
\end{bmatrix},\\
\hat{P}_{ij}\rvert_{(m\pm \frac{1}{2})}=
\begin{bmatrix}
0\\
\reallywidehat{p \delta_{ij}}\rvert_{(m\pm \frac{1}{2})} = \hat{P}_{ij}\rvert_{(m\pm \frac{1}{2})}\\
\reallywidehat{p u_j}\rvert_{(m\pm \frac{1}{2})} = \hat{W}_j\rvert_{(m\pm \frac{1}{2})}
\end{bmatrix},\\
\mathrm{and}\ 
\hat{R}_{ij}\rvert_{(m\pm \frac{1}{2})}=
\begin{bmatrix}
\reallywidehat{f_j}\rvert_{(m\pm \frac{1}{2})} = \hat{F}_j\rvert_{(m\pm \frac{1}{2})}\\
\reallywidehat{u_i f_j}\rvert_{(m\pm \frac{1}{2})} = \hat{R}_{ij}\rvert_{(m\pm \frac{1}{2})}\\
\reallywidehat{k f_j}\rvert_{(m\pm \frac{1}{2})} + \reallywidehat{\sum_{l=1}^2 \rho_l h_l a_{lj}}\rvert_{(m\pm \frac{1}{2})} = \hat{T}_j\rvert_{(m\pm \frac{1}{2})} + \hat{H}_j\rvert_{(m\pm \frac{1}{2})}
\end{bmatrix}.
\end{gathered}
\label{eq:num_flux_vec}
\end{equation}
Using any form of numerical flux for the components in Eq. \eqref{eq:num_flux_vec} leads to an implementation that discretely conserves the variables in the state vector $U_i$ in Eq. \eqref{eq:semi_discrete_system} due to the telescoping property. However, to achieve secondary conservation of the kinetic energy and entropy, the numerical fluxes need to be approximated in a consistent manner, such that the conservation can be achieved implicitly.

In addition to these fluxes, the convective numerical fluxes in the volume fraction equation [Eq. \eqref{eq:volumef}] and individual phase mass equation [Eq. \eqref{eq:massf}] can be written as
\begin{equation}
    \reallywidehat{\phi u_j}\rvert_{(m\pm \frac{1}{2})} = \hat{\Phi}_j\rvert_{(m\pm \frac{1}{2})},
    \label{eq:vol_conv}
\end{equation}
and 
\begin{equation}
    \reallywidehat{m_l u_j}\rvert_{(m\pm \frac{1}{2})} = \hat{Q}_{lj}\rvert_{(m\pm \frac{1}{2})},
\end{equation}
respectively; and the interface-regularization fluxes in the volume fraction equation and individual phase mass equation can be written as
\begin{equation}
    \reallywidehat{a_{lj}}\rvert_{(m\pm \frac{1}{2})} = \hat{A}_{lj}\rvert_{(m\pm \frac{1}{2})},
\end{equation}
and 
\begin{equation}
    \reallywidehat{\rho_l a_{lj}}\rvert_{(m\pm \frac{1}{2})} = \hat{B}_{lj}\rvert_{(m\pm \frac{1}{2})},
    \label{eq:mass_reg}
\end{equation}
A consistent numerical approximation for these fluxes in Eqs. \eqref{eq:vol_conv}-\eqref{eq:mass_reg} is also crucial in achieving secondary conservation of the kinetic energy and entropy.

%The form of the numerical fluxes are presented in the subsequent sections.
%\todo{define fluxes for A and B somewhere}
\iffalse

\subsection{Discretization strategy}

introduce collocated grid (Morinishi)\\

Introduce bar and delta notation\\

introduce split forms here.\\

introduce global and local conservation (Morinishi)\\

Introduce the numerical mass flux here.\\

\fi

\subsection{Skew-symmetric-like splitting}

The derivatives of the non-linear products that typically arise in the convective terms can be discretized in different ways. If the generic non-linear flux in Eq. \eqref{eq:gen_disc} can be expressed as $q_1 = a b c$, where $a$, $b$, and $c$ are functions of $x$, then according to Eq. \eqref{eq:gen_disc}, the three possible numerical flux forms for $\hat{q}_1\rvert_{(m\pm 1/2)}$ are
\begin{equation}
    \begin{gathered}
        \hat{q}_1\rvert_{(m\pm\frac{1}{2})} = \xbar{abc}^{(m\pm\frac{1}{2})},\\
        \hat{q}_1\rvert_{(m\pm\frac{1}{2})} = \xbar{ab}^{(m\pm\frac{1}{2})} \xbar{c}^{(m\pm\frac{1}{2})},\\
        \hat{q}_1\rvert_{(m\pm\frac{1}{2})} = \xbar{a}^{(m\pm\frac{1}{2})} \xbar{b}^{(m\pm\frac{1}{2})} \xbar{c}^{(m\pm\frac{1}{2})},\\
        %\hat{q}_1\rvert_{(m\pm\frac{1}{2})} = \frac{a b\rvert_{(m\pm1)} + a b\rvert_{m}}{2}
        %\frac{c\rvert_{(m\pm1)} + c\rvert_{m}}{2},\\
        %\hat{q}_1\rvert_{(m\pm\frac{1}{2})} = \frac{a\rvert_{(m\pm1)} + a\rvert_{m}}{2}
        %\frac{b\rvert_{(m\pm1)} + b\rvert_{m}}{2}
        %\frac{c\rvert_{(m\pm1)} + c\rvert_{m}}{2},
    \end{gathered}
    \label{eq:split_form}
\end{equation}
which are called the divergence form, quadratic-split form, and cubic-split form, respectively, and the overbar denotes an arithmetic average of quantity evaluated at $m$ and $m\pm1$, e.g., $\xbar{ab}^{(m\pm\frac{1}{2})}$ is
\begin{equation}
    \xbar{ab}^{(m\pm\frac{1}{2})} = \frac{ab\rvert_{m}+ab\rvert_{(m\pm1)}}{2}.
\end{equation}
However, the interface-regularization flux terms contain non-linear product of four or five quantities, and therefore, new numerical split forms need to be defined. If $q_1 = a b c d$, then a quartic-split form of the numerical flux can be defined as
\begin{equation}
     \hat{q}_1\rvert_{(m\pm\frac{1}{2})} = \xbar{a}^{(m\pm\frac{1}{2})} \xbar{b}^{(m\pm\frac{1}{2})} \xbar{c}^{(m\pm\frac{1}{2})} \xbar{d}^{(m\pm\frac{1}{2})},
     \label{eq:quartic_split_form}
\end{equation}
Similarly, if $q_1 = a b c d e$, then a quintic-split form of the numerical flux can be defined as
\begin{equation}
     \hat{q}_1\rvert_{(m\pm\frac{1}{2})} = \xbar{a}^{(m\pm\frac{1}{2})} \xbar{b}^{(m\pm\frac{1}{2})} \xbar{c}^{(m\pm\frac{1}{2})} \xbar{d}^{(m\pm\frac{1}{2})} \xbar{e}^{(m\pm\frac{1}{2})},
     \label{eq:quintic_split_form}
\end{equation}
These skew-symmetric-like split schemes are used here, because these schemes result in reduced aliasing errors \citep{blaisdell1996effect,chow2003further,kennedy2008reduced} and are known to improve the conservation properties of the quadratic quantities \citep{kravchenko1997effect}.

The numerical fluxes, introduced here, can also be equivalently rewritten in a pointwise manner, e.g., the divergence form, quadratic-split form, and cubic-split forms of the numerical flux can be written as
\begin{equation}
    \begin{gathered}
        \frac{\partial q_1}{\partial x_1} \approx \left.\frac{\delta a b c}{\delta x}\right|_{m},\\
        %\frac{\partial q_1}{\partial x_1} \approx \frac{1}{2} \left( \frac{\delta a b c}{\delta x}\rvert_m + c\rvert_m \frac{\delta a b }{\delta x}\rvert_m + a b \rvert_m \frac{\delta c}{\delta x}\rvert_m \right),\\
        \frac{\partial q_1}{\partial x_1} \approx \frac{1}{2} \left( \left.\frac{\delta a b c}{\delta x}\right|_{m} + \left.c\right|_{m} \left.\frac{\delta a b }{\delta x}\right|_{m} + \left.a b \right|_{m} \left.\frac{\delta c}{\delta x}\right|_{m} \right),\\
        \frac{\partial q_1}{\partial x_1} \approx \frac{1}{4} \left( \left.\frac{\delta a b c}{\delta x}\right|_{m} + \left.c\right|_{m} \left.\frac{\delta a b }{\delta x}\right|_{m} + \left.a\right|_{m} \left.\frac{\delta b c }{\delta x}\right|_{m} +\left. b\right|_{m} \left.\frac{\delta a c }{\delta x}\right|_{m} \right.\\ 
        \left. + \left.a b \right|_{m} \left.\frac{\delta c}{\delta x}\right|_{m}
        + \left.b c \right|_{m} \left.\frac{\delta a}{\delta x}\right|_{m}
        + \left.a c \right|_{m}\left.\frac{\delta b}{\delta x}\right|_{m}
        \right),
    \end{gathered}
\end{equation}
respectively, where the $\delta/ \delta x$ operator is the standard finite-difference derivative.

\subsection{Consistency conditions for volume fraction, mass, momentum, and kinetic-energy fluxes \label{sec:ke_consistency}}

The kinetic energy equations in Eqs. \eqref{eq:kineticf}, \eqref{eq:kinetics} are not solved directly, and hence, we seek an implicit conservation of kinetic energy here. One approach to achieve this is by following the same steps taken in Section \ref{sec:ke_analytical} discretely, and by seeking those forms of numerical fluxes in the mass, momentum, and energy equations that result in the implied discrete kinetic energy equation that has discrete divergence operators in all the terms. In this case, the contribution of the terms in the discrete kinetic energy equation will be only through the boundary terms or zero, in a periodic domain, due to the telescoping property of the divergence operator. Note that we are seeking kinetic energy conservation in the limit of zero viscosity, surface tension, gravity force, and reversible pressure work. 

%the convective and interface-regularization terms in the discrete compressible two-phase kinetic energy equation should be of the form of a discrete divergence operator. 
Theorem \ref{theorem:global_consistency} proposes consistency conditions between
$\hat{C}_j$ and $\hat{M}_{ij}$, and
$\hat{F}_j$ and $\hat{R}_{ij}$, such that the discrete convective and interface-regularization terms do not contribute spuriously to the total kinetic energy. 

\newtheorem{theorem}{Theorem}[section]

\begin{theorem}
The total kinetic energy of the system can be discretely conserved in the absence of reversible pressure work, viscosity, surface tension, gravity, and time-differencing errors if the following consistency conditions are satisfied: those between the convective fluxes of mass $\hat{C}$ and momentum $\hat{M}$, 
\begin{equation}
    \hat{M}_{ij}\rvert_{(m\pm\frac{1}{2})} = \hat{C}_j\rvert_{(m\pm\frac{1}{2})}
    \xbar{u_i}^{(m\pm\frac{1}{2})}
    %\frac{u_i\rvert_{(m\pm1)} + u_i\rvert_{m}}{2},
    \label{eq:mom_cons}
\end{equation}
and between the interface-regularization fluxes of mass $\hat{F}$ and momentum $\hat{R}$,  
\begin{equation}
    \hat{R}_{ij}\rvert_{(m\pm\frac{1}{2})} = \hat{F}_j\rvert_{(m\pm\frac{1}{2})}
    \xbar{u_i}^{(m\pm\frac{1}{2})}
    %\frac{u_i\rvert_{(m\pm1)} + u_i\rvert_{m}}{2},
    \label{eq:int_mom_cons}
\end{equation}
where $u$ is the velocity, $m$ is the grid index, and $i$ and $j$ are the indices in the Einstein notation.

\label{theorem:global_consistency}
\end{theorem}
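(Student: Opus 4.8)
\emph{Proof proposal.} The plan is to carry out the manipulation of Section~\ref{sec:ke_analytical} at the semi-discrete level: start from the momentum rows of Eq.~\eqref{eq:semi_discrete_system}, contract them with the nodal velocity $u_i|_m$, sum over $i$, and show that with the hypotheses \eqref{eq:mom_cons}--\eqref{eq:int_mom_cons} every term except the (excluded) reversible pressure work collapses to a discrete divergence of a genuine cell-face flux. First I would dispose of the time derivative: treating $\partial/\partial t$ as exact (this is where ``no time-differencing errors'' is used), the product rule gives $u_i|_m\,\partial_t(\rho u_i)|_m = \partial_t(\rho\,u_iu_i/2)|_m + \tfrac12(u_iu_i)|_m\,\partial_t\rho|_m$. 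The pressure rows contribute $u_j|_m\,\delta\hat{P}_{ij}/\delta x_j|_m$, which is exactly the discrete counterpart of the $u_j\partial p/\partial x_j$ (equivalently $\partial(pu_j)/\partial x_j - p\,\partial u_j/\partial x_j$) term — the reversible pressure work excluded by hypothesis — so it is carried along untouched.

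Next is the convective term, which is the crux. Substituting \eqref{eq:mom_cons}, $\hat{M}_{ij}=\hat{C}_j\,\overline{u_i}$, and using the identity $u_i|_m\,\overline{u_i}^{(m\pm\frac12)} = \tfrac12(u_iu_i)|_m + \tfrac12\,u_i|_m u_i|_{(m\pm1)}$, I would split
\[
u_i|_m\bigl(\hat{M}_{ij}|_{(m+\frac12)}-\hat{M}_{ij}|_{(m-\frac12)}\bigr)
= \tfrac12(u_iu_i)|_m\bigl(\hat{C}_j|_{(m+\frac12)}-\hat{C}_j|_{(m-\frac12)}\bigr)
+ \bigl(\hat{\mathcal{K}}_j|_{(m+\frac12)}-\hat{\mathcal{K}}_j|_{(m-\frac12)}\bigr),
\]
where $\hat{\mathcal{K}}_j|_{(m+\frac12)} := \tfrac12\,\hat{C}_j|_{(m+\frac12)}\,u_i|_m u_i|_{(m+1)}$. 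The one point that needs care — and on which the whole argument rests — is that $\hat{\mathcal{K}}_j$ is a \emph{bona fide} conservative face flux: its value at a face depends only on the mass flux there and on the velocities in the two adjacent cells, and is symmetric under interchange of those two cells, so the second bracket is genuinely $\Delta x_j\,\delta\hat{\mathcal{K}}_j/\delta x_j|_m$ and telescopes. (Note $\hat{\mathcal{K}}_j$ need not coincide with the kinetic-energy flux $\hat{K}_j$ of the energy equation; pinning those down is the business of the internal-energy consistency section, not of this theorem.) The interface-regularization rows are treated identically: \eqref{eq:int_mom_cons} gives $u_i|_m(\hat{R}_{ij}|_{(m+\frac12)}-\hat{R}_{ij}|_{(m-\frac12)}) = \tfrac12(u_iu_i)|_m(\hat{F}_j|_{(m+\frac12)}-\hat{F}_j|_{(m-\frac12)})$ plus the divergence of the legitimate flux $\tfrac12\,\hat{F}_j\,u_i|_m u_i|_{(m+1)}$.

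Collecting all contributions, the implied semi-discrete kinetic-energy balance at cell $m$ is
\[
\partial_t\!\Bigl(\rho\tfrac{u_iu_i}{2}\Bigr)\Big|_m
+ \tfrac12(u_iu_i)|_m\Bigl[\partial_t\rho|_m + \tfrac{\delta\hat{C}_j}{\delta x_j}\Big|_m - \tfrac{\delta\hat{F}_j}{\delta x_j}\Big|_m\Bigr]
+ \tfrac{\delta}{\delta x_j}\bigl(\hat{\mathcal{K}}_j - \hat{\mathcal{K}}^{R}_j\bigr)\Big|_m
+ \bigl(\text{pressure work}\bigr) = 0 .
\]
Now I invoke the discrete continuity equation, the first row of Eq.~\eqref{eq:semi_discrete_system} / Eq.~\eqref{eq:mod_continuity}, whose numerical fluxes are precisely $\hat{C}_j$ and $\hat{F}_j$; this makes the bracket vanish identically, cell by cell. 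What survives are discrete divergences of genuine cell-face fluxes plus the reversible pressure work. Summing over all cells of the periodic domain $\Omega$, each discrete-divergence term telescopes to zero, so the convective and interface-regularization fluxes contribute nothing to the total kinetic energy $\sum_m \rho\,u_iu_i/2|_m\,\Delta V$, and in the stated limit total kinetic energy is conserved. In more than one dimension the argument is applied direction by direction, the index $j$ being summed at the end. I expect the only real obstacle to be the bookkeeping of the second step — exhibiting the leftover of the contracted convective (and regularization) difference as the difference of a \emph{symmetric} face flux, so that telescoping applies — together with making sure the mass fluxes used there are literally the ones appearing in the discrete continuity equation, which is what powers the final cancellation.
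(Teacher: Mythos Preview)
Your proposal is correct and follows essentially the same route as the paper: contract the semi-discrete momentum equation with $u_i|_m$, split $u_i|_m\,\overline{u_i}^{(m\pm\frac12)}$ into $\tfrac12(u_iu_i)|_m$ plus a symmetric face term, and use the discrete continuity equation with fluxes $\hat{C}_j,\hat{F}_j$ to kill the residual bracket, leaving only telescoping divergences and the excluded pressure work. The paper frames the argument slightly differently---it posits the target form Eq.~\eqref{eq:discrete_intermediate_kinetic} (with the as-yet-unspecified energy-equation fluxes $\hat{K}_j,\hat{T}_j$ playing the role of your constructed $\hat{\mathcal{K}}_j$), equates it to Eq.~\eqref{eq:discrete_u_dot_mom} under $\sum_m$, and then reads off \eqref{eq:mom_cons}--\eqref{eq:int_mom_cons}---but the algebra is identical to yours; your explicit construction of the symmetric face flux $\hat{\mathcal{K}}_j|_{(m+\frac12)}=\tfrac12\hat{C}_j|_{(m+\frac12)}u_i|_mu_i|_{(m+1)}$ is exactly what makes the paper's ``expanding the summation'' step work, and your parenthetical remark that $\hat{\mathcal{K}}_j$ need not coincide with the energy-equation flux $\hat{K}_j$ is precisely the gap that Theorem~\ref{theorem:local_consistency} closes.
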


%INCLUDE THE PROOF IN THE FULL JOURNAL ARTICLE

\begin{proof}

%Strategy: Follow the same steps that was taken in the continuous sense to arrive at the kinetic energy transport equation, to show that the kinetic energy is conserved in a discrete sense (Honein Thesis?).

Ignoring the time-stepping errors, we can obtain the semi-discrete kinetic energy equation by multiplying the semi-discrete momentum equation by $u_i$
%\begin{equation}
%\frac{\partial \rho u_i}{\partial t} + \frac{\hat{M}_{ij}\rvert_{(m+\frac{1}{2})} - \hat{M}_{ij}\rvert_{(m-\frac{1}{2})}}{\Delta x_j} -
%\frac{\hat{R}_{ij}\rvert_{(m+\frac{1}{2})} - \hat{R}_{ij}\rvert_{(m-\frac{1}{2})}}{\Delta x_j} +
%O.T. = 0.
%\end{equation}
%Now, multiplying the equation by $u_i$
\begin{equation}
u_i\frac{\partial \rho u_i}{\partial t} + u_i\rvert_m\left(\frac{\hat{M}_{ij}\rvert_{(m+\frac{1}{2})} - \hat{M}_{ij}\rvert_{(m-\frac{1}{2})}}{\Delta x_j}\right) -
u_i\rvert_m\left(\frac{\hat{R}_{ij}\rvert_{(m+\frac{1}{2})} - \hat{R}_{ij}\rvert_{(m-\frac{1}{2})}}{\Delta x_j}\right) +
u_i (\mathrm{O.T.}) = 0,
\label{eq:discrete_u_dot_mom}
\end{equation}
where $\mathrm{O.T.}$ represents other terms such as pressure, viscosity, surface tension, and gravity force. Here, Eq. \eqref{eq:discrete_u_dot_mom} is a discrete analogue of Eq. \eqref{eq:u_dot_mom}.
Now, suppose steps taken in going from Eq. \eqref{eq:u_dot_mom} to Eq. \eqref{eq:intermediate_kinetic} are valid discretely, then, discretely following the steps in going from Eq. \eqref{eq:intermediate_kinetic} to Eq. \eqref{eq:kineticf}, i.e., subtracting out the continuity equation contribution multiplied by $u_iu_i/2$ term from this equation, we will arrive at the discrete analogue of the kinetic energy equation in Eq. \ref{eq:kineticf} with the required form of the discrete divergence operators for the convective and interface-regularization terms. This is sufficient to show
\begin{equation*}
    \sum_m \frac{\partial \rho k}{\partial t}=0.
\end{equation*}

Therefore, the required condition to achieve discrete kinetic energy conservation is that the steps taken in going from Eq. \eqref{eq:u_dot_mom} to Eq. \eqref{eq:intermediate_kinetic} be valid discretely.
The discrete analogue of Eq. \eqref{eq:intermediate_kinetic} can be written as %\todo{when its just m, don't use brackets - make this consistent throughout}
\begin{equation}
\begin{aligned}
\frac{\partial \rho k}{\partial t} +
\frac{1}{2}u_iu_i\rvert_m\frac{\partial \rho}{\partial t} +
\frac{1}{2}u_iu_i\rvert_m\left(\frac{\hat{C}_{j}\rvert_{(m+\frac{1}{2})} - \hat{C}_{j}\rvert_{(m-\frac{1}{2})}}{\Delta x_j}\right) -
\frac{1}{2}u_iu_i\rvert_m\left(\frac{\hat{F}_{j}\rvert_{(m+\frac{1}{2})} - \hat{F}_{j}\rvert_{(m-\frac{1}{2})}}{\Delta x_j}\right)\\
+\left(\frac{\hat{K}_{j}\rvert_{(m+\frac{1}{2})} - \hat{K}_{j}\rvert_{(m-\frac{1}{2})}}{\Delta x_j}\right) -
\left(\frac{\hat{T}_{j}\rvert_{(m+\frac{1}{2})} - \hat{T}_{j}\rvert_{(m-\frac{1}{2})}}{\Delta x_j}\right) +
u_i(\mathrm{O.T.}) = 0.
\end{aligned}
\label{eq:discrete_intermediate_kinetic}
\end{equation}
Therefore, Eq. \eqref{eq:discrete_intermediate_kinetic} should be equivalent to Eq. \eqref{eq:discrete_u_dot_mom} to achieve discrete kinetic energy conservation.
Seeking global conservation of kinetic energy, we equate Eq. \eqref{eq:discrete_u_dot_mom} and Eq. \eqref{eq:discrete_intermediate_kinetic} under the summation over all grid cells as
\begin{equation}
\begin{aligned}
    \sum_m \left\{u_i\rvert_m\left(\frac{\hat{M}_{ij}\rvert_{(m+\frac{1}{2})} - \hat{M}_{ij}\rvert_{(m-\frac{1}{2})}}{\Delta x_j}\right) -
u_i\rvert_m\left(\frac{\hat{R}_{ij}\rvert_{(m+\frac{1}{2})} - \hat{R}_{ij}\rvert_{(m-\frac{1}{2})}}{\Delta x_j}\right)\right\}\\
= \sum_m \Bigg\{\frac{1}{2}u_iu_i\rvert_m\left(\frac{\hat{C}_{j}\rvert_{(m+\frac{1}{2})} - \hat{C}_{j}\rvert_{(m-\frac{1}{2})}}{\Delta x_j}\right) -
\frac{1}{2}u_iu_i\rvert_m\left(\frac{\hat{F}_{j}\rvert_{(m+\frac{1}{2})} - \hat{F}_{j}\rvert_{(m-\frac{1}{2})}}{\Delta x_j}\right)\\
+\left(\frac{\hat{K}_{j}\rvert_{(m+\frac{1}{2})} - \hat{K}_{j}\rvert_{(m-\frac{1}{2})}}{\Delta x_j}\right) -\left(\frac{\hat{T}_{j}\rvert_{(m+\frac{1}{2})} - \hat{T}_{j}\rvert_{(m-\frac{1}{2})}}{\Delta x_j}\right)\Bigg\}.
\end{aligned}
\end{equation}
$\hat{K}_j\rvert_{(m\pm\frac{1}{2})}$ and $\hat{T}_j\rvert_{(m\pm\frac{1}{2})}$ terms form telescoping series and therefore their contribution cancels out. Expanding the summation and separating the contributions from the convective and interface-regularization terms, we arrive at the conditions
\begin{equation}
\hat{M}_j\rvert_{(m\pm\frac{1}{2})} = \hat{C}_j\rvert_{(m\pm\frac{1}{2})}
\xbar{u_i}^{(m\pm\frac{1}{2})},
%\frac{u_i\rvert_{(m\pm1)} + u_i\rvert_{m}}{2},
\label{eq:mom_cons_proof}
\end{equation}
and 
\begin{equation}
    \hat{R}_j\rvert_{(m\pm\frac{1}{2})} = \hat{F}_j\rvert_{(m\pm\frac{1}{2})}
    \xbar{u_i}^{(m\pm\frac{1}{2})},
    %\frac{u_i\rvert_{(m\pm1)} + u_i\rvert_{m}}{2}.
    \label{eq:int_mom_cons_proof}
\end{equation}
which concludes the proof.
\end{proof}

The consistency condition in Eq. \eqref{eq:mom_cons} is similar to the kinetic energy preserving (KEP) flux by \citet{jameson2008formulation}, which is a sufficient condition for numerical stability of incompressible flows because of the absence of exchange of energy between kinetic energy and internal energy, unlike for compressible flows. For compressible flows, an additional constraint on the entropy of the system is needed to achieve stable numerical simulations.

There are many approaches to achieve (exact or approximate) discrete conservation of entropy. One option would be to directly solve a transport equation for entropy instead of an energy equation, which was explored in \citet{honein2005}. They also proposed a modification to the total energy equation, in \citet{honein2004higher}, which would result in higher conservation of entropy ($\rho s$ and $\rho s^2$) implicitly. Another option would be to solve an internal energy equation or a total energy equation with consistent numerical discretization that results in conservation of entropy implicitly. This approach has been previously used by \citet{chandrashekar2013kinetic}, \citet{kuya2018kinetic}, and \citet{coppola2019numerically}. This approach that involves solving an energy equation with implicit conservation of entropy is a more preferred option since it also results in the primary conservation of energy, which is crucial for compressible flows. Therefore, we choose to adopt the second option in this work.

Although, the consistency conditions in Theorem \ref{theorem:global_consistency} result in discretely conserving the total kinetic energy, in the absence of reversible pressure work, this is not a sufficient condition for numerical stability of compressible flows. 
%This is due to the exchange of energy between kinetic energy and internal energy through reversible pressure work. 
%but not a sufficient condition for compressible flows.  \todo{expand}
This is because an incorrect local exchange of energy between kinetic energy and internal energy could spuriously contribute to entropy of the system.
%and could also contribute to total KE when the pressure dilatation effects are not negligible. \todo{expand} 
Hence, the first step toward achieving conservation of entropy implicitly is to propose additional constraints on the transport of kinetic energy, such that there is no spurious local exchange of energy between kinetic energy and internal energy (a ``local" kinetic energy preservation).  
%\todo{add discussion on local KE conservation}
To this end, Theorem \ref{theorem:local_consistency} proposes consistency conditions between
$\hat{C}_j$ and $\hat{K}_{j}$, and 
$\hat{F}_j$ and $\hat{T}_{j}$, such that the discrete convective and interface-regularization terms do not contribute spuriously to the local transport of kinetic energy.

\begin{theorem}
The discrete kinetic energy of the system can be locally conserved in the absence of reversible pressure work, viscosity, surface tension, gravity, and time-differencing errors if the following consistency conditions are satisfied: 
those between the convective fluxes of mass $\hat{C}$ and kinetic energy $\hat{K}$, 
\begin{equation}
    \hat{K}_j\rvert_{(m\pm\frac{1}{2})} = \hat{C}_j\rvert_{(m\pm\frac{1}{2})}
    \frac{u_i\rvert_{(m\pm1)} u_i\rvert_{m}}{2},
    %\frac{1}{2}u_i\rvert_{(m\pm1)} u_i\rvert_{m},
    \label{eq:ke_cons}
\end{equation}
and between the interface-regularization fluxes of mass $\hat{F}$ and kinetic energy $\hat{T}$,  
\begin{equation}
    \hat{T}_j\rvert_{(m\pm\frac{1}{2})} = \hat{F}_j\rvert_{(m\pm\frac{1}{2})} \frac{u_i\rvert_{(m\pm1)} u_i\rvert_{m}}{2},
    \label{eq:int_ke_cons}
\end{equation}
where $u$ is the velocity, $m$ is the grid index, and $i$ and $j$ are the indices in the Einstein notation.

\label{theorem:local_consistency}
\end{theorem}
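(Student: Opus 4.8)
\section*{Proof proposal}

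The plan is to repeat, now \emph{pointwise} rather than only under a global sum, the passage from the $u_i$-dotted momentum equation to the intermediate kinetic-energy equation [the discrete analogues of Eqs.~\eqref{eq:u_dot_mom} and \eqref{eq:intermediate_kinetic}], and to read off the forms of $\hat{K}_j$ and $\hat{T}_j$ that make this passage exact cell by cell. Concretely, I would start from Eq.~\eqref{eq:discrete_u_dot_mom} and, in its convective contribution $u_i\rvert_m(\hat{M}_{ij}\rvert_{(m+\frac{1}{2})}-\hat{M}_{ij}\rvert_{(m-\frac{1}{2})})/\Delta x_j$, substitute the momentum consistency condition \eqref{eq:mom_cons} of Theorem~\ref{theorem:global_consistency}, $\hat{M}_{ij}\rvert_{(m\pm\frac{1}{2})}=\hat{C}_j\rvert_{(m\pm\frac{1}{2})}\,\xbar{u_i}^{(m\pm\frac{1}{2})}$. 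Expanding the face averages $\xbar{u_i}^{(m\pm\frac{1}{2})}=(u_i\rvert_m+u_i\rvert_{(m\pm1)})/2$ and distributing the factor $u_i\rvert_m$, the term splits into two pieces: one equal to $\tfrac12 u_iu_i\rvert_m$ times the discrete divergence $(\hat{C}_j\rvert_{(m+\frac{1}{2})}-\hat{C}_j\rvert_{(m-\frac{1}{2})})/\Delta x_j$, and a remainder $\big(\hat{C}_j\rvert_{(m+\frac{1}{2})}\,u_i\rvert_{(m+1)}u_i\rvert_m/2-\hat{C}_j\rvert_{(m-\frac{1}{2})}\,u_i\rvert_{(m-1)}u_i\rvert_m/2\big)/\Delta x_j$.

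The first piece is exactly the discrete continuity-times-kinetic-energy term already appearing in Eq.~\eqref{eq:discrete_intermediate_kinetic}, and it cancels when the discrete continuity equation scaled by $u_iu_i\rvert_m/2$ is subtracted. The remainder is a discrete divergence of a single face quantity, so identifying it with $(\hat{K}_j\rvert_{(m+\frac{1}{2})}-\hat{K}_j\rvert_{(m-\frac{1}{2})})/\Delta x_j$ forces precisely the claimed relation $\hat{K}_j\rvert_{(m\pm\frac{1}{2})}=\hat{C}_j\rvert_{(m\pm\frac{1}{2})}\,u_i\rvert_{(m\pm1)}u_i\rvert_m/2$. Carrying out the identical manipulation on the interface-regularization term of Eq.~\eqref{eq:discrete_u_dot_mom}, now using the mass-flux/momentum-flux consistency condition \eqref{eq:int_mom_cons}, $\hat{R}_{ij}\rvert_{(m\pm\frac{1}{2})}=\hat{F}_j\rvert_{(m\pm\frac{1}{2})}\,\xbar{u_i}^{(m\pm\frac{1}{2})}$, yields in the same way $\hat{T}_j\rvert_{(m\pm\frac{1}{2})}=\hat{F}_j\rvert_{(m\pm\frac{1}{2})}\,u_i\rvert_{(m\pm1)}u_i\rvert_m/2$. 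With these two choices Eq.~\eqref{eq:discrete_u_dot_mom} has been rewritten, point by point, as Eq.~\eqref{eq:discrete_intermediate_kinetic}; subtracting the discrete continuity equation times $u_iu_i\rvert_m/2$ then leaves a discrete kinetic-energy equation whose convective and interface-regularization contributions are pure discrete divergences of $\hat{K}_j$ and $\hat{T}_j$ at \emph{every} grid point, which is the statement of local conservation.

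I expect the only genuine subtlety to be the non-flux term: exactly as in Theorem~\ref{theorem:global_consistency}, the argument presupposes that $u_i\rvert_m\,\partial_t(\rho u_i)$ can be written as $\partial_t(\rho k)+\tfrac12 u_iu_i\rvert_m\,\partial_t\rho$, i.e.\ that the ``discrete product rule'' used in going from Eq.~\eqref{eq:u_dot_mom} to Eq.~\eqref{eq:intermediate_kinetic} holds; this is precisely the hypothesis ``in the absence of time-differencing errors,'' and I would simply invoke it. The spatial bookkeeping is elementary once the averages are expanded, but one must check that the remainder genuinely telescopes as a divergence of a \emph{single} face value — this is what forces the asymmetric product $u_i\rvert_{(m\pm1)}u_i\rvert_m/2$ in $\hat{K}_j$ rather than a more symmetric alternative such as $\xbar{u_iu_i}^{(m\pm\frac{1}{2})}$ or $(\xbar{u_i}^{(m\pm\frac{1}{2})})^2$, which would leave uncancelled cross terms and destroy local (though not global) conservation. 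I would close by noting that Theorem~\ref{theorem:local_consistency} refines Theorem~\ref{theorem:global_consistency}: the latter only required $\hat{K}_j$ and $\hat{T}_j$ to telescope away under the global sum, whereas the former pins them down uniquely so that the kinetic-energy balance holds locally.
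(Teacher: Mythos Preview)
Your proposal is correct and follows essentially the same route as the paper's own proof: equate Eq.~\eqref{eq:discrete_u_dot_mom} and Eq.~\eqref{eq:discrete_intermediate_kinetic} pointwise rather than under a global sum, invoke the momentum--mass consistency relations \eqref{eq:mom_cons}, \eqref{eq:int_mom_cons} from Theorem~\ref{theorem:global_consistency}, and read off the forms of $\hat{K}_j$ and $\hat{T}_j$. Your write-up is in fact more explicit than the paper's about \emph{why} the asymmetric product $u_i\rvert_{(m\pm1)}u_i\rvert_m/2$ is forced --- the paper simply states ``making use of the conditions \ldots\ we arrive at'' --- but the logical skeleton is identical.
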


\begin{proof}
Following the proof of Theorem \ref{theorem:global_consistency}, and now instead seeking local conservation of kinetic energy, we equate Eq. \eqref{eq:discrete_u_dot_mom} and Eq. \eqref{eq:discrete_intermediate_kinetic} in a pointwise manner as
\begin{equation}
\begin{aligned}
    u_i\rvert_m\left(\frac{\hat{M}_{ij}\rvert_{(m+\frac{1}{2})} - \hat{M}_{ij}\rvert_{(m-\frac{1}{2})}}{\Delta x_j}\right) -
u_i\rvert_m\left(\frac{\hat{R}_{ij}\rvert_{(m+\frac{1}{2})} - \hat{R}_{ij}\rvert_{(m-\frac{1}{2})}}{\Delta x_j}\right)\\
= \frac{1}{2}u_iu_i\rvert_m\left(\frac{\hat{C}_{j}\rvert_{(m+\frac{1}{2})} - \hat{C}_{j}\rvert_{(m-\frac{1}{2})}}{\Delta x_j}\right) -
\frac{1}{2}u_iu_i\rvert_m\left(\frac{\hat{F}_{j}\rvert_{(m+\frac{1}{2})} - \hat{F}_{j}\rvert_{(m-\frac{1}{2})}}{\Delta x_j}\right)\\
+\left(\frac{\hat{K}_{j}\rvert_{(m+\frac{1}{2})} - \hat{K}_{j}\rvert_{(m-\frac{1}{2})}}{\Delta x_j}\right) -\left(\frac{\hat{T}_{j}\rvert_{(m+\frac{1}{2})} - \hat{T}_{j}\rvert_{(m-\frac{1}{2})}}{\Delta x_j}\right).
\end{aligned}
\end{equation}
Making use of the conditions in Eqs. \eqref{eq:mom_cons_proof} and \eqref{eq:int_mom_cons_proof}, and separating the contributions from the convective and interface-regularization terms, we arrive at the conditions
\begin{equation}
    \hat{K}_j\rvert_{(m\pm\frac{1}{2})} = \hat{C}_j\rvert_{(m\pm\frac{1}{2})} \frac{u_i\rvert_{(m\pm1)} u_i\rvert_{m}}{2},
    \label{eq:ke_cons_proof}
\end{equation}
and
\begin{equation}
    \hat{T}_j\rvert_{(m\pm\frac{1}{2})} = \hat{F}_j\rvert_{(m\pm\frac{1}{2})} \frac{u_i\rvert_{(m\pm1)} u_i\rvert_{m}}{2},
    \label{eq:int_ke_cons_proof}
\end{equation}
which concludes the proof.
\end{proof}

%Moreover, ... \todo{needs transition here}
When individual mass balance equations in Eq. \eqref{eq:massf} are solved for each of the phases, they are supposed to be solved consistently with other equations. Hence, additional consistency conditions can be derived for these. 
Note that the mass balance equations for phases $1$ and $2$ were summed to obtain the total mixture mass balance equation in Eq. \eqref{eq:mod_continuity}, and therefore, the same criterion should hold for numerical fluxes, e.g., as 
\begin{equation}
    \sum_{l=1}^2 \left.\frac{\delta u_j m_l}{\delta x_j}\right\vert_m = \left.\frac{\delta u_j \rho}{\delta x_j}\right\vert_m.
\end{equation}
This results in the consistency condition between $\hat{C}_j$ and $\hat{Q}_{lj}$ as
\begin{equation}
    \sum_{l=1}^2 \hat{Q}_{lj}\rvert_{(m\pm 1/2)} = \hat{C}_j\rvert_{(m\pm 1/2)}.
\end{equation}
Similarly, the consistency condition between $\hat{F}_j$ and $\hat{B}_{lj}$ can be written as
\begin{equation}
    \sum_{l=1}^2 \hat{B}_{lj}\rvert_{(m\pm 1/2)} = \hat{F}_j\rvert_{(m\pm 1/2)}.
\end{equation}
%\todo{add equation here}
In the limit of incompressibility, mass balance equations in Eq. \eqref{eq:massf} reduce to the volume fraction equation in Eq. \eqref{eq:volumef} \citep{jain2020conservative}. This can be used to obtain a consistency condition between $\hat{C}_j$ and $\hat{\Phi}_j$. The consistency condition states that the split-numerical flux form used in $\hat{C}_j\rvert_{(m\pm\frac{1}{2})}$ should be the same as the one used in $\hat{\Phi}_j\rvert_{(m\pm\frac{1}{2})}$.  

%. The full expressions for $\hat{F}_j\rvert_{(m\pm 1/2)}$, $\hat{R}_{ij}\rvert_{(m\pm 1/2)}$, and $\hat{T}_j\rvert_{(m\pm 1/2)}$ are not included here for brevity and are deferred to the journal article \citep{jain2021kinetic}.

%Consistency condition between convective mass and momentum fluxes - first proposed by Jameson. (Eq 31 in Kuya) 

%Consistency condition between convective mass and kinetic energy fluxes - New. But if we plug in for C, then we obtain the scheme that was previously used by Chandrashekar and Kuya.

%Interface-regularization consistency conditions are new. Relevant for two-phase flows.

%\begin{itemize}
%    \item derive discrete terms for mass, momentum and kinetic energy regularization terms. In the process, also derive the forms proposed by Kuya for single phase.
%    \item pressure term is same as Kuya
%\end{itemize}

\subsection{Consistency conditions for internal energy, pressure, and enthalpy fluxes \label{sec:ie_consistency}}

In addition to satisfying the local kinetic energy preservation in Theorem \ref{theorem:local_consistency}, in order to achieve implicit conservation of entropy without having to solve for Eqs. \eqref{eq:phase_entropy} or \eqref{eq:entropyf}, all the steps taken in Section \ref{sec:entropy_analytical} in arriving at Eq. \eqref{eq:entropyf} starting from the internal energy in Eq, \eqref{eq:ief} need to hold discretely. The mass balance equation and the volume-fraction advection equations used in Eqs. \eqref{eq:entropy_int1} and Eq. \eqref{eq:entropy_int2} are directly solved and hence are satisfied discretely. The only remaining constraint is that the method needs to satisfy the interface-equilibrium condition (IEC).  

%It is known that the IEC needs to be satisfied to achieve mechanical equilibrium and thermodynamic consistency. 

According to \citet{abgrall1996prevent}, if $u^k\rvert_m=u_0$ and $p^k\rvert_m = p_0$, where $k$ is the time-step index, any model or a numerical scheme that satisfies $u\rvert_m^{k+1}=u_0$ and $p\rvert_m^{k+1}=p_0$, $\forall m$, is said to satisfy the IEC. It is known that the IEC needs to be satisfied to avoid spurious solutions and to achieve stable numerical solutions. \citet{jain2020conservative} derived the analytical formulation of the interface-regularization terms in the conservative diffuse-interface method such that the resulting internal energy equation in Eq. \eqref{eq:ief} satisfies the IEC, and showed that this results in approximate conservation of entropy (analytically). 
We here, seek fluxes that verify discretely the IEC which results in improved discrete entropy preservation. To do so, we formulate consistency conditions between $\hat{I}_j$ and $\hat{C}_j$, and $\hat{H}_j$ and $\hat{F}_j$ in Lemma \ref{lem:IEC_consistency} and the proof is presented in Appendix B. 

\newtheorem{lemma1}{Lemma}[section]

\begin{lemma1}
The consistency conditions between $\hat{I}_j$ and $\hat{C}_j$, and $\hat{H}_j$ and $\hat{F}_j$, required to satisfy IEC, can be stated as: the split-numerical flux form used in $\hat{I}_j\rvert_{(m\pm\frac{1}{2})}$ should be the same as the one in $\hat{C}_j\rvert_{(m\pm\frac{1}{2})}$, and the split-numerical flux form used in $\hat{H}_j\rvert_{(m\pm\frac{1}{2})}$ should be the same as the one in $\hat{F}_j\rvert_{(m\pm\frac{1}{2})}$.
\label{lem:IEC_consistency}
\end{lemma1}

%\todo{rewrite this into a theorem}

%Adopting the consistency conditions used in Kuya 2018 in obtaining the internal energy and pressure terms.

%\begin{itemize}
%    \item new internal energy form satisfies IEC, but not the one in Kuya.
%    \item splitting $\rho e$ results in spurious mixing of internal energy at the interface.
%\end{itemize}
%
%\begin{equation}
%    \tilde{I}_j\rvert_{(m\pm\frac{1}{2})} =  \frac{\rho e\rvert_{(m\pm1)} + \rho e\rvert_{m}}{2} \frac{u_i\rvert_{(m\pm1)} + u_i\rvert_{m}}{2},
%    \label{eq:ke_cons}
%\end{equation}

%\todo{need to cite Kuya here}

Finally, for the pressure flux terms, a discrete consistency condition can be derived as follows. The sum of pressure work in the transport equation for kinetic energy in Eq. \eqref{eq:kineticf} and the pressure dilatation term in the transport equation for the internal energy in Eq. \eqref{eq:ief} gives the conservative pressure term in the total energy equation in Eq. \eqref{eq:energyf}, an idea that was also used by \citet{kok2009high,kuya2018kinetic}. This condition should be satisfied discretely, and can be written as
\begin{equation*}
        \left.u_j\frac{\delta p}{\delta x_j}\right\vert_m + \left.p \frac{\delta u_j}{\delta x_j}\right\vert_m = \left.\frac{\delta p u_j}{\delta x_j}\right\vert_m
\end{equation*}
Writing this in terms of fluxes, a consistency condition between $\hat{W}_j$ and $\hat{P}_{ij}$ is given by
\begin{equation*}
    \hat{P}_{ij}\rvert_{(m\pm\frac{1}{2})} \xbar{u_i}^{(m\pm\frac{1}{2})} + p\rvert_{m} \xbar{u_j}^{(m\pm\frac{1}{2})} = \hat{W}_j\rvert_{(m\pm\frac{1}{2})}.
\end{equation*}
%\todo{see if this is consistent with the high order KEEP scheme!}

%ADD THESE BELOW SECTIONS FOR JOURNAL

\subsection{Summary of the consistency conditions\label{sec:consistency_sum}}

The consistency conditions used in this work can be summarized as follows:
\begin{enumerate}
    \item The kinetic energy in the system is discretely conserved, globally and locally, in the absence of reversible pressure work, dissipative mechanisms, and time-differencing errors, if the following relations are satisfied 
    \begin{equation*}
    \hat{M}_{ij}\rvert_{(m\pm\frac{1}{2})} = \hat{C}_j\rvert_{(m\pm\frac{1}{2})}
    \xbar{u_i}^{(m\pm\frac{1}{2})},
    \end{equation*}
    \begin{equation*}
    \hat{K}_j\rvert_{(m\pm\frac{1}{2})} = \hat{C}_j\rvert_{(m\pm\frac{1}{2})}
    \frac{u_i\rvert_{(m\pm1)} u_i\rvert_{m}}{2},
    \end{equation*}
    \begin{equation*}
    \hat{R}_{ij}\rvert_{(m\pm\frac{1}{2})} = \hat{F}_j\rvert_{(m\pm\frac{1}{2})}
    \xbar{u_i}^{(m\pm\frac{1}{2})},
    \end{equation*}
    \begin{equation*}
    \hat{T}_j\rvert_{(m\pm\frac{1}{2})} = \hat{F}_j\rvert_{(m\pm\frac{1}{2})} \frac{u_i\rvert_{(m\pm1)} u_i\rvert_{m}}{2}.
    \end{equation*}
    %This can be derived by taking the discrete dot product of the velocity with the momentum equation. This results in the relations
    \item The mixture mass flux can be obtained by summing the mass fluxes of the individual phases, which results in the relations
    \begin{equation*}
    \sum_{l=1}^2 \hat{Q}_{lj}\rvert_{(m\pm 1/2)} = \hat{C}_j\rvert_{(m\pm 1/2)},
    \end{equation*}
    \begin{equation*}
    \sum_{l=1}^2 \hat{B}_{lj}\rvert_{(m\pm 1/2)} = \hat{F}_j\rvert_{(m\pm 1/2)}.
    \end{equation*}
    \item The split-numerical flux form used in $\hat{C}_j\rvert_{(m\pm\frac{1}{2})}$ should be the same as the ones used in $\hat{\Phi}_j\rvert_{(m\pm\frac{1}{2})}$ and $\hat{I}_j\rvert_{(m\pm\frac{1}{2})}$ to satisfy IEC.
     \item The split-numerical flux form used in $\hat{F}_j\rvert_{(m\pm\frac{1}{2})}$ should be the same as the one used in $\hat{H}_j\rvert_{(m\pm\frac{1}{2})}$ to satisfy IEC.
     \item The sum of pressure work in the kinetic energy equation and the pressure dilatation term in the internal energy equation should be equal to the conservative pressure-diffusion term in the total energy equation. This results in the relation
    \begin{equation*}
    \hat{P}_{ij}\rvert_{(m\pm\frac{1}{2})} \xbar{u_i}^{(m\pm\frac{1}{2})} + p\rvert_{m} \xbar{u_j}^{(m\pm\frac{1}{2})} = \hat{W}_j\rvert_{(m\pm\frac{1}{2})}.
    \end{equation*}
\end{enumerate}

The consistency conditions $1$\textendash $5$ completely define the numerical scheme, provided $\hat{C}_j$ and $\hat{F}_j$ are chosen. Note that for the single-phase Navier-Stokes system, the consistency conditions in $1$, $3$, and $5$ are sufficient to completely define the numerical scheme, provided $\hat{C}_j$ is chosen.

\subsection{The proposed KEEP scheme \label{sec:keep}}

The convective mass flux is a product of $\rho$ and $u_j$. Hence, using the quadratic-split form defined in Eq. \eqref{eq:split_form} for $\hat{C}_j$, we can derive the numerical fluxes that result in a consistent set of numerical discretization for the single-phase Navier-Stokes system, using the consistency conditions in Section \ref{sec:consistency_sum}, as %\todo{splitting is done since it minimizes the aliasing error}
\begin{equation}
\begin{aligned}
    &\hat{C}_j\rvert_{(m\pm\frac{1}{2})} = \xbar{\rho}^{(m\pm\frac{1}{2})} \xbar{u}_{j}^{(m\pm\frac{1}{2})},\\
    &\hat{M}_{ij}\rvert_{(m\pm\frac{1}{2})} = \xbar{\rho}^{(m\pm\frac{1}{2})} \xbar{u}_j^{(m\pm\frac{1}{2})} \xbar{u}_i^{(m\pm\frac{1}{2})},\\
    &\hat{K}_j\rvert_{(m\pm\frac{1}{2})} = \xbar{\rho}^{(m\pm\frac{1}{2})} 
    \xbar{u}_j^{(m\pm\frac{1}{2})} \frac{u_i\rvert_{(m\pm1)} u_i\rvert_{m}}{2},\\
    &\hat{I}_j\rvert_{(m\pm\frac{1}{2})} =  (\xbar{\rho e})^{(m\pm\frac{1}{2})} \xbar{u}_j^{(m\pm\frac{1}{2})},\\
    &\hat{P}_{ij}\rvert_{(m\pm \frac{1}{2})} = \xbar{p}^{(m\pm\frac{1}{2})} \delta_{ij},\\
    &\hat{W}_j\rvert_{(m\pm\frac{1}{2})} =  \frac{u_j\rvert_{(m\pm1)} p\rvert_{m} + u_j\rvert_{m} p\rvert_{(m\pm1)}}{2}.
\end{aligned}
\label{eq:single_phase_keep_scheme}
\end{equation}
The mass-regularization flux in Eq. \eqref{eq:mod_continuity} consists of a diffusive flux term and a non-linear sharpening flux term, that can be written as
\begin{equation*}
    \hat{F}_j\rvert_{(m\pm \frac{1}{2})} = \sum_{l=1}^2 \reallywidehat{\rho_l a_{lj}}\rvert_{(m\pm\frac{1}{2})} = \Gamma \sum_{l=1}^2 \left.\left\{\epsilon\underbrace{\left(\reallywidehat{\rho_l \frac{\partial \phi_l}{\partial x_j}}\right)}_{\substack{\textrm{diffusive}\\ \textrm{flux}}} - \underbrace{\left(\reallywidehat{\rho_l\phi_l(1 - \phi_l) n_{lj}}\right)}_{\substack{\textrm{sharpening}\\ \textrm{flux}}} \right\}\right\vert_{(m\pm\frac{1}{2})}.
\end{equation*}
The sharpening flux is a product of four variables $\rho_l$, $\phi_l$, $1 - \phi_l$, and $n_{li}$. Now, approximating this term using the quartic-split form defined in Eq. \eqref{eq:quartic_split_form}, we can derive the remaining numerical fluxes, using the consistency conditions in Section \ref{sec:consistency_sum}, as 
\begin{equation}
\begin{aligned}
    &\hat{\Phi}_j\rvert_{(m\pm\frac{1}{2})} =  \xbar{\phi}^{(m\pm\frac{1}{2})} \xbar{u}_j^{(m\pm\frac{1}{2})},\\
    %\hat{F}_j\rvert_{(m\pm 1/2)} = \Gamma \left\{ \epsilon \sum_{l=1}^2 \bar{\rho_l} \reallywidehat{ \left(\frac{\partial \phi_l}{\partial x_j}\right)}\rvert_{(m\pm\frac{1}{2})} - \sum_{l=1}^2 \bar{\rho}_l \bar{\phi}_l(1 - \bar{\phi}_l) \bar{n}_{lj} \right\},\\
    &\hat{F}_j\rvert_{(m\pm 1/2)} = \sum_{l=1}^2 \left(\xbar{\rho_l}^{(m\pm\frac{1}{2})} \reallywidehat{a}_{lj} \rvert_{(m\pm\frac{1}{2})} \right),\\
    %\hat{R}_{ij}\rvert_{(m\pm 1/2)} = \Gamma \left\{\epsilon  \sum_{l=1}^2 \bar{\rho_l} \reallywidehat{ \left(\frac{\partial \phi_l}{\partial x_j}\right)}\rvert_{(m\pm\frac{1}{2})} - \sum_{l=1}^2 \bar{\rho}_l \bar{\phi}_l(1 - \bar{\phi}_l) \bar{n}_{lj} \right\} \bar{u}_i,\\
    &\hat{R}_{ij}\rvert_{(m\pm 1/2)} = \left\{ \sum_{l=1}^2 \left(\xbar{\rho_l}^{(m\pm\frac{1}{2})} \reallywidehat{a}_{lj} \rvert_{(m\pm\frac{1}{2})} \right) \right\} \xbar{u}_i^{(m\pm\frac{1}{2})},\\
    %\hat{T}_j\rvert_{(m\pm\frac{1}{2})} = \Gamma \left\{ \epsilon \sum_{l=1}^2 \bar{\rho_l} \reallywidehat{ \left( \frac{\partial \phi_l}{\partial x_j}\right)}\rvert_{(m\pm\frac{1}{2})} - \sum_{l=1}^2 \bar{\rho}_l \bar{\phi}_l(1 - \bar{\phi}_l) \bar{n}_{lj} \right\} \frac{u_i\rvert_{(m\pm1)} u_i\rvert_{m}}{2},
    &\hat{T}_j\rvert_{(m\pm\frac{1}{2})} = \left\{ \sum_{l=1}^2 \left(\xbar{\rho_l}^{(m\pm\frac{1}{2})} \reallywidehat{a}_{lj} \rvert_{(m\pm\frac{1}{2})} \right) \right\} \frac{u_i\rvert_{(m\pm1)} u_i\rvert_{m}}{2},\\
    &\hat{H}_j\rvert_{(m\pm 1/2)} = \sum_{l=1}^2 \left\{\left(\xbar{\rho_l h_l}\right)^{(m\pm\frac{1}{2})} \reallywidehat{a}_{lj} \rvert_{(m\pm\frac{1}{2})} \right\},\\
\end{aligned}
\label{eq:two_phase_keep_scheme}
\end{equation}
where
\begin{equation*}
    \hat{a}_{lj}\rvert_{(m\pm\frac{1}{2})} = \Gamma \left\{ \epsilon \reallywidehat{ \left( \frac{\partial \phi_l}{\partial x_j}\right)}\rvert_{(m\pm\frac{1}{2})} - \xbar{\phi}_l^{(m\pm\frac{1}{2})}(1 - \xbar{\phi}_l^{(m\pm\frac{1}{2})}) \xbar{n}_{lj}^{(m\pm\frac{1}{2})} \right\}.
\end{equation*}
The numerical fluxes in Eq. \eqref{eq:two_phase_keep_scheme}, in addition to Eq. \eqref{eq:single_phase_keep_scheme}, provides a consistent set of discretization for a two-phase system.  

Here, a quadratic-split form is used for $\hat{C}_j$ as opposed to the divergence form because this results in reduced aliasing errors (see Appendix C).  
In Eq. \eqref{eq:single_phase_keep_scheme}, the numerical fluxes $\hat{M}_{ij}$ and $\hat{I}_j$ assume cubic-split and quadratic-split forms, respectively; and in Eq. \eqref{eq:two_phase_keep_scheme}, $\hat{R}_{ij}$ and $\hat{H}_j$ assume quintic-split and quartic-split forms, respectively. In the context of single-phase flows, alternate discrete forms have been proposed for $\hat{I}_j$ in the literature that do not satisfy the consistency conditions proposed in this work. \citet{chandrashekar2013kinetic} proposed
\begin{equation}
    %\tilde{I}_j\rvert_{(m\pm\frac{1}{2})} =  \frac{\rho \rvert_{(m\pm1)} + \rho\rvert_{m}}{2}
    %\frac{2 e\rvert_{(m\pm1)} e\rvert_{m}}{e \rvert_{(m\pm1)} + e\rvert_{m}}
    %\frac{u_j\rvert_{(m\pm1)} + u_j\rvert_{m}}{2},
    \hat{I}_j\rvert_{(m\pm\frac{1}{2})} =  \xbar{\rho}^{(m\pm\frac{1}{2})}
    \left(\frac{2 e\rvert_{(m\pm1)} e\rvert_{m}}{e \rvert_{(m\pm1)} + e\rvert_{m}}\right)
    \xbar{u}_j^{(m\pm\frac{1}{2})},
    \label{eq:ie_cons_chand}
\end{equation}
as the form of $\hat{I}_j$ in their approximate entropy conserving flux formulation, whereas, \citet{kennedy2008reduced}, and \citet{kuya2018kinetic} used 
\begin{equation}
    %\tilde{I}_j\rvert_{(m\pm\frac{1}{2})} =  \frac{\rho \rvert_{(m\pm1)} + \rho\rvert_{m}}{2}
    %\frac{e \rvert_{(m\pm1)} + e\rvert_{m}}{2}
    %\frac{u_j\rvert_{(m\pm1)} + u_j\rvert_{m}}{2},
    \hat{I}_j\rvert_{(m\pm\frac{1}{2})} =  \xbar{\rho}^{(m\pm\frac{1}{2})} \xbar{e}^{(m\pm\frac{1}{2})} \xbar{u}_j^{(m\pm\frac{1}{2})}.
    \label{eq:ie_cons_kuya}
\end{equation}
We, here, use the form
\begin{equation}
    \hat{I}_j\rvert_{(m\pm\frac{1}{2})} =  (\xbar{\rho e})^{(m\pm\frac{1}{2})} \xbar{u}_j^{(m\pm\frac{1}{2})},
    \label{eq:ie_cons_jain}
\end{equation}
because this is required to satisfy IEC (consistency condition 3).
We denote the $\hat{I}_j$ proposed here in Eq. \eqref{eq:ie_cons_jain} as $QS$, Eq. \eqref{eq:ie_cons_chand} as $CS-H$, and Eq. \eqref{eq:ie_cons_kuya} as $CS$, where $QS$ and $CS$ stands for cubic-split and quadratic-split forms, respectively; and $H$ is used to denote that a harmonic average is used instead of an arithmetic average in Eq. \eqref{eq:ie_cons_chand}. The different split-numerical flux forms of $\hat{I}_j$ are summarized in Table \ref{tab:ie_flux}. Similarly, for $\hat{H}_j$, other splitting forms can be derived, but the form in Eq. \eqref{eq:two_phase_keep_scheme} is used here because this is required to satisfy IEC (consistency condition 4). %\todo{add a table here for QS, CS, and CS-H}
% Please add the following required packages to your document preamble:
% \usepackage{booktabs}
\begin{table}[]
\centering
\begin{tabular}{@{}|l|l|l|l|@{}}
\toprule
 &
  CS-H &
  CS &
  QS \\ \midrule
$\hat{I}_j\rvert_{(m\pm\frac{1}{2})}$ &
  $\xbar{\rho}^{(m\pm\frac{1}{2})}\left(\frac{2 e\rvert_{(m\pm1)} e\rvert_{m}}{e \rvert_{(m\pm1)} + e\rvert_{m}}\right)\xbar{u}_j^{(m\pm\frac{1}{2})}$ &
  $\xbar{\rho}^{(m\pm\frac{1}{2})} \xbar{e}^{(m\pm\frac{1}{2})} \xbar{u}_j^{(m\pm\frac{1}{2})}$ &
  $(\xbar{\rho e})^{(m\pm\frac{1}{2})} \xbar{u}_j^{(m\pm\frac{1}{2})}$ \\ \bottomrule
\end{tabular}
\caption{Split-numerical flux forms for $\hat{I}_j$.}
\label{tab:ie_flux}
\end{table}

We showed that all three $QS$, $CS$, and $CS-H$ forms of $\hat{I}_j$ result in approximate conservation of entropy for single-phase flows, when used consistently with the other fluxes in Eqs. \eqref{eq:single_phase_keep_scheme} \citep{jain2020keep}. However, only the proposed $QS$ form of $\hat{I}_j$ results in conservation of entropy for two-phase flows. The $CS$ and $CS-H$ forms do not satisfy the consistency conditions proposed in Section \ref{sec:consistency_sum} and therefore result in spurious mixing of internal energy at the interface. This is illustrated using the numerical simulations in Section \ref{sec:under_resolved}. This alludes to the fact that the set of consistent numerical fluxes that can be derived for a single-phase Navier-Stokes system that results in exact conservation of kinetic energy and approximate conservation of entropy is not unique.    %\todo{for exact conservation of entropy, the fluxes that can be derived are unique as argued by Chandrashekar. But this involves computing log of the quantities which is not usually preferred.}

\section{Coarse-grid simulations \label{sec:under_resolved}}

%\subsection{Taylor-Green vortex flow}

It is well known that the coarse-grid simulation of high-$Re$ turbulent flow that does not resolve turbulence is a good test of the robustness of a numerical method. 
%For inviscid simulations at infinite $Re$, kinetic energy and entropy will be constant in the low Mach limit due to the lack of reversible pressure work and the dissipative mechanisms. However in the limit of zero viscosity it is known that the kinetic energy cannot be conserved according to the Onsegar's conjecture due to the finite-time singularities in the flow.
A formulation that does not conserve kinetic energy and entropy would not be stable because of the lack of grid resolution to support dissipative scales. These types of under-resolved simulations at high-$Re$ were previously studied, for single-phase flows, by \citet{honein2004higher}, \citet{honein2005}, \citet{mahesh2004numerical}, \citet{hou2005robust}, \citet{subbareddy2009fully}, \citet{gassner2016split}, \citet{kuya2018kinetic}, and \citet{coppola2019numerically}. Here, we propose that the under-resolved simulations at high-$Re$ can also be used as a test of robustness of a numerical method for two-phase flows. 
%Here, we present the under-resolved simulation at high $Re$ (infinite $Re$) for two-phase flows for the first time.  
In this section, we present coarse-grid single-phase and two-phase simulations, at infinite $Re$, of Taylor-Green vortex and isotropic turbulence to evaluate the numerical stability of the proposed KEEP scheme.
% for the simulation of high $Re$ compressible two-phase turbulent flows.
Throughout this work, an RK4 time-stepping scheme is used for time discretization, and the temporal discretization errors are assumed to be small. {The time-step size is chosen such that the acoustic time scales of both phases are resolved as $\Delta t = CFL \min{(\Delta x/c_1,\Delta x/c_2)}$, where $CFL$ is the Courant–Friedrichs–Lewy number chosen to be about $0.65$ in this work.}
{Throughout this work, the interface-regularization parameters are chosen to be $\Gamma=|\vec{u}|_{max}$ and $\epsilon=\Delta x$, where $|\cdot|_{max}$ represents the maximum value in the domain.}

%\todo{include the definition of $s_l$}
To evaluate the stability, we look at the time evolution of the total kinetic energy, $K$, of the system, normalized by the initial kinetic energy $K_o$, defined as
\begin{equation}
\frac{K}{K_o} = \frac{\int_{\Omega} (\rho u_i u_i/2) dV}{\int_{\Omega} (\rho u_i u_i/2)_o dV} = \frac{\sum_m (\rho u_i u_i)\rvert_m}{\sum_m (\rho u_i u_i)_o\rvert_m},
\end{equation}
and the relative change in the total entropy of each of the phases $l$ defined as
\begin{equation}
 \frac{\Delta S_l}{S_{lo}} = \frac{\int_{\Omega} \left\{ (\rho_l \phi_l s_l) - (\rho_l \phi_l s_l)_o \right\}dV }{\lvert\int_{\Omega} (\rho_l \phi_l s_l)_o dV\rvert} = \frac{\sum_m \left\{ \left.(\rho_l \phi_l s_l)\right\rvert_m - \left.(\rho_l \phi_l s_l)_o\right\rvert_m \right\} }{\lvert\sum_m \left.(\rho_l \phi_l s_l)_o\right\rvert_m\rvert },     
\end{equation}
respectively, where the subscript $o$ denotes that the quantity is evaluated at the initial time of $t=0$; $dV=\Pi_{j=1}^3 \Delta x_j$ is the cell volume; and $s_l$, for a stiffened-gas EOS, can be defined as
\begin{equation}
 s_l = c_{vl} \ln{\left(\frac{p + \pi_l}{\rho_l^{\gamma_l}}\right)} + s_{lo},     
\end{equation}
where $c_{vl}$ is the specific heat at constant volume for phase $l$.

\subsection{Inviscid Taylor-Green vortex \label{sec:tg}}

Here, an under-resolved simulation of Taylor-Green vortex at infinite $Re$ is presented. The Taylor-Green vortex is a useful test case to evaluate the proposed KEEP scheme because it possesses important physical properties of a turbulent flow, albeit requiring a simple initial setup.  
%FIX L to 2L here!
\begin{figure}
    \centering
    \includegraphics[width=0.4\textwidth]{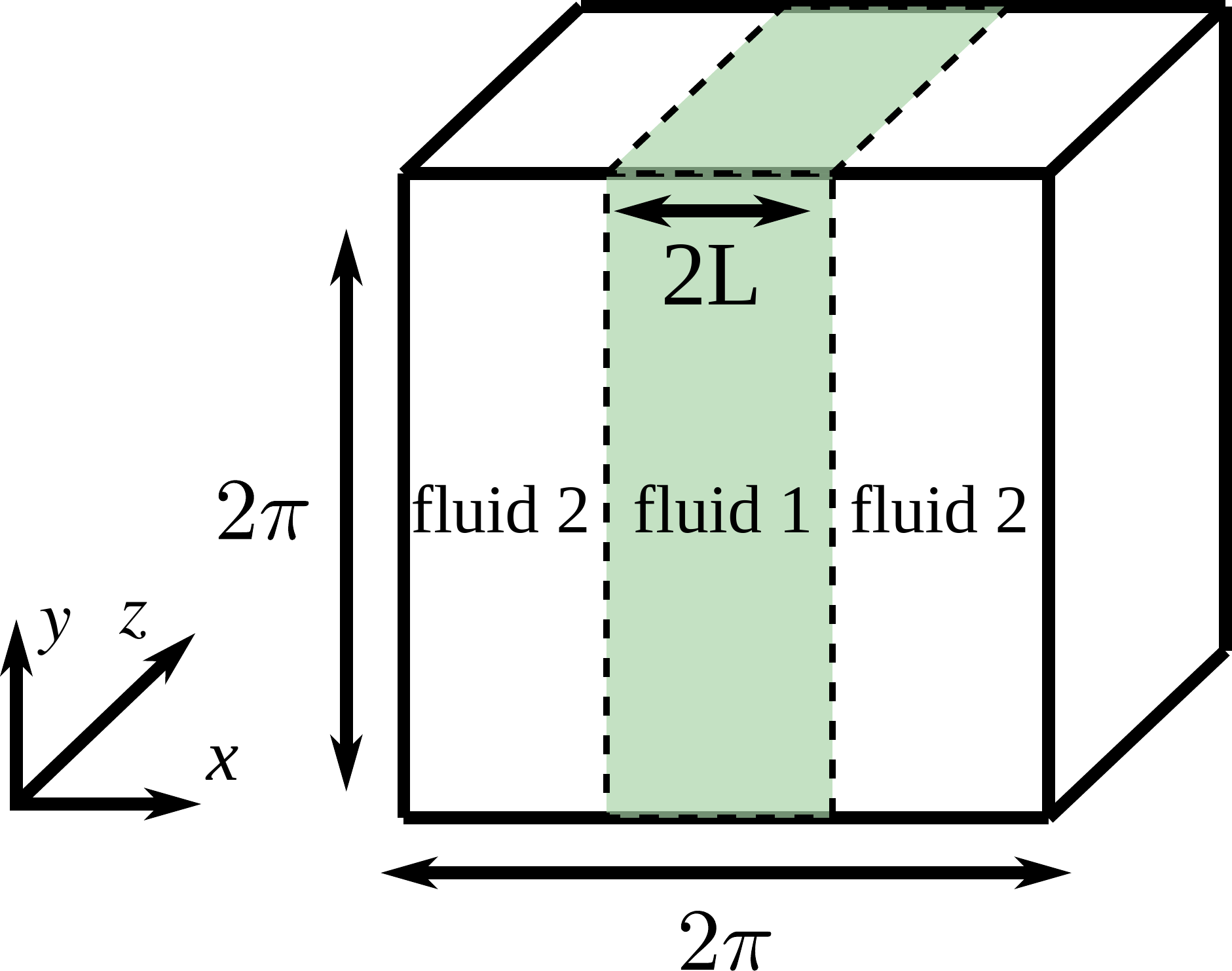}
    \caption{A schematic of the domain used in this work for the coarse-grid simulations of Taylor-Green vortex and isotropic turbulence simulations.}
    \label{fig:tgv_domain}
\end{figure}

%\todo{make this figure 3D!}

The setup consists of a three-dimensional triply periodic domain of size $2\pi\times2\pi\times2\pi$, discretized into a uniform grid of size $64\times64\times64$. No subgrid-scale model is used in the computation. A slab of size $2L$ of fluid $1$ is introduced in to the domain surrounded by fluid $2$, as shown in Figure \ref{fig:tgv_domain}. The initial densities of fluids $1$ and $2$ are chosen to be $\rho_{1o}$ and $\rho_{2o}$, respectively. The viscosities of fluids $1$ and $2$ are zero: $\mu_1=\mu_2=0$. The surface tension between the two fluids is set to zero. The material properties of the fluids $1$ and $2$ are chosen to be: $\gamma_1 = \gamma_2 = \gamma = 1.4$, and $\pi_1=\pi_2=0$. The length scale $L$ is chosen to be $1$. The initial conditions for the volume fraction, density, velocities, and pressure are given by
\begin{equation}
\begin{gathered}
\phi = 1 - \frac{1}{2}\left[1 + \tanh\left\{\frac{\sqrt{(x-\pi)^2}-L}{2 \epsilon}\right\}\right],\\
\rho = \rho_{1o}\phi + \rho_{2o}(1 - \phi),\\
u = M_o\sin\left(\frac{x}{L}\right)\cos\left(\frac{y}{L}\right)\cos\left(\frac{z}{L}\right),\\    
v = -M_o\cos\left(\frac{x}{L}\right)\sin\left(\frac{y}{L}\right)\cos\left(\frac{z}{L}\right),\\
w = 0,\\
p = \frac{1}{\gamma} + \frac{\rho M_o^2}{16}\left\{ \cos\left(\frac{2x}{L}\right) + \cos\left(\frac{2y}{L}\right) \right\} \left\{ \cos\left(\frac{2z}{L}\right) + 2 \right\},
\end{gathered}
\end{equation}
%\todo{how to obtain these initial conditions}
respectively, where $x$, $y$, and $z$ are the spatial coordinates; $u$, $v$, and $w$ are the velocity components along the $x$, $y$, and $z$ directions, respectively; and, $M_o$ is the initial Mach number of the flow defined based on the surrounding fluid (fluid $2$) properties. The $Re$ of this case is, therefore, $Re=\rho UL/\mu=\infty$, where $U$ is the characteristic velocity. Note that when the densities of the two fluids are equal, the setup reduces to a single-phase system (or a pseudo-two-phase system).

%A passive scalar quantity of diffusivities $D_1=0.1$ and $D_2=0$, in the fluid $1$ and $2$, respectively, is added to the fluid $1$ with a uniform initial concentration of $1$ as shown in Figure \ref{fig:tgv} at time $t=0$. The scalar is confined to the fluid $1$ since the ratio of diffusitivies in two fluids is $D_2/D_1=0$. 

%The kinetic energy of the flow remains constant up to machine precision, and depends only the tolerance of the pressure-Poisson solver which has been kept sufficiently low, as shown in Figure \ref{fig:tgv_domain}(b). Therefore, the simulation is stable even without a subgrid-scale model. 

\subsubsection{Single-phase flow}

%\todo{add results of KEP flux here as a comparison}

Here, the initial densities of the two fluids are chosen to be $\rho_{1o}=1$ and $\rho_{2o}=1$. Since there is no density difference across the interface between fluids $1$ and $2$ and all other material properties of the two fluids are either zero or equal, the system of equations discretely reduce to the single-phase Navier-Stokes system in Section \ref{sec:reduced}. Using the KEEP scheme proposed in this work in Eq. \eqref{eq:single_phase_keep_scheme}, we compare the three numerical flux forms for $\hat{I}_j$, namely, the $CS, CS$-$H,$ and $QS$ forms, introduced in Section \ref{sec:keep}, and keeping all other fluxes unchanged. 

The time evolution of the total kinetic energy and the individual entropy of fluids $1$ and $2$ are plotted in Figures \ref{fig:dr1_M0_05} and \ref{fig:dr1_M0_2} for $M_o=0.05$ and $M_o=0.2$, respectively. Note that the properties of fluids $1$ and $2$ are the same and hence the tags $1$ and $2$ represent the same fluid, but that are in separate regions in the domain. All three methods conserve kinetic energy and entropy of the system and therefore are stable for single-phase flows. This further verifies that the consistent set of numerical fluxes that result in conservation of kinetic energy and entropy is not unique for a single-phase Navier-Stokes system.
%This further verifies the implementation of the $CS$ and $CS$-$H$ forms proposed by \citet{kuya2018kinetic} and \citet{chandrashekar2013kinetic}, respectively. 
%For unity density ratio flows, the system of equations discretely reduce to the equations for single-phase flows. To see this:
%Compare my scheme, Kuya, and Chandrashekar's scheme for M=0.05 and 0.2 TG
\begin{figure}
    \centering
    \includegraphics[width=\textwidth]{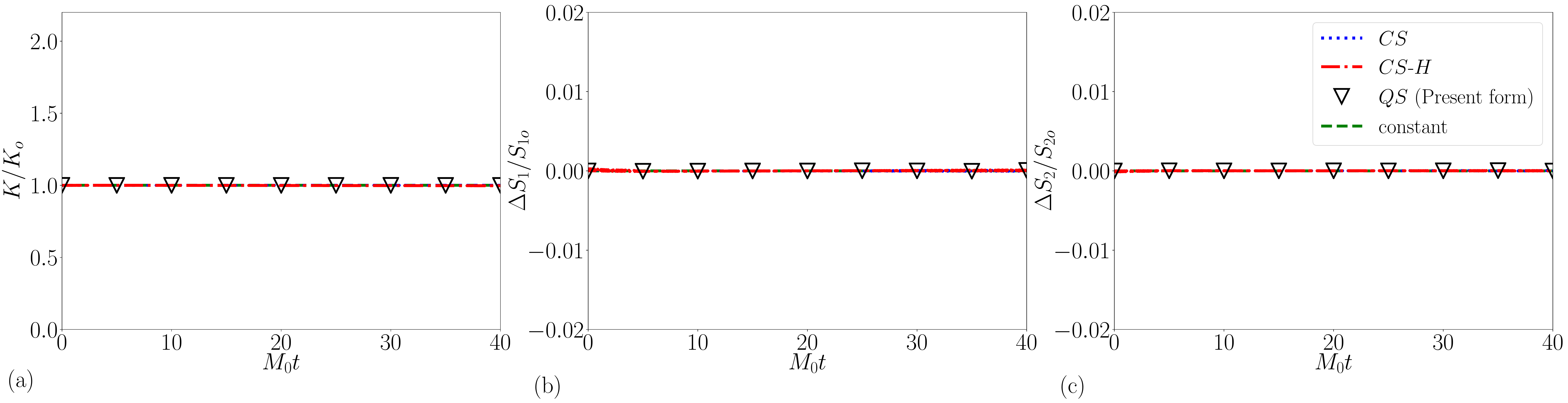}
    \caption{Inviscid single-phase Taylor-Green case ($\rho_2/\rho_1=1$) with $M_o=0.05$. (a) The evolution of the total kinetic energy in the domain $K$ normalized by the initial value of the kinetic energy in the domain $K_o$. (b) The change in entropy of fluid $1$ with time, normalized by the initial value. (c) The change in entropy of fluid $2$ with time, normalized by the initial value. Note that, here, fluids $1$ and $2$ are the same and hence the tags $1$ and $2$ represent the same fluid, but that are in separate regions in the domain.}
    \label{fig:dr1_M0_05}
\end{figure}
\begin{figure}
    \centering
    \includegraphics[width=\textwidth]{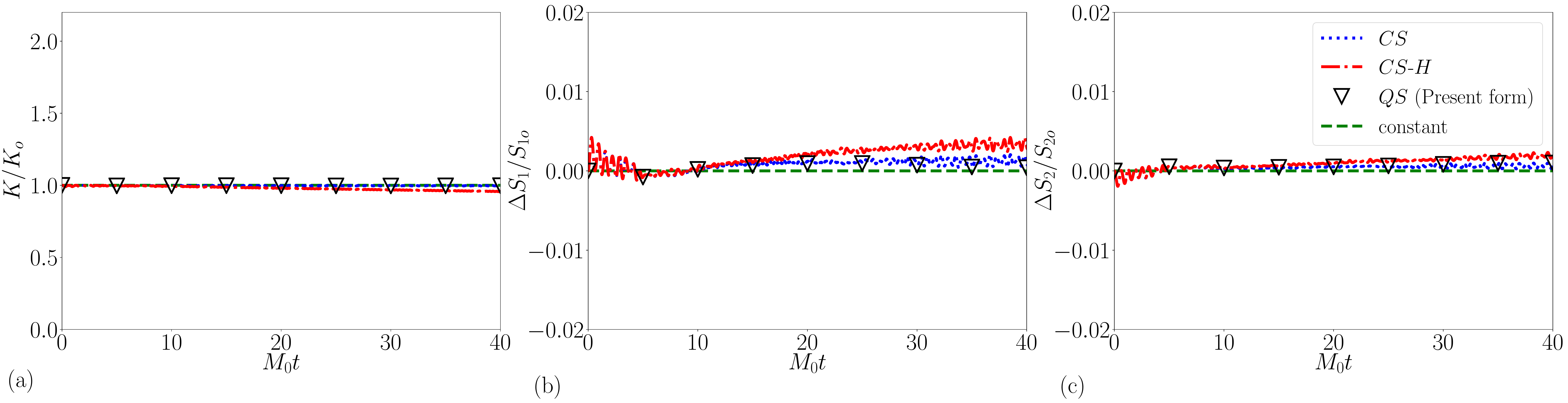}
    \caption{Inviscid single-phase Taylor-Green case ($\rho_2/\rho_1=1$) with $M_o=0.2$. (a) The evolution of the total kinetic energy in the domain $K$ normalized by the initial value of the kinetic energy in the domain $K_o$. (b) The change in entropy of fluid $1$ with time, normalized by the initial value. (c) The change in entropy of fluid $2$ with time, normalized by the initial value. Note that, here, fluids $1$ and $2$ are the same and hence the tags $1$ and $2$ represent the same fluid, but that are in separate regions in the domain.}
    \label{fig:dr1_M0_2}
\end{figure}
%\todo{change the symbols to $K$ and $S$}
%\begin{itemize}
%    \item Kuya's scheme is the best for unity density ratio
%    \item Plot energy spectrum to show the convergence for 64, 128, and 256 grids, see Eric Johnsen.
%\end{itemize}

\subsubsection{Two-phase flow \label{sec:nonunity_tg}}

%\todo{add results without splitting}

In this section, the initial densities of the two fluids are chosen to be $\rho_{1o}=0.1$ and $\rho_{2o}=1$ and the setup is, therefore, a two-phase system with non-unity density difference across the interface between fluids $1$ and $2$. Now using the KEEP scheme proposed in this work in Eqs. \eqref{eq:single_phase_keep_scheme}-\eqref{eq:two_phase_keep_scheme}, we again compare the three numerical flux forms for $\hat{I}_j$, keeping all other fluxes unchanged. Snapshots of the flow are shown in Figure \ref{fig:tgv} at various times, illustrating the breakup of the fluid slab into bubbles due to the breakdown of the underlying Taylor-Green vortex.

The time evolution of the total kinetic energy and the individual entropy of fluids $1$ and $2$ are plotted in Figures \ref{fig:dr10_M0_05} and \ref{fig:dr10_M0_2} for $M_o=0.05$ and $M_o=0.2$, respectively. Unlike for the single-phase system, here only the $QC$ form proposed in this work conserves kinetic energy and entropy, and is therefore stable for two-phase flows. The $CS$ and $CS$-$H$ forms on the other hand, do not satisfy the consistency conditions proposed in Section \ref{sec:consistency_sum}, and therefore, do not conserve kinetic energy and entropy. The simulations with these forms diverge. Note that, here, different numerical flux forms for $\hat{I}_j$ are compared because these forms are used before in the literature in the context of single-phase flows. Similarly, a cubic-split form for $\hat{H}_j$ can be constructed instead of the quadratic-split form proposed in Eq. \eqref{eq:two_phase_keep_scheme} and can be shown that the resulting simulation would not be stable since the cubic-split form for $\hat{H}_j$ does not satisfy the consistency conditions. Appendix D discusses other possible inconsistent flux formulations and the effect of these formulations on the numerical stability. 
%Compare my scheme, Kuya, Chandrashekar's scheme, and an inconsistent interfacial regularization KE flux for M=0.05 and 0.2 TG and for M=0.07 compressible two-phase HIT.
%\begin{itemize}
%    \item Kuya's and Chandrashekar's scheme should blow up.
%    \item My scheme should work for both cases.
%    \item There is some entropy exchange between the phases when there is exchange of energy between the internal and kinetic energies through the pressure work term when the compressibility effects are non-negligible.
%\end{itemize}
\begin{figure}
    \centering
    \includegraphics[width=\textwidth]{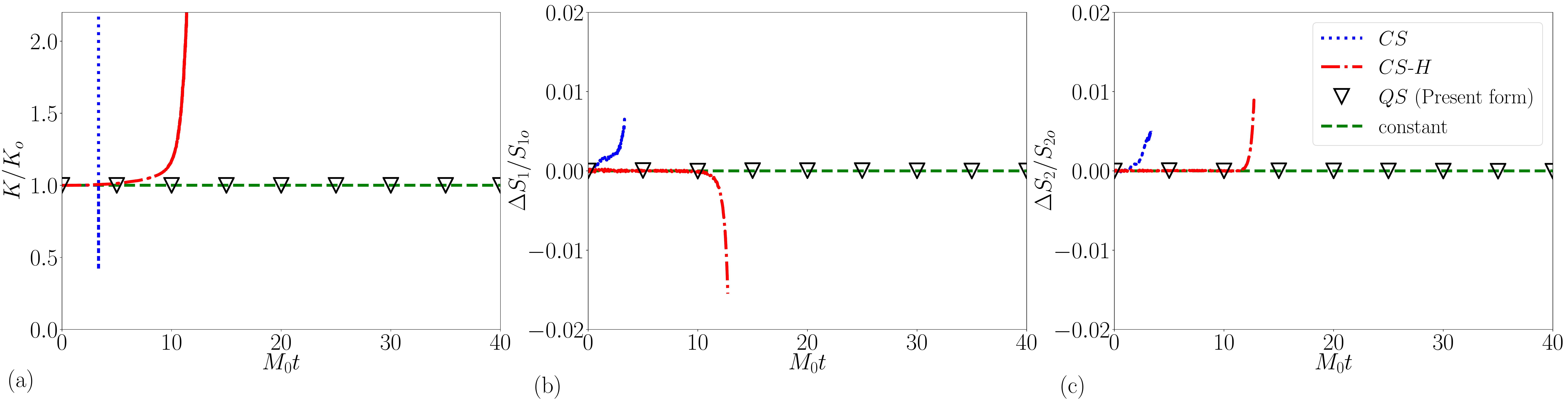}
    \caption{Inviscid two-phase Taylor-Green case with $\rho_2/\rho_1=10$ and $M_o=0.05$. (a) The evolution of the total kinetic energy in the domain $K$ with time normalized by the initial value of the kinetic energy in the domain $K_o$. (b) The change in entropy of fluid $1$ with time, normalized by the initial value. (c) The change in entropy of fluid $2$ with time, normalized by the initial value.}
    \label{fig:dr10_M0_05}
\end{figure}
\begin{figure}
    \centering
    \includegraphics[width=\textwidth]{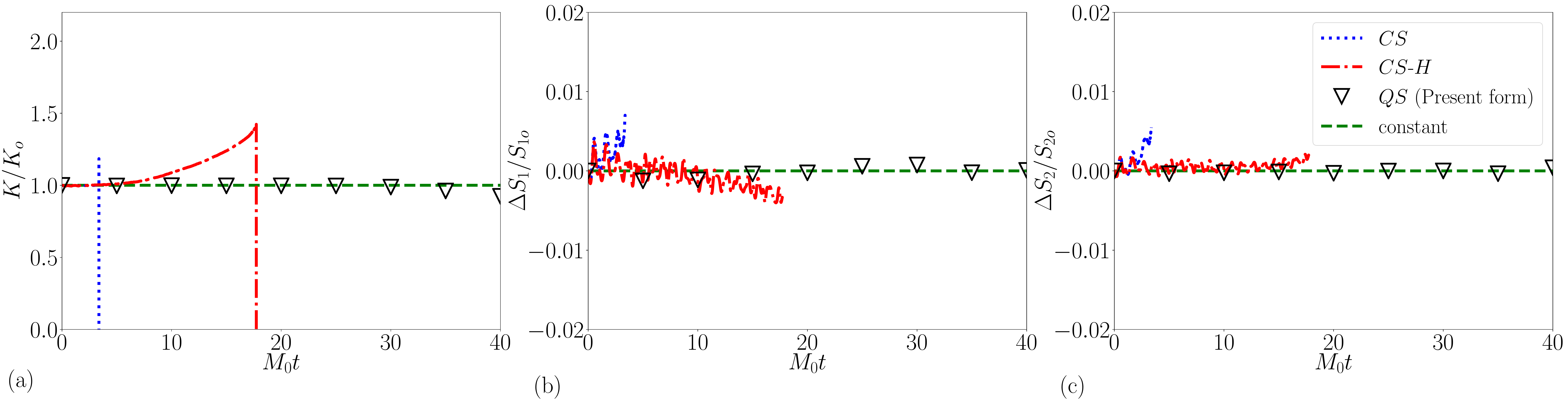}
    \caption{Inviscid two-phase Taylor-Green case with $\rho_2/\rho_1=10$ and $M_o=0.2$. (a) The evolution of the total kinetic energy in the domain $K$ with time normalized by the initial value of the kinetic energy in the domain $K_o$. (b) The change in entropy of fluid $1$ with time, normalized by the initial value. (c) The change in entropy of fluid $2$ with time, normalized by the initial value.}
    \label{fig:dr10_M0_2}
\end{figure}
\begin{figure}
    \centering
    \includegraphics[width=0.8\textwidth]{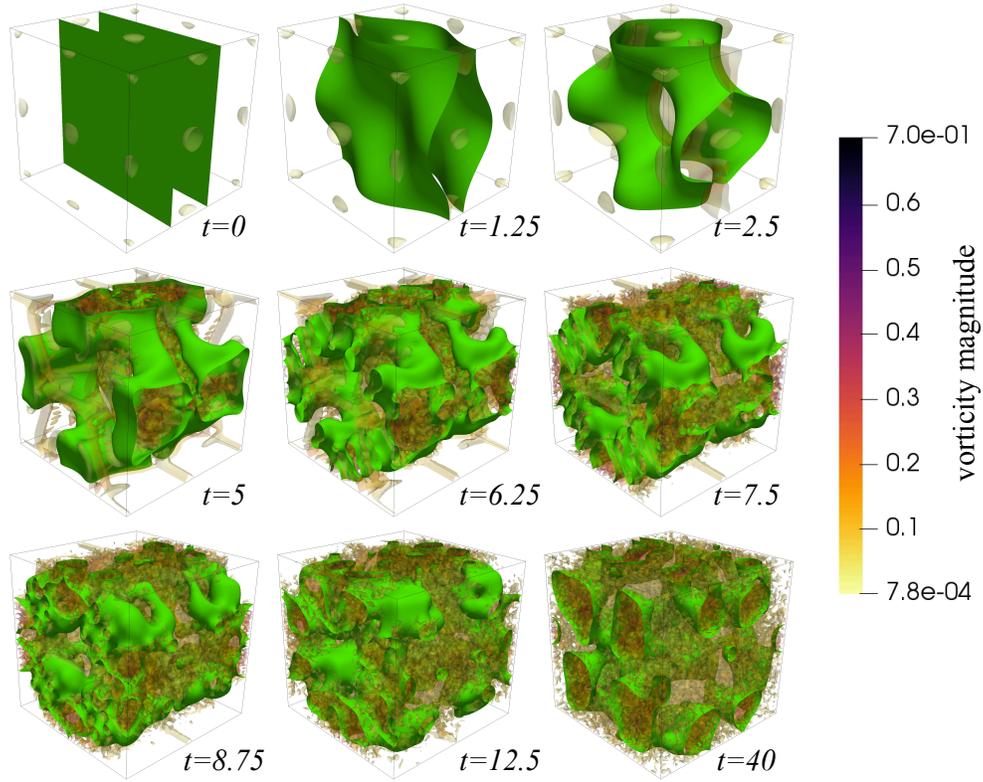}
    \caption{Snapshots of the inviscid two-phase Taylor-Green vortex at various times with the proposed KEEP scheme in this work. The solid surface represents the interface ($\phi=0.5$ contour) between fluid $1$ and fluid $2$. The translucent surface represents an isosurface of the Q-criterion colored by the vorticity magnitude.}
    \label{fig:tgv}
\end{figure}

Finally, the proposed KEEP scheme is verified for different initial Mach numbers and densities of the fluids in Figure \ref{fig:dr1000_M0}. As $M_o$ increases, the kinetic energy does not stay constant due to the exchange of energy between the kinetic energy and internal energy through reversible pressure work. 
%\textbf{Add the plot and description of 
%$\rho_1/\rho_2=1000$ for all Mach numbers here.}
\begin{figure}
    \centering
    \includegraphics[width=\textwidth]{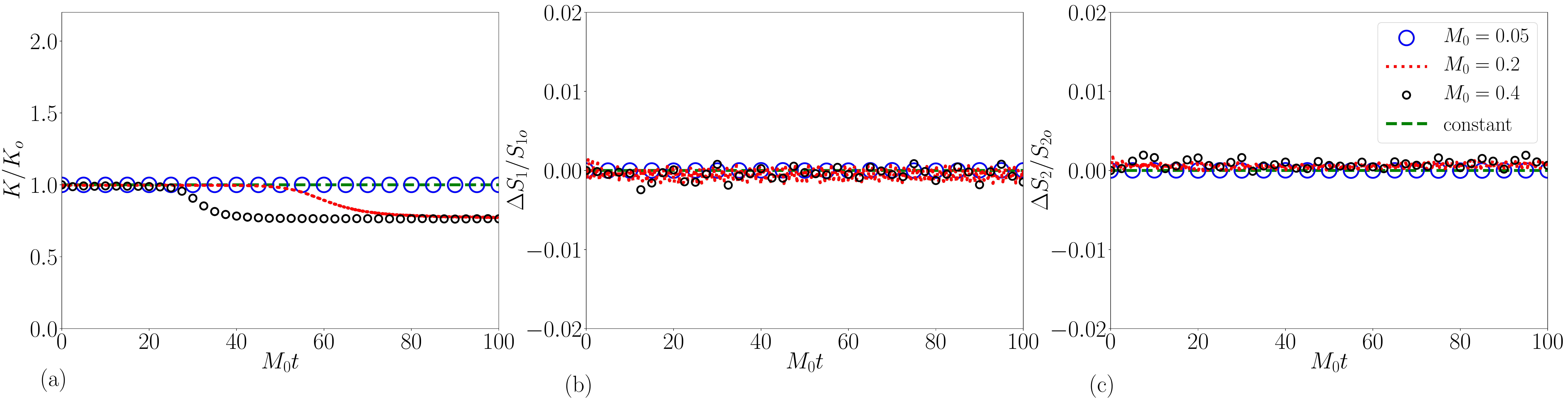}
    \caption{Inviscid two-phase Taylor-Green case with $\rho_2/\rho_1=1000$ for different Mach numbers (a) The evolution of the total kinetic energy in the domain $K$ with time normalized by the initial value of the kinetic energy in the domain $K_o$. (b) The change in entropy of fluid $1$ with time, normalized by the initial value. (c) The change in entropy of fluid $2$ with time, normalized by the initial value.}
    \label{fig:dr1000_M0}
\end{figure}
%\todo{double check if it is rho1/rho2 or rho2/rho1 everywhere in the manuscript.}

\subsection{Two-phase isotropic turbulence}

In this section, an under-resolved simulation of two-phase isotropic turbulence at infinite $Re$ is presented. The setup consists of a three-dimensional triply periodic domain of size $2\pi\times2\pi\times2\pi$, discretized into a uniform grid of size $32\times32\times32$. No subgrid-scale model is used in the computation. A slab of size $2L$ of fluid $1$ is introduced in to the domain surrounded by fluid $2$, as shown in Figure \ref{fig:tgv_domain}. The initial densities of fluids $1$ and $2$ are chosen to be $\rho_{1o}=0.1$ and $\rho_{2o}=1$, respectively. The viscosities of fluids $1$ and $2$ are zero: $\mu_1=\mu_2=0$. The material properties of the fluids $1$ and $2$ are chosen to be: $\gamma_1 = \gamma_2 = \gamma = 1.4$, and $\pi_1=\pi_2=0$. The surface tension between the two fluids is set to zero. The length scale $L$ is chosen to be $1$. The initial conditions for the volume fraction and density are the same as the ones in Section \ref{sec:tg}. An initial energy spectrum used by \citet{honein2005} defines the initial velocity field.
\begin{figure}
    \centering
    \includegraphics[width=\textwidth]{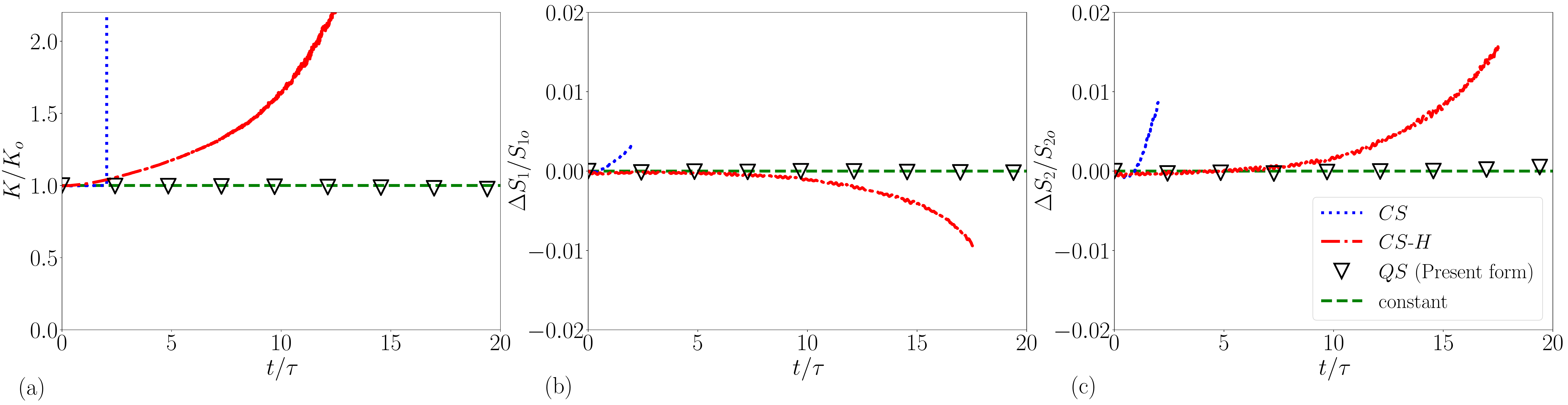}
    \caption{Inviscid two-phase isotropic turbulence case with $\rho_2/\rho_1=10$ and $M_{to}=0.07$. (a) The evolution of the total kinetic energy in the domain $K$ with time normalized by the initial value of the kinetic energy in the domain $K_o$. (b) The change in entropy of fluid $1$ with time, normalized by the initial value. (c) The change in entropy of fluid $2$ with time, normalized by the initial value.}
    \label{fig:dr10_M0_07}
\end{figure}

Using the KEEP scheme proposed in this work in Eqs. \eqref{eq:single_phase_keep_scheme}-\eqref{eq:two_phase_keep_scheme}, we again compare the three numerical flux forms for $\hat{I}_j$, keeping all other fluxes unchanged. The time evolution of the total kinetic energy and the individual entropy of fluids $1$ and $2$ are plotted in Figure \ref{fig:dr10_M0_07} for an initial turbulent Mach number of $M_{to}=0.07$. Similar to the two-phase Taylor-Green system in Section \ref{sec:nonunity_tg}, only the $QC$ form proposed in this work conserves kinetic energy and entropy, and is therefore stable. The $CS$ and $CS$-$H$ forms, however, do not satisfy the consistency conditions and therefore do not conserve kinetic energy and entropy. The simulations with these forms diverge.  

%ADD THIS BACK FOR FULL JOURNAL ARTICLE

\section{Higher-resolution simulations of droplet-laden isotropic turbulence \label{sec:resolved}}

%\todo{add an image of the initial setup}

%\todo{compute droplet surface area}

In this section, simulations of droplet-laden isotropic turbulence is presented to evaluate the accuracy of the method at finite $Re$. The setup consists of a three-dimensional triply periodic domain of size $\pi/2\times\pi/2\times\pi/2$, and is discretized into a uniform grid of size $256^3$. No subgrid-scale model is used in the computation. A 1000, initially spherical, droplets of fluid $1$ is introduced to the domain surrounded by fluid $2$. The initial Taylor-Reynolds number of the flow is $Re_{\lambda}\approx75$ and the initial turbulent Mach number is $M_{to}=0.07$, based on the surrounding fluid properties. The initial diameter of the seeded droplets is $d_o\approx2\lambda$. Here, $Re_{\lambda}$ was chosen to match the incompressible droplet-laden isotropic turbulence study by \citet{dodd2016interaction}. %The fluids were assumed to be incompressible in the \citet{doddFerr} study, and therefore, inevitably, there will be differences  

The initial density of fluid $2$ is $\rho_{2o}=1$ and the density of the fluid $1$ is varied: $\rho_{1o}=1,10,100,1000$. A reference case of single-phase homogeneous isotropic turbulence (HIT) is also simulated with only fluid $2$ in the domain. The kinematic viscosities of fluids $1$ and $2$ are assumed to be the same. The material properties of the fluids $1$ and $2$ are chosen to be: $\gamma_1 = \gamma_2 = \gamma = 1.4$, and $\pi_1, \pi_2$ are chosen such that the values of $M_{to}$ are identical in two fluids. 
%\todo{is this a requirement for acoustic equilibrium?}. 
The surface tension, $\sigma$, between the two fluids is chosen such that the initial turbulent Weber number is $We_{rms}=\rho_{2o} U_{rms}^2 d_o/\sigma=1$. An initial energy spectrum used by \citet{honein2005} defines the initial velocity field. 

%\todo{read Dodd paper and extend this section}
Snapshots from the simulations of droplet-laden isotropic turbulence for all four values of $\rho_1/\rho_2$, at $t=20$, is presented in Figure \ref{fig:resolved_drop_hit}. The only difference between the single-phase case and the two-phase case with $\rho_1/\rho_2=1$ is the presence of the surface tension effects at the droplet interface. This inhibits breakup of the droplet fluid, and hence, the droplet size increases with time. With an increase in the density of the droplet fluid, it is evident that the sizes of the drops are smaller. This is due to the increase in the inertial effects over the surface tension effects that counter the breakup process. %\todo{What is the reason?}
%\todo{fix legend}
\begin{figure}
    \centering
    \includegraphics[width=0.75\textwidth]{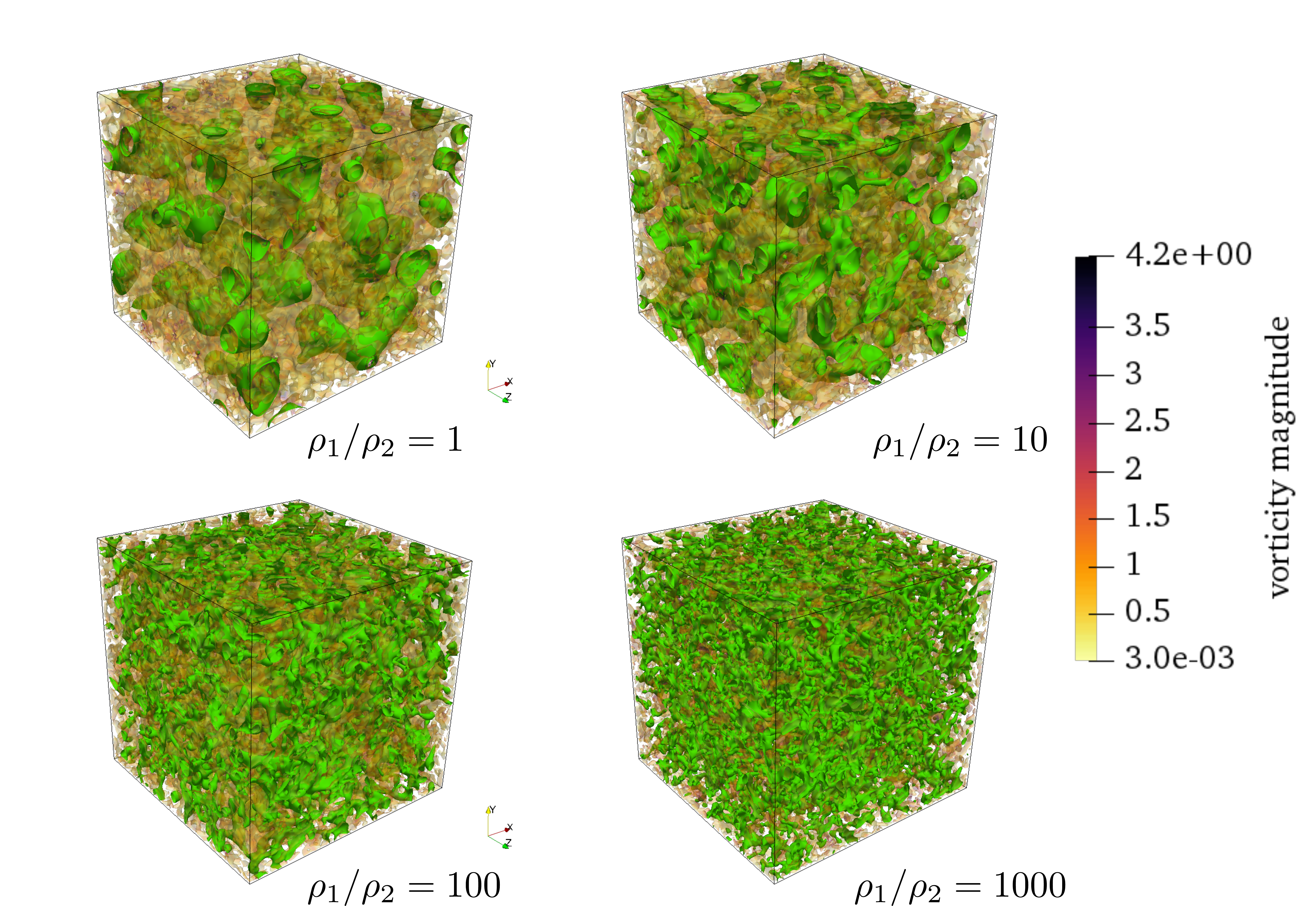}
    \caption{Snapshots of the droplet-laden isotropic turbulence simulations for various values of $\rho_1/\rho_2$ at $t=20$. The solid surface represents the interface ($\phi=0.5$ contour) between fluid $1$ and fluid $2$. The translucent surface represents an isosurface of the Q-criterion colored by the vorticity magnitude.}
    \label{fig:resolved_drop_hit}
\end{figure}

The time evolution of the total kinetic energy is plotted in Figure \ref{fig:resolved_ke} for all four values of $\rho_1/\rho_2$ along with the reference single-phase case. The relative enhancement of the decay rate of total kinetic energy at the early times, due to the presence of drops (see the differences in $K$ between the two-phase case of $\rho_1/\rho_2=1$ and the single-phase case), and with an increase in the density of the droplet fluid, is in very good agreement with the observations made by \citet{dodd2016interaction}. However, for the $\rho_1/\rho_2=1$ case, the decay of $K$ slows down compared to the single-phase case at later times and a crossover of $K$ at $t\approx10$ can be seen in Figure \ref{fig:resolved_ke}. This behavior can be explained using the effect of surface tension on the flow. Writing the evolution equation for $K$ as
\begin{equation}
    \frac{d K}{dt} = -E + \Psi
\end{equation}
where $E$ and $\Psi$ are the total dissipation rate and the rate of change of surface energy, which scales as
\begin{equation}
    \Psi \sim -\frac{dA}{dt},
\end{equation}
where $A$ is the total surface area of the drops in the domain. As the droplet sizes increase with time for the $\rho_1/\rho_2=1$ case at later time in the simulation, the total surface area $A$ decreases. Therefore, $\Psi$ is positive and acts as a source of $K$. This results in slower decay of $K$ for the two-phase case with $\rho_1/\rho_2=1$ compared to the single-phase case, and hence, a crossover in $K$ is seen at $t\approx10$. A similar behavior is expected for the case of $\rho_1/\rho_2=10$, but at a much later time, due to a relative slower increase in the droplet sizes compared to the case of $\rho_1/\rho_2=1$. Figure \ref{fig:resolved_ke} confirms this behavior, and a crossover in $K$ can be seen for this case at $t\approx 18$.
%\textbf{Add plot of ke decay here and say that it is qualitatively similar to the Dodd and Ferrante. Describe the snapshots qualitatively and make observations.}
%\textbf{Add snapshots of interface with velocity field at various times}
%\begin{figure}
%    \centering
%    \includegraphics[width=0.75\textwidth]{drop_slice.pdf}
%    \caption{Caption}
%    \label{fig:my_label}
%\end{figure}
\begin{figure}
    \centering
    \includegraphics[width=0.6\textwidth]{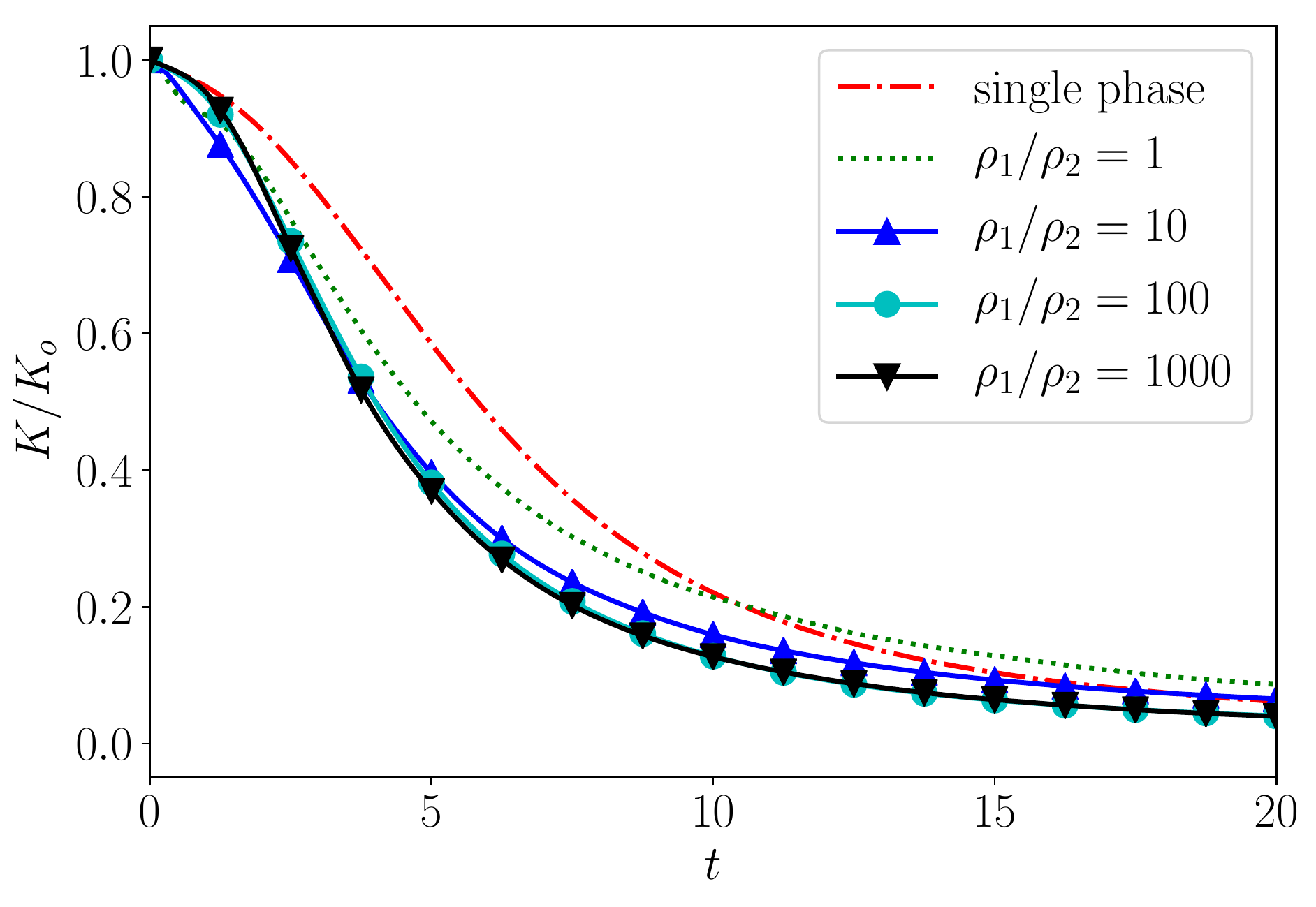}
    \caption{The time evolution of the total kinetic energy in the domain $K$, normalized by the initial value of the kinetic energy in the domain $K_o$, for the droplet-laden isotropic turbulence simulations for various values of $\rho_1/\rho_2$ and a reference single-phase case.}
    \label{fig:resolved_ke}
\end{figure}

\section{Summary and remarks \label{sec:conclusion}}

%\todo{update this}

%{valid for multicomponent stuff, global KE and local KE conservation, some schemes reduce to the existing KEEP schemes for single-phase flows in the literature, explain the effect of drops observed.}

In this work, we developed a general framework for the derivation of consistent numerical fluxes for compressible single-phase and two-phase flows, and proposed a set of consistency conditions between the numerical fluxes of volume fraction, mass, momentum, kinetic energy, and internal energy. The proposed consistency conditions between the fluxes are required for achieving discrete conservation of global kinetic energy, preservation of local kinetic energy, and for maintaining interface-equilibrium condition. 
For incompressible flows, conservation of global kinetic energy is known to be a sufficient condition to achieve stable numerical simulations. However, for compressible flows, in addition to conserving global kinetic energy, more constraints such as local kinetic energy preservation and interface-equilibrium condition are required to maintain an approximate discrete conservation of entropy, and to achieve stable numerical simulations.
The proposed consistency conditions for the numerical fluxes are general enough that they can be adopted for other multiphysics flow problems, such as multicomponent flows. 

Using the proposed framework, a numerical scheme that satisfies the consistency conditions was derived for both single-phase and two-phase flows and was verified that it results in an exact conservation of the kinetic energy and approximate conservation of entropy (a KEEP scheme) in the absence of pressure work, viscosity, thermal diffusion effects, and time-differencing errors. We also showed that a KEEP scheme is not unique for single-phase flows, and various forms of numerical fluxes can be derived that satisfy the consistency conditions and hence would result in exact conservation of kinetic energy and approximate conservation of entropy.

A conservative diffuse-interface method with a five-equation model was used as the interface-capturing method for modeling the system of compressible two-phase flows in this work. However, the proposed consistency conditions are general, and can be used with any other interface-capturing method to derive a KEEP scheme.

We used coarse-grid simulations of single-phase and two-phase Taylor-Green vortex and isotropic turbulence at infinite $Re$, to test the numerical stability of the proposed scheme. The observations from the numerical experiments verified that the proposed scheme results in conservation of kinetic energy and entropy, and hence, the scheme is superior in terms of maintaining stability for long-time numerical integrations. A higher-resolution simulation of droplet-laden decaying isotropic turbulence case (at finite $Re$) is also presented at the end, and the effect of presence of droplets on the flow is studied. This test case illustrates the stability and accuracy of the proposed method in a complex setting.

\section*{Acknowledgments} 
%\todo{add PSAAP III acknowledgements}

S. S. J. was supported by the Franklin P. and Caroline M. Johnson Stanford Graduate Fellowship, and the authors acknowledge the support from the Predictive Science Academic Alliance Program (PSAAP) III at Stanford University.
The authors also acknowledge the Argonne Leadership Computing Challenge award 2021-22 which provided access to the Theta supercomputer, which was used for simulations in this work. 
%The authors acknowledge the use of computational resources from the Shepard and Yellowstone clusters awarded by the National Science Foundation to CTR, as well as the Theta supercomputer from the Director's Discretionary Allocation Program awarded by the Argonne Leadership Computing Facility. %The authors thank Prof. Soshi Kawai for the fruitful discussions and Dr. Akshay Subramaniam and Dr. Man-Long Wong for providing initial conditions for the isotropic turbulence case. 
A preliminary version of this work has been published as a technical report in the annual publication of the Center for Turbulence Research \citep{jain2020keep} and is available online\footnote{http://web.stanford.edu/group/ctr/ResBriefs/2020/29\_Jain.pdf}. S. S. J. is thankful for Mr. Kihiro Bando, for reading the manuscript and for his comments that helped improve the manuscript.

%\section*{Appendix A: Divergence, quadratic-split and cubic-split forms}

\section*{Appendix A: Entropy change associated with interface-regularization process}

%\todo[inline]{move this to Appendix}
%Illustration of this change in entropy
To quantify the entropy change associated with the interface-regularization process and to illustrate the conservation of entropy in the limit of equilibrium interface state, a simple numerical test is performed. In this test case, the interface thickness is varied in a stationary setup such that the only terms that are non-zero are the interface-regularization terms, facilitating the quantification of entropy change associated with these terms.

The setup consists of an infinite slab of fluid $1$ of width $2L$, where $L=0.25$, surrounded by fluid $2$ in a two-dimensional periodic square domain, as shown in Figure \ref{fig:entropy_drop_case}(a). The domain has dimensions of $[-0.5,0.5]\times[-0.5,0.5]$, discretized into a uniform grid of size $64\times64$. %An infinite slab of half width, $L=0.25$, is located in the domain with its axis along $x=0$. 
The initial conditions for the volume fraction, density, velocities, and pressure are given by
$\phi = 1 - {1}/{2}\left[1 + \tanh\left\{{(|x|-L)}/{(2 \epsilon_o)}\right\}\right]$, $\rho = \rho_{1o}\phi + \rho_{2o}(1 - \phi)$, $u = 0$, $v=0$, and $p=1$, respectively; $\rho_{1o}=1$ and $\rho_{2o}=1000$ are the initial densities of the phases $1$ (air) and $2$ (water), respectively. The viscosities are taken to be zero: $\mu_1=\mu_2=0$. The surface tension is set to zero.
%\todo{change $\rho_{ol}$ to $\rho_{lo}$ everywhere in the text.}
%\todo{switch phases 1 and 2 to make it consistent with Figure 1.}
\begin{figure}
    \centering
    \includegraphics[width=0.75\textwidth]{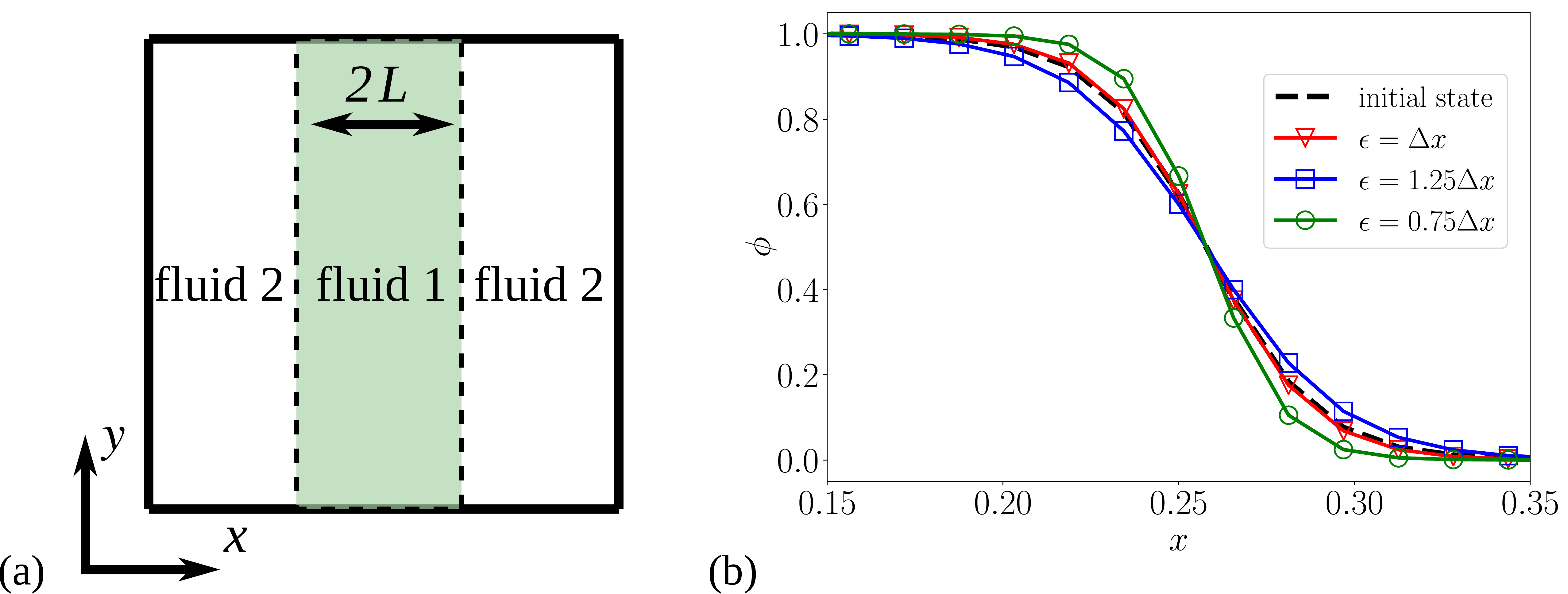}
    \caption{A stationary slab of fluid $1$ in fluid $2$. (a) initial setup, (b) A line plot of $\phi$ along $y=0$ from $x=0$ to $x=0.5$, showing the initial and final states for three different scenarios of $\epsilon$.}
    \label{fig:entropy_drop_case}
\end{figure}
%\todo{fix legend}
\begin{figure}
    \centering
    \includegraphics[width=\textwidth]{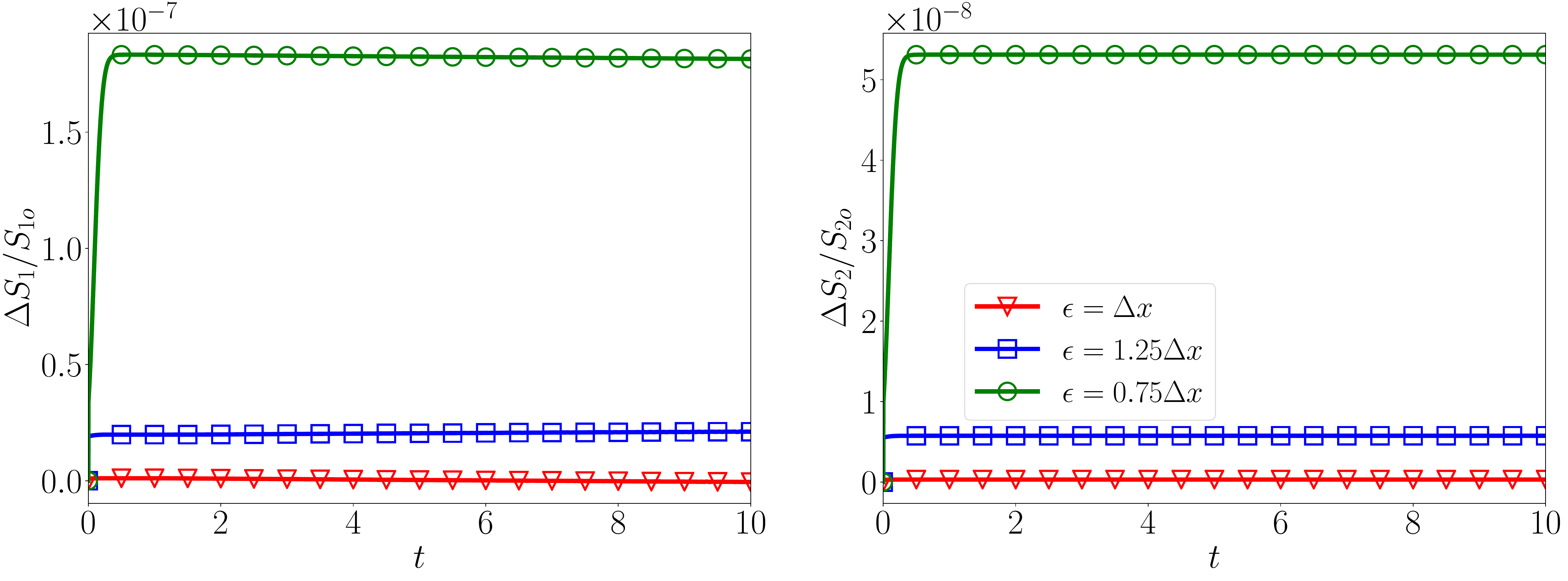}
    \caption{The relative change in total entropy $\Delta S_l/S_{lo}$ of (a) fluid $1$, and (b) fluid $2$, for three different values of $\epsilon$.}
    \label{fig:entropy_drop}
\end{figure}
%\todo{swap the plots in Figure 2 when phases 1 and 2 are switched.}
%{\color{red}
%\textbf{Define entropy here
%Include other parameters
%}}

Three different scenarios are considered with different interface equilibrium states that are controlled using the interface thickness parameter, $\epsilon$: (1) $\epsilon=\Delta x$, (2) $\epsilon=1.25\Delta x$, and (3) $\epsilon=0.75\Delta x$. In all the three cases, volume fraction is initialized using the hyperbolic tangent function with $\epsilon_o=\Delta x$. Hence, the interface is already in an equilibrium state in case $1$. But for cases $2$ and $3$, the interface thickness increases and decreases, respectively, until it reaches a new equilibrium state.
%\todo{remove the definition from here.}
%The relative change in entropy of each phase $l$, is defined as
%\begin{equation}
% \frac{\Delta S_l}{S_{lo}}(t) = \frac{\int_{\Omega} \left\{ (\rho_l \phi_l s_l) - (\rho_l \phi_l s_l)_o \right\}dV }{\int_{\Omega} (\rho_l \phi_l s_l)_o dV} = \frac{\sum_m \left\{ \left.(\rho_l \phi_l s_l)\right\rvert_m - \left.(\rho_l \phi_l s_l)_o\right\rvert_m \right\} }{\sum_m \left.(\rho_l \phi_l s_l)_o\right\rvert_m },     
%\end{equation}
%where the subscript $o$ denotes that the quantity is evaluated at the initial time of $t=0$; and $s_l$, for a stiffened-gas EOS, can be defined as
%\begin{equation}
% s_l = c_{vl} \ln{\left(\frac{p + \pi_l}{\rho_l^{\gamma_l}}\right)} + s_{lo},     
%\end{equation}
%where $c_{vl}$ is the specific heat at constant volume for phase $l$. 
The relative change in total entropy of each phase $l$ is computed as a function of time and are shown in Figures \ref{fig:entropy_drop}(a,b) for all three cases. Since in case $1$, the interface is already in an equilibrium state, the change in entropy of both fluids $1$ and $2$ are negligible. For cases $2$ and $3$, where the interface thickness increases and decreases, respectively, there is a small change (increase) in entropy. But once the interface reaches a new equilibrium state in cases $2$ and $3$, the entropy of both fluids does not change, as expected. The initial and final states of the interface at times, $t=0$ and $t=1$, respectively, are shown in Figure \ref{fig:entropy_drop_case}(b) for all three scenarios. This numerical demonstration verifies that the entropy is conserved in the limit of equilibrium interface state and the change in entropy due to the interface-regularization process is small and positive. Although the change in entropy due to the interface-regularization terms are small and essentially negligible, this does not warrant the use of inconsistent numerical fluxes for these terms. Failing to satisfy the consistency conditions (KEEP scheme) presented in this work, these interface-regularization terms could spuriously contribute to the kinetic energy and entropy of the system (see, Appendix D).

\section*{Appendix B: KEEP scheme satisfies the interface-equilibrium condition \label{sec:proof_iec}}

The IEC provides a consistency condition to check and eliminate the forms of the numerical discretizations that contribute spuriously to the solution. Satisfying the IEC is crucial for a method to conserve kinetic energy and entropy.

\newtheorem{lemma}[theorem]{Lemma}
\begin{lemma}
The proposed KEEP scheme satisfies the IEC defined in Section \ref{sec:ie_consistency}.
\label{lemma:IEC}
\end{lemma}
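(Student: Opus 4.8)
The plan is a direct verification of the IEC for the spatial discretization in Eqs.~\eqref{eq:single_phase_keep_scheme}--\eqref{eq:two_phase_keep_scheme}. Assume that at time step $k$ one has $u_i\rvert_m=(u_0)_i$ and $p\rvert_m=p_0$ at every cell $m$; the claim is that the semi-discrete right-hand sides keep the nodal quantities
\[
r_i\rvert_m := \rho u_i\rvert_m - (u_0)_i\,\rho\rvert_m
\quad\text{and}\quad
z\rvert_m := \rho e\rvert_m - \sum_{l=1}^{2}\big(H_l^0 - p_0\big)\,\phi_l\rvert_m,
\qquad H_l^0 := \frac{\gamma_l(p_0+\pi_l)}{\gamma_l-1},
\]
constant in time, node by node. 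Both vanish at step $k$: $r_i\rvert_m$ trivially, and $z\rvert_m$ because the stiffened-gas mixture EOS~\eqref{eq:pressuref} collapses to $\rho e=\sum_l\phi_l(H_l^0-p_0)$ when $p\equiv p_0$. Hence they stay zero, i.e.\ $u\rvert_m=u_0$ and $p\rvert_m=p_0$ at step $k+1$. Two elementary facts are used throughout: the discrete divergence of a constant field vanishes by telescoping of arithmetic means (so $\xbar{p}^{(m\pm\frac12)}=p_0$, $\xbar{u_j}^{(m\pm\frac12)}=(u_0)_j$, and $\delta u_j/\delta x_j\rvert_m=0$), and the proposed regularization flux is discretely antisymmetric, $\hat a_{2j}\rvert_{(m\pm\frac12)}=-\hat a_{1j}\rvert_{(m\pm\frac12)}$, which follows from $\phi_2=1-\phi_1$ and $\xbar{n}_{2j}^{(m\pm\frac12)}=-\xbar{n}_{1j}^{(m\pm\frac12)}$.

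For the momentum equation I would substitute the consistency relations \eqref{eq:mom_cons} and \eqref{eq:int_mom_cons}, which on the uniform state read $\hat M_{ij}=(u_0)_i\hat C_j$ and $\hat R_{ij}=(u_0)_i\hat F_j$, together with $\hat P_{ij}=p_0\delta_{ij}$ (zero discrete divergence). Using the discrete continuity equation~\eqref{eq:mod_continuity}, the momentum balance then collapses to $\partial_t(\rho u_i)\rvert_m=(u_0)_i\,\partial_t\rho\rvert_m$, i.e.\ $\partial_t r_i\rvert_m=0$.

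The energy equation is the core of the argument. First I would subtract the discrete kinetic-energy balance: by \eqref{eq:ke_cons} and \eqref{eq:int_ke_cons}, $\hat K_j=k_0\hat C_j$ and $\hat T_j=k_0\hat F_j$ with $k_0=\frac12(u_0)_i(u_0)_i$, while $\rho k\rvert_m=k_0\,\rho\rvert_m$, so all $k_0$-proportional contributions cancel against $k_0\,\partial_t\rho\rvert_m$ via continuity and the total-energy equation reduces to the discrete internal-energy balance $\partial_t(\rho e)\rvert_m + \delta(\hat I_j+\hat W_j)/\delta x_j\rvert_m = \delta\hat H_j/\delta x_j\rvert_m$. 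Now $\hat W_j=(u_0)_jp_0$ is constant and drops out; $\hat I_j=(\xbar{\rho e})^{(m\pm\frac12)}(u_0)_j$ because the scheme uses the $QS$ form (consistency condition~3 --- precisely the point where $CS$ and $CS$-$H$ fail); and, since the stiffened-gas EOS with $p\equiv p_0$ makes $\rho_lh_l=H_l^0$ and $\rho_le_l=H_l^0-p_0$ spatially constant --- so that, notably, the individual-phase densities play no role --- one has $(\xbar{\rho e})^{(m\pm\frac12)}=\sum_l(H_l^0-p_0)\xbar{\phi_l}^{(m\pm\frac12)}$ and $\hat H_j=\sum_l H_l^0\,\hat a_{lj}$ (consistency condition~4). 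Combining these with the directly-solved discrete volume-fraction equation~\eqref{eq:volumef}, which on the uniform state reads $\partial_t\phi_l\rvert_m + \delta(\xbar{\phi_l}^{(m\pm\frac12)}(u_0)_j)/\delta x_j\rvert_m = \delta\hat a_{lj}/\delta x_j\rvert_m$, the whole expression telescopes to $\partial_t z\rvert_m = p_0\,\delta(\hat a_{1j}+\hat a_{2j})/\delta x_j\rvert_m = 0$.

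The \textbf{main obstacle} is this energy step: one must recognise that the only terms capable of breaking the IEC are the interface-regularization fluxes, and that it is the pairing of the $QS$ form of $\hat I_j$ with the quartic-split $\hat H_j$ that makes their net nodal contribution telescope into $p_0\sum_l\hat a_{lj}=0$. With the $CS$ or $CS$-$H$ forms the face value of the internal-energy flux no longer equals $\sum_l(H_l^0-p_0)\xbar{\phi_l}^{(m\pm\frac12)}(u_0)_j$ (since $\xbar{\rho e}\neq\xbar{\rho}\,\xbar{e}$ once $\phi_l$ varies across the interface), so the cancellation fails and the pressure drifts there --- the mechanism behind the divergent runs in Section~\ref{sec:under_resolved}. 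A secondary, routine point is the passage from the semi-discrete result to the RK4 scheme: $r_i\rvert_m$ and $z\rvert_m$ are affine in the conserved and primitive variables (using $\phi_1+\phi_2=1$ and $\rho=m_1+m_2$), so any explicit Runge--Kutta update preserves them exactly; the viscous and thermal-conduction terms vanish identically on the uniform state, and surface tension and gravity are absent in the IEC setting, so none of them enter the argument. Equivalently and more briefly: the proposed fluxes~\eqref{eq:single_phase_keep_scheme}--\eqref{eq:two_phase_keep_scheme} satisfy the consistency conditions of Lemma~\ref{lem:IEC_consistency} by construction, which is exactly the condition guaranteeing the IEC.
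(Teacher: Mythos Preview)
Your proof is correct and follows essentially the same route as the paper's. Both arguments split into a velocity part (subtract $(u_0)_i$ times the discrete continuity equation from the discrete momentum equation, using the consistency relations \eqref{eq:mom_cons}--\eqref{eq:int_mom_cons}) and a pressure part (subtract the discrete kinetic-energy balance from the discrete total-energy equation via \eqref{eq:ke_cons}--\eqref{eq:int_ke_cons}, exploit that $\rho_l e_l$ and $\rho_l h_l$ reduce to the constants $H_l^0-p_0$ and $H_l^0$ under the stiffened-gas EOS, and cancel against the volume-fraction equation). Your packaging via the residuals $r_i$ and $z$ is a tidy reformulation of the paper's two-step subtraction, and you make two ingredients more explicit than the paper does: the discrete antisymmetry $\hat a_{1j}+\hat a_{2j}=0$ (which the paper uses silently when passing from the $\hat H_j$ contribution to Eq.~\eqref{eq:iie_iec}) and the observation that the individual-phase densities $\rho_l$ drop out entirely. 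Your remark on propagating the semi-discrete result through RK4 is also slightly more careful than the paper, which simply writes a forward-Euler update.
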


\begin{proof}
\subsection*{Part (a). Uniform velocity}
%\todo{fix the representation of rvert m in this section}
Consider a finite-volume discretization of the total mass balance equation in Eq. \eqref{eq:mod_continuity} %on a uniform grid, assuming $u^k_i=u_0$
\begin{equation}
\rho\rvert_m^{k+1} - \rho\rvert_m^{k} = -\Delta t \left( \frac{\hat{C}_j\rvert_{(m+\frac{1}{2})} - \hat{C}_j\rvert_{(m-\frac{1}{2})}}{\Delta x_j} \right)^k + \Delta t  \left( \frac{\hat{F}_j\rvert_{(m+\frac{1}{2})} - \hat{F}_j\rvert_{(m-\frac{1}{2})}}{\Delta x_j} \right)^k,
\label{eq:mass_iec}
\end{equation}
where $k$ is the time-step index, and $m$ is the grid index. Now, consider a finite-volume discretization of the momentum equation in Eq. \eqref{eq:momf} and ignoring viscous, surface tension, and gravity terms, and assuming $u_{i}\rvert_m^k=u_0$ and $p\rvert_m^k=p_0$, $\forall m$
\begin{equation}
(\rho u_i)\rvert_m^{k+1} - \rho\rvert_m^{k}u_0 = -\Delta t \left( \frac{\hat{M}_{ij}\rvert_{(m+\frac{1}{2})} - \hat{M}_{ij}\rvert_{(m-\frac{1}{2})}}{\Delta x_j} \right)^k + \Delta t \left( \frac{\hat{R}_{ij}\rvert_{(m+\frac{1}{2})} - \hat{R}_{ij}\rvert_{(m-\frac{1}{2})}}{\Delta x_j} \right)^k.
\end{equation}
Utilizing the consistency conditions in Eqs. \eqref{eq:mom_cons}, \eqref{eq:int_mom_cons}, and assuming $u_{i}\rvert_m^k=u_0$, $\forall m$, the discrete momentum equation can be written as
\begin{equation}
(\rho u_i)\rvert_m^{k+1} - \rho\rvert_m^{k} u_0 = -\Delta t \left( \frac{\hat{C}_j\rvert_{(m+\frac{1}{2})} - \hat{C}_j\rvert_{(m-\frac{1}{2})}}{\Delta x_j} \right)^k u_0 + \Delta t \left( \frac{\hat{F}_j\rvert_{(m+\frac{1}{2})} - \hat{F}_j\rvert_{(m-\frac{1}{2})}}{\Delta x_j} \right)^k u_0.
\end{equation}
Subtracting this from the discrete mass balance equation in Eq. \eqref{eq:mass_iec}, gives $u_i\rvert_{m}^{k+1}=u_0$, $\forall m$.

\subsection*{Part (b). Uniform pressure}
Consider a finite-volume discretization of the total energy equation [Eq. \eqref{eq:energyf}] without the viscous, surface tension, and gravity terms. Subtracting the discrete version of the kinetic energy equation in Eq. \eqref{eq:kineticf} from the discrete total energy equation (note that, this step requires that the consistency conditions in Eqs. \eqref{eq:ke_cons}, \eqref{eq:int_ke_cons} are satisfied), we arrive at
\begin{equation}
\left(\rho e\right)\rvert_m^{k+1} - \left(\rho e \right)\rvert_m^k = -\Delta t \left( \frac{\hat{I}_j\rvert_{(m+\frac{1}{2})} - \hat{I}_j\rvert_{(m-\frac{1}{2})}}{\Delta x_j} \right)^k + \Delta t  \left( \frac{\hat{H}_j\rvert_{(m+\frac{1}{2})} - \hat{H}_j\rvert_{(m-\frac{1}{2})}}{\Delta x_j} \right)^k
\label{eq:ie_iec}
\end{equation}
Now, substituting for $\hat{I}_j\rvert_{(m\pm\frac{1}{2})}$ and $\hat{H}_j\rvert_{(m\pm\frac{1}{2})}$ from Eqs. \eqref{eq:single_phase_keep_scheme}-\eqref{eq:two_phase_keep_scheme}; expressing $h_l$ in terms of $p$ and $\rho_l$ using Eq. \eqref{eq:enthalpy} as $h_l=(p - \beta_l)/\alpha_l + p$; expressing the mixture internal energy in terms of the individual species energies as $\rho e = \sum_{l=1}^2 \rho_l e_l \phi_l$; and assuming $u_i\rvert_{m}^k=u_0$ and $p\rvert_m^k=p_0$, we obtain %\todo{fix the indices when there is $u_0$}
\begin{equation}
\begin{aligned}
\sum_{l=1}^2 \left(\phi_l\rho_le_l\right)\rvert_m^{k+1} - \sum_{l=1}^2 \left(\phi_l\rho_le_l\right)\rvert_m^k = -\Delta t \sum_{l=1}^2\left\{ \frac{\left(\xbar{\rho_le_l\phi_l}^{(m+\frac{1}{2})}\right)-\left(\xbar{\rho_le_l\phi_l}^{(m-\frac{1}{2})}\right)}{\Delta x_j} \right\}^k u_0 \\
+\Delta t \left\{\sum_{l=1}^2 \left(\frac{p_0 - \beta_l}{\alpha_l}\right)\left(\frac{\hat{a}_{lj}\rvert_{(m+\frac{1}{2})} - \hat{a}_{lj}\rvert_{(m-\frac{1}{2})}}{\Delta x_j}\right) \right\}^k,
\end{aligned}
\label{eq:iie_iec}
\end{equation}
and expressing $e_l$ in terms of $p$ using the EOS results in the discretized equation for pressure
\begin{equation}
\begin{aligned}
%\begin{gather*}
\left(\sum_{l=1}^2\frac{\phi_{l}\rvert_m}{\alpha_l}\right)^{k+1}p\rvert_m^{k+1} - \left(\sum_{l=1}^2\frac{\phi_{l}\rvert_m\beta_l}{\alpha_l}\right)^{k+1} - \left(\sum_{l=1}^2\frac{\phi_{l}\rvert_m}{\alpha_l} \right)^kp_0 + \left(\sum_{l=1}^2\frac{\phi_{l}\rvert_m\beta_l}{\alpha_l} \right)^k\\
= -\Delta t \left[ \sum_{l=1}^2 \left\{\frac{\left(\frac{\xbar{\phi}_l^{(m+\frac{1}{2})}}{\alpha_l}\right)p_0 - \left(\frac{\xbar{\phi}_l^{(m+\frac{1}{2})}\beta_l}{\alpha_l}\right)-\left(\ \frac{\xbar{\phi}_l^{(m-\frac{1}{2})}}{\alpha_l} \right)p_0 + \left(\frac{ \xbar{\phi}_l^{(m-\frac{1}{2})}\beta_l}{\alpha_l}\right)}{\Delta x_j} \right\}^k \right] u_0\\
+\Delta t \left\{\sum_{l=1}^2 \left(\frac{p_0 - \beta_l}{\alpha_l}\right)\left(\frac{\hat{a}_{lj}\rvert_{(m+\frac{1}{2})} - \hat{a}_{lj}\rvert_{(m-\frac{1}{2})}}{\Delta x_j}\right) \right\}^k.
\label{eq:pressure_iec}
%\end{gather*}
\end{aligned}
\end{equation}
Let $L(\phi_l)$ be a finite-volume discretization of the volume fraction advection equation for phase $l$ [Eq. \eqref{eq:volumef}], given as 
\begin{equation}
L(\phi_l) \equiv \phi_{l}\rvert_m^{k+1} - \phi_{l}\rvert_m^{k} = -\Delta t \left( \frac{\xbar{\phi}_l^{(m+\frac{1}{2})} - \xbar{\phi}_l^{(m-\frac{1}{2})}}{\Delta x_j} \right)^k u_0 + \Delta t  \left( \frac{\hat{a}_{lj}\rvert_{(m+\frac{1}{2})} - \hat{a}_{lj}\rvert_{(m-\frac{1}{2})}}{\Delta x_j} \right)^k.
\label{eq:volume_iec}
\end{equation}
Subtracting Eq. (\ref{eq:pressure_iec}) from the equation $\Big(\sum_{l=1}^2 {L(\phi_l)}/{\alpha_l}\Big)p_0 - \Big(\sum_{l=1}^2 {L(\phi_l)\beta_l}/{\alpha_l}\Big)$, results in $p_{i}\rvert_m^{k+1}=p_0$, $\forall m$, which concludes the proof.
\end{proof}

%Note that, in going from Eq. \eqref{eq:ie_iec} to Eq. \eqref{eq:pressure_iec}, 
%a quadratic-split form for $\hat{I}_j\rvert_{(m\pm\frac{1}{2})}$ proposed in this work [Eq. \eqref{eq:single_phase_keep_scheme}] was used, and 
%the mixture internal energy was re-expressed in terms of individual phase energies as
%\begin{equation}
%    \hat{I}_j\rvert_{(m\pm\frac{1}{2})} = (\xbar{\rho e})^{(m\pm\frac{1}{2})}\xbar{u}_j^{(m\pm\frac{1}{2})} = \left\{ \sum^2_{l=1} \left(\xbar{\rho_l e_l \phi_l}\right)^{(m\pm\frac{1}{2})} \right\} \xbar{u}_j^{(m\pm\frac{1}{2})},
%\end{equation}
%and $\rho_l e_l$ was further expressed in terms of pressure.
%However, if a cubic-split form for $\hat{I}_j\rvert_{(m\pm\frac{1}{2})}$ proposed by \citet{chandrashekar2013kinetic} in Eq. \eqref{eq:ie_cons_chand} or by \citet{kuya2018kinetic} in Eq. \eqref{eq:ie_cons_kuya} was used, the mixture internal energy cannot be expressed as a volume-weighted linear sum of individual energies, and therefore, the IEC cannot be proved. 
{Here, a stiffened gas EOS has been used in Eq. \eqref{eq:pressure_iec}; however, a more general cubic EOS can also be used to prove IEC.}
Note that when Eq. (\ref{eq:pressure_iec}) was subtracted from $\Big(\sum_{l=1}^2 {L(\phi_l)}/{\alpha_l}\Big)p_0 - \Big(\sum_{l=1}^2 {L(\phi_l)\beta_l}/{\alpha_l}\Big)$, it was assumed that the split-numerical flux forms used in $\hat{I}_j$ and $\hat{\Phi}_j$ were identical. Without this, it is not possible to show that $p_{i}\rvert_m^{k+1}=p_0$, and therefore, the IEC cannot be proved. Section \ref{sec:ke_consistency} describes that the split-numerical flux forms used in $\hat{\Phi}_j$ and $\hat{C}_j$ should be identical. Therefore, from the transitive property, the split-numerical flux forms used in $\hat{I}_j$ and $\hat{C}_j$ should be identical for the IEC to be satisfied. Similarly, the split-numerical flux forms used in $\hat{H}_j$ and $\hat{F}_j$ should be identical for the IEC to be satisfied.

\section*{Appendix C: Effect of flux splitting}

In this section, the effect of skew-symmetric splitting of the numerical fluxes on the non-linear stability of the method is evaluated. Particularly, the effect of splitting of mass flux $\hat{C}_j$ is assessed for two-phase flows. In Section \ref{sec:keep}, a quadratic split form
\begin{equation}
\hat{C}_j\rvert_{(m\pm\frac{1}{2})} = \xbar{\rho}^{(m\pm\frac{1}{2})} \xbar{u}_{j}^{(m\pm\frac{1}{2})},
\label{eq:quadratic_mass}
\end{equation}
was used. An alternative would be to use the divergence form for the mass flux $\hat{C}_j$ as
\begin{equation}
\hat{C}_j\rvert_{(m\pm\frac{1}{2})} = \xbar{\rho u_j}^{(m\pm\frac{1}{2})}.
\label{eq:divergence_mass}
\end{equation}

\begin{table}[]
\centering
\begin{tabular}{@{}|l|l|l|@{}}
\toprule
                                      & divergence $\hat{C}_j$          & quadratic split $\hat{C}_j$                                        \\ \midrule
$\hat{C}_j\rvert_{(m\pm\frac{1}{2})}$ & $\xbar{\rho u_j}^{(m\pm\frac{1}{2})}$ & $\xbar{\rho}^{(m\pm\frac{1}{2})} \xbar{u}_{j}^{(m\pm\frac{1}{2})}$ \\
$\hat{M}_{ij}\rvert_{(m\pm\frac{1}{2})}$ &
  $\xbar{\rho u_j}^{(m\pm\frac{1}{2})} \xbar{u}_i^{(m\pm\frac{1}{2})}$ &
  $\xbar{\rho}^{(m\pm\frac{1}{2})} \xbar{u}_j^{(m\pm\frac{1}{2})} \xbar{u}_i^{(m\pm\frac{1}{2})}$ \\
$\hat{K}_j\rvert_{(m\pm\frac{1}{2})}$ &
  $\xbar{\rho u_j}^{(m\pm\frac{1}{2})}     \frac{u_i\rvert_{(m\pm1)} u_i\rvert_{m}}{2}$ &
  $\xbar{\rho}^{(m\pm\frac{1}{2})}     \xbar{u}_j^{(m\pm\frac{1}{2})} \frac{u_i\rvert_{(m\pm1)} u_i\rvert_{m}}{2}$ \\ \bottomrule
\end{tabular}
\caption{Consistent momentum and kinetic energy fluxes for the divergence and quadratic-split form of the mass flux.}
\label{tab:divergence_quadratic}
\end{table}

Using the consistency conditions in Section \ref{sec:consistency_sum}, the consistent momentum and kinetic energy fluxes for the divergence and quadratic-split forms of $\hat{C}_j$ can be written as shown in Table \ref{tab:divergence_quadratic}. 
To compare the two formulations,
%\textemdash the quadratic-split form in Eqs. \eqref{eq:quadratic_mass},\eqref{eq:quadratic_others} and the divergence form in Eqs. \eqref{eq:divergence_mass}, \eqref{eq:divergence_others}\textemdash 
under-resolved simulations of two-phase Taylor-Green vortex at infinite $Re$ presented in Section  \ref{sec:nonunity_tg} are repeated using these formulations, while keeping all other fluxes unchanged. The time evolution of the total kinetic energy and the individual entropy of fluids 1 and 2 are plotted in Figure \ref{fig:splitting} for $M_o=0.05$. The initial densities of the two fluids are $\rho_{1o}=0.1$ and $\rho_{2o}=1$. The simulation that uses the divergence form of the flux for $\hat{C}_j$ is not stable and diverges, albeit $\hat{M}_{ij}$ and $\hat{K}_j$ satisfy the consistency conditions; whereas the simulation with the proposed quadratic-split form of the flux for $\hat{C}_j$ is stable due to the reduced aliasing errors associated with these schemes \citep{blaisdell1996effect, chow2003further, kennedy2008reduced}.  

\begin{figure}
    \centering
    \includegraphics[width=\textwidth]{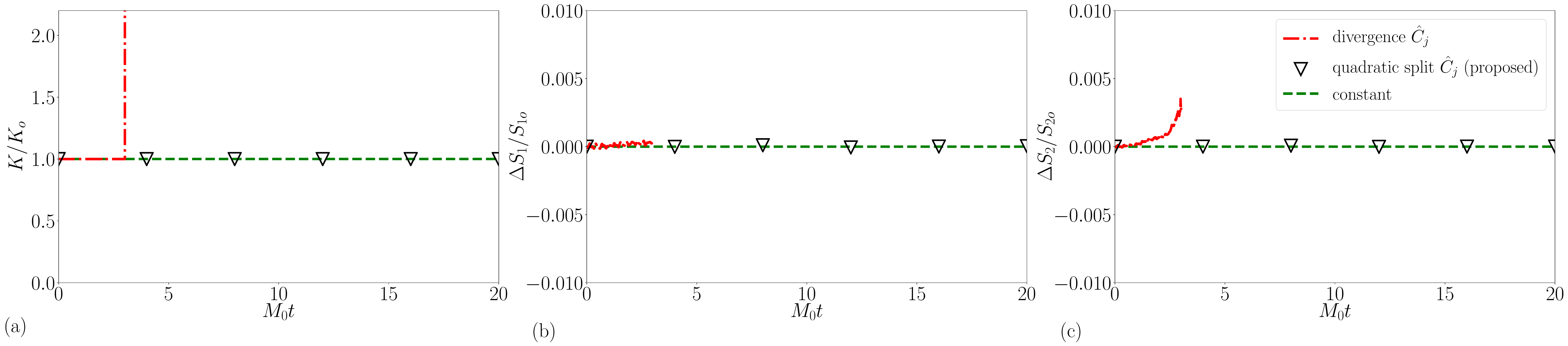}
    \caption{Inviscid two-phase Taylor-Green case with $\rho_2/\rho_1=10$ and $M_o=0.05$. (a) The evolution of the total kinetic energy in the domain $K$ with time normalized by the initial value of the kinetic energy in the domain $K_o$. (b) The change in entropy of fluid $1$ with time, normalized by the initial value. (c) The change in entropy of fluid $2$ with time, normalized by the initial value.}
    \label{fig:splitting}
\end{figure}
%\todo{add proposed in the legend}

\section*{Appendix D: Inconsistent flux formulations}

%In this section, the effect of use of inconsistent flux formulations for $\hat{K}_j$ and $\hat{T}_j$ on the non-linear stability of the method is examined. 
In Section \ref{sec:under_resolved}, the effect of inconsistent flux formulations for $\hat{I}_j$, that violate the IEC, on the non-linear stability was already examined in detail. Here, a similar study is performed with three different inconsistent flux formulations for $\hat{M}_{ij}$, $\hat{K}_j$, and $\hat{T}_j$, and are compared against the proposed KEEP formulation in Section \ref{sec:keep}. 

An inconsistent flux formulation for $\hat{M}_{ij}$ can be written as 
\begin{equation}
   \hat{M}_{ij}\rvert_{(m\pm\frac{1}{2})} =
  \xbar{\rho u_j}^{(m\pm\frac{1}{2})} \xbar{u}_i^{(m\pm\frac{1}{2})}.
    \label{eq:inconsistent_M}
\end{equation}
This form of $\hat{M}_{ij}$ in Eq. \eqref{eq:inconsistent_T} will violate the consistency condition 1 in Section \ref{sec:consistency_sum}, when all other fluxes are kept unchanged as in Eqs. \eqref{eq:single_phase_keep_scheme}, \eqref{eq:two_phase_keep_scheme}. 
An inconsistent formulation for $\hat{K}_j$ can be written as 
\begin{equation}
    \hat{K}_j\rvert_{(m\pm\frac{1}{2})} = \xbar{\rho}^{(m\pm\frac{1}{2})} \xbar{u}_j^{(m\pm\frac{1}{2})} \frac{u_i\rvert_{(m\pm\frac{1}{2})} u_i\rvert_{(m\pm\frac{1}{2})}}{2}.
    \label{eq:inconsistent_K}
\end{equation}
%instead of the flux form proposed in Eq. \eqref{eq:single_phase_keep_scheme}. 
This form of $\hat{K}_j$ in Eq. \eqref{eq:inconsistent_T} will also violate the consistency condition 1 in Section \ref{sec:consistency_sum}, when all other fluxes are kept unchanged as in Eqs. \eqref{eq:single_phase_keep_scheme}, \eqref{eq:two_phase_keep_scheme}. 
An inconsistent formulation for $\hat{T}_j$ can be written as 
\begin{equation}
        \hat{T}_j\rvert_{(m\pm\frac{1}{2})} = \left\{ \sum_{l=1}^2 \left(\xbar{\rho_l}^{(m\pm\frac{1}{2})} \reallywidehat{a}_{lj} \rvert_{(m\pm\frac{1}{2})} \right) \right\} \frac{u_i\rvert_{(m\pm\frac{1}{2})} u_i\rvert_{(m\pm\frac{1}{2})}}{2}.
    \label{eq:inconsistent_T}
\end{equation}
%instead of the flux form proposed in Eq. \eqref{eq:two_phase_keep_scheme}. 
This form of $\hat{T}_j$ in Eq. \eqref{eq:inconsistent_T} will also violate the consistency condition 1 in Section \ref{sec:consistency_sum}, when all other fluxes are kept unchanged as in Eqs. \eqref{eq:single_phase_keep_scheme}, \eqref{eq:two_phase_keep_scheme}.

To compare the three inconsistent formulations for $\hat{M}_{ij}$, $\hat{K}_j$, and $\hat{T}_j$ in Eqs. \eqref{eq:inconsistent_M}-\eqref{eq:inconsistent_T} with the proposed KEEP formulation in this work,
%\textemdash the quadratic-split form in Eqs. \eqref{eq:quadratic_mass},\eqref{eq:quadratic_others} and the divergence form in Eqs. \eqref{eq:divergence_mass}, \eqref{eq:divergence_others}\textemdash 
under-resolved simulations of two-phase Taylor-Green vortex at infinite $Re$ presented in Section  \ref{sec:nonunity_tg} are repeated using these formulations. The time evolution of the total kinetic energy and the individual entropy of fluids 1 and 2 are plotted in Figure \ref{fig:inconsistent_flux} for $M_o=0.05$. The initial densities of the two fluids are $\rho_{1o}=0.1$ and $\rho_{2o}=1$. The simulations that use the inconsistent formulations in Eqs. \eqref{eq:inconsistent_M}-\eqref{eq:inconsistent_T} are not stable because they don't satisfy the consistency conditions, and hence they diverge; whereas the simulation with the proposed KEEP formulation in Section \ref{sec:keep}, that satisfies the consistency conditions, is stable.  

\begin{figure}
    \centering
    \includegraphics[width=\textwidth]{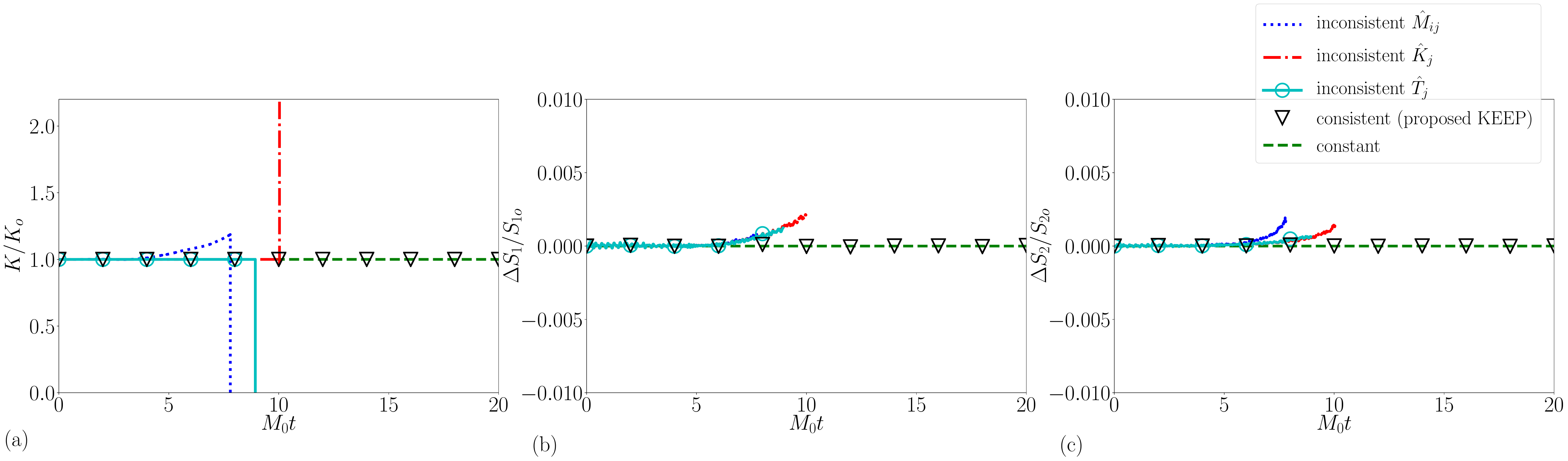}
    \caption{Inviscid two-phase Taylor-Green case with $\rho_2/\rho_1=10$ and $M_o=0.05$. (a) The evolution of the total kinetic energy in the domain $K$ with time normalized by the initial value of the kinetic energy in the domain $K_o$. (b) The change in entropy of fluid $1$ with time, normalized by the initial value. (c) The change in entropy of fluid $2$ with time, normalized by the initial value.}
    \label{fig:inconsistent_flux}
\end{figure}

\bibliographystyle{model1-num-names}
%\bibliography{sample.bib}
%\bibliography{diffuse_interface_full}
\bibliography{keep.bib}

%% Authors are advised to submit their bibtex database files. They are
%% requested to list a bibtex style file in the manuscript if they do
%% not want to use model1-num-names.bst.

%% References without bibTeX database:

% \begin{thebibliography}{00}

%% \bibitem must have the following form:
%%   \bibitem{key}...
%%

% \bibitem{}

% \end{thebibliography}

\end{document}